\let\NAT@parse\undefined
\newcounter{comment}
\newcommand{\Rbb}{\mathbb{R}}
\newcommand{\scp}[2]{\langle #1, #2 \rangle}
\newcommand{\Nbb}{\mathbb{N}}
\newtheorem{theorem}{Theorem}
\newtheorem{definition}{Definition}
\newtheorem{proposition}{Proposition}
\newtheorem{lemma}{Lemma}
\newtheorem{remark}{Remark}
\newtheorem*{example}{Example}
\newcommand{\inv}[1]{\frac{1}{#1}}
\newcommand{\tinv}[1]{{\textstyle\frac{1}{#1}}}
\newcommand{\sign}{{\rm sign}\,}
\newcommand{\Tr}{\mathrm{T}}
\newcommand{\ud}{\mathrm{d}} 
\newcommand{\binw}{\upalpha}
\newcommand{\gbinw}{\uptau}
\renewcommand{\leq}{\leqslant}
\renewcommand{\geq}{\geqslant}
\newcommand{\E}{{\mathbb{E}}}
\DeclareMathOperator{\Var}{Var}
\DeclareMathOperator{\iid}{iid}
\DeclareMathOperator{\diag}{diag}
\DeclareMathOperator{\prox}{prox}
\DeclareMathOperator{\proj}{proj}
\DeclareMathOperator{\Id}{\mathds{1}}
\DeclareMathOperator*{\argmin}{argmin}
\DeclareMathOperator*{\Argmin}{Argmin}
\newcommand{\norm}[1]{\|#1\|}
\newcommand{\lpn}[3]{\ell_{#1,#2}^{#3}}
\newcommand{\fs}{\mathsf}
\newcommand{\bs}{\boldsymbol}
\newcommand{\cl}{\mathcal}
\newcommand{\ie}{\emph{i.e.}, }
\newcommand{\eg}{\emph{e.g.}, }
\newcommand{\fnorm}[1]{|\!|\!|#1|\!|\!|}
\newcommand{\pdf}{\varphi}
\newcommand{\eqdef}{:=}
\newcommand{\defeq}{=:}
\title{Stabilizing Nonuniformly Quantized Compressed Sensing with Scalar Companders}
\author{L. Jacques, D. K. Hammond, M. J. Fadili\\[2mm]
\thanks{LJ is with the ICTEAM institute, ELEN Department, Universit\'e catholique de Louvain
(UCL), Belgium. LJ is a Postdoctoral Researcher of the Belgian National Science Foundation
(F.R.S.-FNRS).}
\thanks{DKH is with the Neuroinformatics Center, University of Oregon, USA.}
\thanks{MJF is with the GREYC, CNRS-ENSICAEN-Universit\'e de Caen, France.}
\thanks{Parts of a preliminary version of this
  work have been presented in SPARS11 Workshop (June 27-30, 2011 -
  Edinburgh, Scotland, UK), in IEEE ICIP 2011 (Sept. 11-14, 2011 -
  Brussels, Belgium) and in iTWIST Workshop (May 9-11, 2012 -
  Marseille, France).}}
\begin{document}

\maketitle

\vspace{-1cm}
\begin{abstract}
This paper addresses the problem of stably recovering sparse or compressible signals from compressed sensing measurements that have undergone optimal non-uniform scalar quantization, \ie minimizing the common $\ell_2$-norm distortion. Generally, this Quantized Compressed Sensing (QCS) problem is solved by minimizing the $\ell_1$-norm constrained by the $\ell_2$-norm distortion. In such cases, re-measurement and quantization of the reconstructed signal do not necessarily match the initial observations, showing that the whole QCS model is not \emph{consistent}. Our approach considers instead that quantization distortion more closely resembles heteroscedastic uniform noise, with variance depending on the observed quantization bin. Generalizing our previous work on uniform quantization, we show that for non-uniform quantizers described by the ``compander'' formalism, quantization distortion may be better characterized as having bounded weighted $\ell_p$-norm ($p\geq 2$), for a particular weighting. We develop a new reconstruction approach, termed Generalized Basis Pursuit DeNoise
(GBPDN), which minimizes the $\ell_1$-norm of the signal to reconstruct constrained by this weighted $\ell_p$-norm fidelity. We prove that, for standard Gaussian sensing matrices and $K$ sparse or compressible signals in $\Rbb^N$ with at least $\Omega( (K \log N/K)^{p/2})$ measurements, \ie under strongly oversampled QCS scenario, GBPDN is $\ell_2-\ell_1$ instance optimal and stable recovers all such sparse or compressible signals. The reconstruction error decreases as $O(2^{-B}/\sqrt{p+1})$ given a budget of $B$ bits per measurement. This yields a reduction by a factor $\sqrt{p+1}$ of the reconstruction error compared to the one produced by $\ell_2$-norm constrained decoders. We also propose an  primal-dual proximal splitting scheme to solve the GBPDN program which is efficient for large-scale problems. Interestingly, extensive simulations testing the GBPDN effectiveness confirm the trend predicted by the theory, that the reconstruction error can indeed be reduced by increasing $p$, but this is achieved at a much less stringent oversampling regime than the one expected by the theoretical bounds. Besides the QCS scenario, we also show that GBPDN applies straightforwardly to the related case of CS measurements corrupted by heteroscedastic Generalized Gaussian noise with provable reconstruction error reduction.
\end{abstract}

\section{Introduction}
\label{sec:introduction}

\subsection{Problem statement}

Measurement quantization is a critical step in the design and in
the dissemination of new technologies implementing the Compressed Sensing
(CS) paradigm. Quantization is indeed mandatory for transmitting,
storing and even processing any data sensed by a CS device.  

In its most popular version, CS provides uniform theoretical guarantees for stably recovering
any sparse (or compressible) signal at a sensing rate
proportional to the signal intrinsic dimension (\ie its \emph{sparsity} level)
\cite{donoho2006cs,candes2008rip}. However, the distortion
introduced by any quantization step is often still crudely modeled as
a noise with bounded $\ell_2$-norm.

Such an approach results in reconstruction methods aiming at finding a sparse signal estimate for which the sensing is close, in a $\ell_2$-sense, to the available quantized signal observations. However, earlier works have pointed out that this method is not optimal. For instance, \cite{goyal1998_quantovercomp} analyses the error achieved when a signal is reconstructed from its quantized coefficients in some overcomplete expansion. Translated to our context, this amounts to the ideal CS scenario where some \emph{oracle} provides us the true signal support knowledge. In this context, a linear \emph{least square} (LS) reconstruction minimizing the $\ell_2$-distance in the coefficient domain is inconsistent and has a \emph{mean square error} (MSE) decaying, at best, as the inverse of the frame redundancy factor. Interestingly, any \emph{consistent} reconstruction method, \ie for which the quantized coefficients of the reconstructed signal match those of the original signal, shows a much better behavior since its MSE is in general lower-bounded by the inverse of the \emph{squared} frame redundancy; this lower bound being attained for specific overcomplete Fourier frames.

A few other works in the Compressed Sensing literature have also considered the quantization distortion differently. In \cite{Dai2009}, an adaptation of both Basis Pursuit DeNoise (BPDN) program and the Subspace Pursuit algorithm integrates an explicit constraint enforcing consistency. In \cite{Zymnis2009}, nonuniform quantization noise and Gaussian noise in the measurements before quantization are properly dealt with using an $\ell_1$-penalized maximum likelihood decoder.

Finally, in \cite{LasBouDav::2009::Demcracy-in-action,Jacques2011,Plan2011}, the
  extreme case of 1-bit CS is studied, \ie when only the signs of the measurements
are sent to the decoder. These works have shown that consistency with
the 1-bit quantized measurements is of paramount importance for
reconstructing the signal where straightforward methods relying on $\ell_2$
fidelity constraints reach poor estimate quality.

\subsection{Contributions}

The present work addresses the problem of recovering sparse or compressive signals in a \emph{given} non-uniform Quantized Compressed Sensing (QCS) scenario. In particular, we assume that the signal measurements have undergone an optimal non-uniform scalar quantization process, \ie optimized a priori according to a common minimal distortion standpoint with respect to a source with known probability density function (pdf). This \emph{post-quantization} reconstruction strategy, where only increasing the number of measurements can improve the signal reconstruction,
is inspired by other works targeting consistent reconstruction approaches in comparison with methods advocating solutions of minimal $\ell_2$-distortion \cite{goyal1998_quantovercomp,Dai2009,Jacques2010}.
Our work is therefore distinct from approaches where other
quantization schemes (\eg $\Sigma\Delta$-quantization \cite{gunturk2010sobolev})
are tuned to the global CS formalism or to specific CS decoding
schemes (\eg Message Passing Reconstruction
\cite{kamilov2011optimal}). These techniques often lead to signal reconstruction
MSE rapidly decaying with the measurement number $M$~--~for instance, a
$r$-order $\Sigma\Delta$-quantization of CS measurements combined with
a particular reconstruction procedure has a MSE decaying nearly as
$O\big(M^{-r+\inv{2}}\big)$ \cite{gunturk2010sobolev}~--~but their application
involves generally more involved quantization strategies at the CS encoding stage.

This paper also generalizes the results provided in \cite{Jacques2010}
to cover the case of non-uniform scalar quantization of CS
measurements. We show that the theory of ``Companders''
\cite{gray1998quantization} provides an elegant framework for
stabilizing the reconstruction of a sparse (or compressible) signal
from non-uniformly quantized CS measurements.  Under the \emph{High Resolution Assumption} (HRA), \ie
when the bit budget of the quantizer is high and the quantization bins
are narrow, the compander theory provides an equivalent description of the action of a quantizer through
sequential application of a \emph{comp}ressor, a uniform quantization,
then an ex\emph{pander} (see Section~\ref{sec:quant-fram} for
details). As will be clearer later, this equivalence allows us to define new distortion
constraints for the signal reconstruction which are more faithful
to the non-uniform quantization process given a certain  QCS measurement regime.

Algorithms for reconstructing from quantized measurements commonly
rely on mathematically describing the noise induced by quantization as
bounded in some particular norm. A data fidelity constraint reflecting this fact is then incorporated in the reconstruction method. Two natural examples of such
constraints are that the $\ell_2$-norm be bounded, or that the
quantization error be such that the unquantized values lie in
specified, known quantization bins. In this paper, guided by the
compander theory, we show that these
two constraints can be viewed as special (extreme) cases of a particular
{\emph{weighted}} $\ell_p$-norm, which forms the basis for our
reconstruction method. The weights are determined from a set of $p$-optimal quantizer levels,
 that are computed from the observed quantized values. We draw the
 reader attention to the fact these weights do not depend on the
 original signal which is of course unknown. They are used only for
 signal reconstruction purposes, and are optimized with respect to the
 weighted norm. In the QCS framework, and owing to the
particular weighting of the norm, each quantization bin contributes
equally to the related global distortion.

Thanks to a new estimator of the weighted $\ell_p$-norm of the
quantization distortion associated to these particular levels (see
Lemma~\ref{lemma:bound-lpw-gaussian-vector}), and with the proviso that the
sensing matrix obeys a generalized Restricted Isometry Property (RIP)
expressed in the same norm (see \eqref{eq:rip-p}), we show that
solving a General Basis Pursuit DeNoising program (GBPDN) -- an
$\ell_1$-minimization problem constrained by a weighted $\ell_p$-norm whose radius
is appropriately estimated -- stably recovers strictly sparse or
compressible signals (see Theorem~\ref{prop:l2-l1-instance-optimality-GBPDN}).

We also quantify precisely the reconstruction error of GBPDN as a
function of the quantizer bit rate (under the HRA) for any value of
$p$ in the weighted $\ell_p$ constraint. These results reveal a set of
conflicting considerations for setting the optimal $p$. On the
  one hand, given a budget of $B$ bits per measurement and for a high
  number of measurements $M$, the error decays as
  $O(2^{-B}/\sqrt{p+1})$ when $p$ increases (see
  Proposition~\ref{prop:towards-quant-cons}), \ie a favorable
  situation since then GBPDN tends also to a consistent
  reconstruction method. On the other hand, the
larger $p$, the greater the number of measurements required to ensure
that the generalized RIP is fulfilled. In particular, one needs
$\Omega((K \log N/K)^{p/2})$ measurements compared to a $\ell_2$-based CS bound of
$\Omega(K \log N/K)$ measurements (see
Proposition~\ref{prop:grip-gauss}). Put differently, given a certain number
of measurements, the range of theoretically admissible $p$ is upper bounded, an effect which is expected since the error due to quantization cannot be eliminated in the reconstruction.

In fact, the stability of GBPDN in the context of QCS is a consequence of a an even more general stability result that holds for a broader class additive heteroscedastic measurement noise having a bounded weighted $\ell_p$ norm. This for instance covers the case of heteroscedastic Generalized Gaussian noise where the constraint of GBPDN  can be interpreted as a (variance) stabilization of the measurement distortion, see Section~\ref{sec:case-heter-nois}).
  
\subsection{Relation to prior work}

Our work is novel in several respects. For instance, as
stated above, the quantization distortion in the literature is often
modeled as a mere Gaussian noise with bounded variance
\cite{Dai2009}. In \cite{Jacques2010}, only uniform quantization is
handled and theoretically investigated. In \cite{Zymnis2009},
nonuniform quantization noise and Gaussian noise are handled but
theoretical guarantees are lacking. To the best of our knowledge, this
is the first work thoroughly investigating the theoretical guarantees
of $\ell_1$ sparse recovery from non-uniformly quantized CS
measurements, by introducing a new class of convex $\ell_1$
decoders. The way we bring the compander theory in the picture to
compute the optimal weights from the quantized measurements is also an
additional originality of this work. 

\subsection{Paper organization}
\label{sec:paper-organization}

The paper is organized as follows. In Section~\ref{sec:non-uniform-scalar}, we recall the
theory of optimal scalar quantization seen through the compander
formalism. We then explain how this point of view can help us in
understanding the intrinsic constraints that quantized CS measurements
must satisfy, and we introduce a new distortion measure, the
$p$-Distortion Consistency, expressed in terms of
a weighted $\ell_p$-norm. Section~\ref{sec:weight-fidel-guar}
introduces the GBPDN CS class of decoders integrating 
weighted $\ell_p$-constraints, and describes sufficient conditions for guaranteeing
reconstruction stability. This section shows also the generality of
this procedure for stabilizing additive heteroscedastic GGD measurement noise
during the signal reconstruction. In Section~\ref{sec:dequ-with-gener}, we
explain how GBPDN can be used for reconstructing a
signal in QCS when its fidelity constraint is adjusted to the parameters
defined in Section~\ref{sec:p-optimal-levels}. We show
that this specific choice leads to a (variance) stabilization of the quantization distortion forcing each
quantization bin to contribute equally to the overall distortion error.
In Section~\ref{sec:numer-exper}, we describe a provably convergent
primal-dual proximal splitting algorithm to solve the GBPDN program, and demonstrate the power of the proposed approach with several numerical experiments on sparse signals.

\subsection{Notation} 

All finite space dimensions are denoted by capital letters (\eg
$K,M,N,D\in \Nbb$), vectors (resp. matrices) are written in small
(resp. capital) bold symbols. For any vector $\bs u$, the
$\ell_p$-norm for $1\leq p < \infty$ is $\|\bs u\|_p=\left(\sum_{i}
  |u_i|^p\right)^{1/p}$, as usual $\|\bs u\|_\infty=\max_i |u_i|$ and
we write $\|\bs u\|=\|\bs u\|_2$. We write $\|\bs u\|_0 =\#\{i:u_i\neq
0\}$, which counts the number of non-zero components.  We denote the
set of $K$-sparse vectors in the canonical basis by $\Sigma_K=\{\bs
u\in\Rbb^N:\|\bs u\|_0 \leq K\}$. When necessary, we write $\ell^D_p$
as the normed vector space $(\Rbb^D,{\|\!\cdot\!\|_p})$.

The identity matrix in $\Rbb^D$ is written $\Id_D$ (or simply $\Id$ if
the $D$ is clear from the context). $\bs U = \diag(\bs u)$ is the
diagonal matrix with diagonal entries from $\bs u$, \ie $\bs
U_{ij}=u_i\delta_{ij}$. Given the $N$-dimensional signal space
$\Rbb^N$, the index set is $[N]=\{1,\,\cdots,N\}$, and $\bs
\Phi_I\in\Rbb^{M\times \# I}$ is the restriction of the columns of
$\bs \Phi$ to those indexed in the subset $I\subset [N]$, whose
cardinality is $\#I$.  
Given $\bs x\in\Rbb^N$, $\bs x^{\bs \Psi}_{K}$ stands for the best
$K$-term $\ell_2$-approximation of $\bs x$ in the orthonormal basis
$\bs \Psi\in\Rbb^{N\times N}$, that is, $\bs x^{\bs \Psi}_K =
\bs\Psi\big(\argmin \{ \|\bs x - \bs \Psi \bs \zeta\|:
\bs\zeta\in\Rbb^N, \|\bs \zeta\|_0\leq K \}\big)$. When $\bs \Psi =
\Id$, we write $\bs x_K=\bs x^{\Id}_K$ with $\|\bs x_K\|_0\leq K$.  A
random matrix $\bs \Phi \sim \cl N^{M\times N}(0,1)$ is a $M\times N$
matrix with entries $\bs \Phi_{ij} \sim_{\iid} \cl N(0,1)$. The 1-D
Gaussian pdf of mean $\mu \in \Rbb$ and variance $\sigma^2 \in
\Rbb^*_+$ is denoted $\gamma_{\mu,\sigma}(t) \eqdef
(2\pi\sigma^2)^{-1/2}\,\exp(-\tfrac{(t-\mu)^2}{2\sigma^2})$.

For a function $f:\Rbb\to\Rbb$, we write $\fnorm{f}_q\eqdef(\int_{\Rbb}\,\ud t\, |f(t)|^q)^{1/q}$, with
$\fnorm{f}_\infty\eqdef\sup_{t \in \Rbb} |f(t)|$. 

In order to state many results which hold asymptotically as a
dimension $D\in\Rbb$ increases, we will use the common Landau
family of notations, \ie the symbols $O$, $\Omega$, $\Theta$, $o$,
and $\omega$ (their exact definition can be found in
\cite{knuth1976big}).
Additionally, for $f,g\in C^1(\Rbb_+)$, we write $f(D)\simeq_D g(D)$ when $f(D) = g(D)(1+
o(1))$. We also introduce two new asymmetric notations dealing with asymptotic
quantity ordering,~\ie
\begin{align*}
 f(D)&\lesssim_D g(D)\hspace{-2.5cm}&&\Leftrightarrow\ \exists\,\delta:\Rbb\to\Rbb_+:\ f(D) +
 \delta(D) \simeq_D g(D)\\
 f(D)&\gtrsim_D g(D)\hspace{-2.5cm}&&\Leftrightarrow\ -f(D) \lesssim_D
 -g(D) .
\end{align*}
If any of the asymptotic relations above hold with respect
to several large dimensions $D_1,D_2,\,\cdots$, we write
$\simeq_{D_1,D_2,\,\cdots}$ and correspondingly for $\lesssim$ and
$\gtrsim$.

\section{Non-Uniform Quantization in Compressed Sensing}
\label{sec:non-uniform-scalar}

Let us consider a signal $\bs x \in \Rbb^N$ to be measured. We assume
that it is either strictly sparse or compressible, in a prescribed orthonormal
basis $\bs \Psi=\big(\bs\Psi_1,\,\cdots,\bs\Psi_N)\in\Rbb^{N\times
  N}$. This means that the signal $\bs x = \bs \Psi
\bs \zeta = \sum_j \bs \Psi_j \zeta_j$ is such that the
$\ell^N_2$-approximation error $\|\bs \zeta - \bs \zeta_K\|=\|\bs x - \bs x^{\bs \Psi}_K\|$ quickly decreases
(or vanishes) as $K$ increases. For the sake
of simplicity, and without loss of generality, the sparsity basis is taken in the sequel as the standard basis, \ie
$\bs\Psi = \Id$, and $\bs \zeta$ is identified with $\bs
x$. All the results can be readily extended to other orthonormal bases $\bs \Psi
\neq \Id$.

In this paper, we are interested in compressively sensing $\bs
x\in\Rbb^N$ with a given measurement matrix $\bs \Phi\in\Rbb^{M\times
  N}$. Each CS measurement, \ie each entry of $\bs z = \bs \Phi\bs x$, 
  undergoes a general scalar \emph{quantization}. We will assume this quantization to be optimal
relative to a known distribution of each entry $z_i$.  For
simplicity, we only consider matrices $\bs \Phi$ that yield $z_i$ to
be i.i.d. $\cl N(0,\sigma_0^2)$ Gaussian, with pdf
$\varphi_0\eqdef\gamma_{0,\sigma_0}$. This is satisfied, for instance,
if $\bs \Phi \sim \cl N^{M\times N}(0,1)$, with $\sigma_0 = \|\bs
x\|_2$. When $\bs \Phi=[\bs \varphi_1^T,\,\cdots,\bs \varphi_M^T]^T$ is a (fixed) realization of $\cl N^{M\times N}(0,1)$, the entries $z_j = \scp{\bs\varphi_j}{\bs x}$ of the vector $\bs z = \bs \Phi \bs x$ are $M$ (fixed) realizations of the same Gaussian distribution $\cl N(0, \|\bs x\|^2)$. It is therefore legitimate to quantize these values optimally using the normality of the source.\footnote{Avoiding pathological situations where $\bs x$ is adversarially forged knowing $\bs \Phi$ for breaking this assumption.}.

Our quantization scenario uses a $B$-bit quantizer $\cl Q$ which has
been optimized with respect to the measurement pdf $\varphi_0$ for
$\cl B=2^B=\#\Omega$ levels $\Omega=\{\omega_k:1\leq k\leq \cl B\}$
and thresholds $\{t_k:1\leq k\leq \cl B+1\}$ with $-t_1=t_{\cl B+1} =
+\infty$. Unlike the framework developed in
\cite{Zymnis2009}, our sensing scenario considers that any noise
corrupting the measurements before quantization is negligible
compared to the quantization distortion.

Consequently, given a measurement matrix $\bs \Phi\in\Rbb^{M\times N}$, our
quantized sensing model is
\begin{equation}
  \label{eq:quant-sensing-model}
  \bs y\ =\ \cl Q[\bs \Phi \bs x]\ =\ \cl Q[\bs z]\ \in \Omega^M. 
\end{equation}

Following recent studies \cite{Dai2009,laska2009finite,Jacques2010} in
the CS literature, this work is interested in optimizing the signal
reconstruction stability from $\bs y$ under different sensing
conditions, for instance, when the \emph{oversampling ratio} $M/K$ is
allowed to be large. Before going further into this signal sensing
model, let us describe first the selected quantization framework. The latter
is based on a scalar quantization of each component of the signal
measurement vector.

\subsection{Quantization, Companders and Distortion}
\label{sec:quant-fram}

A scalar quantizer $\cl Q$ is defined from $\cl B=2^B$
\emph{levels} $\omega_k$ (coded by $B=\log_2 \cl B$ bits) and
$\cl B+1$ \emph{thresholds} $t_k\in \Rbb\,\cup\, \{\pm \infty\}
= \overline\Rbb$, with $\omega_{k}<\omega_{k+1}$ and $t_{k}\leq
\omega_{k} < t_{k+1}$ for all $1\leq k \leq \cl B$.
The $k^{\rm th}$ quantizer \emph{bin} (or \emph{region})
is $\cl R_k=[t_{k},t_{k+1})$, with bin width $\gbinw_k=t_{k+1}-t_k$.
The quantizer $\cl Q$ is a map: $\Rbb \to \Omega = \{\omega_k: 1\leq k \leq \cl B\}$, $t \mapsto \cl Q[t] = \omega_k \iff
t\in \cl R_k$.
An optimal scalar quantizer $\cl Q$ with respect to a random source
$\cl Z$ with pdf $\pdf_{\cl Z}$ is such that the distortion
$\E |\cl Z -\cl Q[\cl Z]|^2$ is minimized. Optimal levels and
thresholds can be calculated for a fixed number of quantization bins
by the Lloyd-Max Algorithm \cite{Lloyd1982,Max1960}, or by an
asymptotic (with respect to~$B$) \emph{companding} approach
\cite{gray1998quantization}.

Throughout this paper, we work under the HRA. This means that, given the source
pdf $\pdf_{\cl Z}$, the number of bits $B$ is sufficient to
validate the approximation
\begin{equation*}
  \pdf_{\cl Z}(t)
  \simeq_B \pdf_{\cl Z}(\omega_k),\quad \forall t\in \cl R_k. \eqno{({\textbf{HRA})}}.  
\end{equation*}

A common argument in quantization theory \cite{gray1998quantization}
states that under the HRA, every optimal regular quantizer 
can be described by a compander (a
portemanteau for ``\textbf{comp}ressor'' and
``exp\textbf{ander}''). More precisely, we have
$$
\cl Q = \cl G^{-1}\circ\cl Q_{\binw}\circ \cl G,
$$  
with $\cl G:\Rbb\to[0,1]$ a bijective function called the
\emph{compressor}, $\cl Q_{\binw}$ a uniform quantizer of the
interval $[0,1]$ of bin width $\binw=2^{-B}$, and the inverse mapping
$\cl G^{-1}:[0,1]\to\Rbb$ called the \emph{expander}.

For optimal quantizers the compressor $\cl G$ maps the thresholds
$\{t_k: 1\leq k\leq \cl B\}$ and the levels $\{\omega_k\}$ into the values
\begin{equation}
  \label{eq:thresh-def-with-compressor}
  t'_k\eqdef\cl G(t_k)=(k-1)\binw,\qquad \omega'_k\eqdef\cl
  G(\omega_k)=(k-1/2)\binw ,
\end{equation}
and under the HRA the optimal $\cl G$ satisfies
\begin{equation}
\label{eq:optimal-compressor}
  \cl G':=\tfrac{\ud}{\ud \lambda} \cl G(\lambda)\ =\ \bigg[\int_{\Rbb} \pdf_{\cl Z}^{1/3}(t)\, \ud t\,\bigg]^{-1}\,\pdf_{\cl Z}^{1/3}(\lambda).
\end{equation}
Intuitively, the function $\cl G'$, also called \emph{quantizer point
    density function} (qpdf) \cite{gray1998quantization}, relates the
  quantizer bin widths before and after domain compression by $\cl
  G$. Indeed, under HRA, we can show that $\cl G'(\lambda) \simeq
  \upalpha/\tau_k$ if $\lambda \in \cl R_k$. We will see later that
  this function is the key to conveniently weight some new quantizer
  distortion measures.

We note that, for $\pdf_{\cl Z}(t) = \gamma_{0,\sigma}(t)$ with 
cumulative distribution function $\phi_{\cl
  Z}(\lambda;\sigma^2)
= \inv{2}{\rm erfc}(-\tfrac{\lambda}{2\sigma})$ so that $\phi_{\cl
  Z}^{-1}(\lambda';\sigma^2)=\sigma\sqrt{2}\,{\rm
  erf}^{-1}(2\lambda'-1)$, we have $\cl G(\lambda) = \phi_{\cl
  Z}(\lambda;3\sigma^2)$ and $\cl G^{-1}(\lambda')=\phi_{\cl
  Z}^{-1}(\lambda';3\sigma^2)$.

The application of $\cl G$ modifies the source $\cl Z$ such that $\cl
G(\cl Z)-\cl G(\cl Q[\cl Z])$ behaves more like a uniformly
distributed random variable over $[-\binw/2,\binw/2]$. 
The compander formalism predicts the distortion of optimal
scalar quantizer under HRA. For high bit rate $B$, the Panter and Dite formula
\cite{panter1951quantization} states that
\begin{equation}
\label{eq:panter-dite-eqt}
\E |\cl Z-\cl Q[\cl Z]|^2\ \mathop{\simeq}_B\ \tfrac{2^{-2B}}{12}\,\int_{\Rbb} \cl G'(t)^{-2}\,\pdf_{\cl Z}(t)\,\ud
t\ =\ \tfrac{2^{-2B}}{12}\,\bigg(\int_{\Rbb} \,\pdf_{\cl Z}^{1/3}(t)\,\ud t\bigg)^3\ =\
\tfrac{2^{-2B}}{12}\,\fnorm{\pdf_{\cl Z}}_{1/3}.
\end{equation}

Finally, we note that by the construction defined in
\eqref{eq:thresh-def-with-compressor}, the quantized values $\cl
Q[\lambda]$ satisfy
\begin{equation}
\label{eq:compressor-consistent-rel}
|\cl G(\lambda)-\cl G(\cl Q[\lambda])|\leq \binw/2,\quad\forall \lambda\in\Rbb.
\end{equation}
We describe in the next sections how
\eqref{eq:compressor-consistent-rel} and \eqref{eq:panter-dite-eqt}
may be viewed as two extreme cases of a general class of constraints
satisfied by a quantized source $\cl Z$.

\subsection{Distortion and Quantization Consistency}
\label{sec:dist-quant-quant-cons}

Let us consider the sensing model \eqref{eq:quant-sensing-model}, for
which the scalar quantizer $\cl Q$ and associated compressor $\cl G$
are optimal relative to the measurements $\bs z = \bs \Phi \bs x$ whose entries $z_i$ are $\iid$ realizations of $\cl
N(0,\sigma_0^2)$.
In the compressor domain we may write
$$
\cl G(\bs y) \ =\ \cl G(\bs z) + (\cl G(\cl Q[\bs z]) - \cl G(\bs z)) =
\cl G(\bs z) + \bs \varepsilon,
$$
where $\bs \varepsilon$ represents the quantization
distortion. \eqref{eq:compressor-consistent-rel} then shows
that 
$$
\|\bs \varepsilon\|_{\infty} = \|\cl G(\cl Q[\bs z]) - \cl
G(\bs z)\|_{\infty} \leq \binw/2.
$$

Naively, one may expect any reasonable estimate $\bs x^*$ of $\bs x$
(obtained by some reconstruction method) to reproduce the same
quantized measurements as originally observed. Inspired by the
  terminology introduced in \cite{thao1994msereducconsist,goyal1998_quantovercomp},
we say that $\bs x^*$
satisfies the \emph{quantization consistency} (QC) if $\cl Q[\bs \Phi
\bs x^*] = \bs y$. From the previous reasoning this is equivalent to 
\begin{equation*}
\|\cl G(\bs \Phi \bs x^*) - \cl G(\bs y)\|_{\infty}\ \leq\ \epsilon_{\rm
QC} \eqdef \binw/2. \eqno{({\bf QC})}  
\end{equation*}

At first glance, it is tempting to try to impose directly QC in the data fidelity constraint.
However, as will be revealed by our analysis, directly imposing QC does
\emph{not} lead to an effective QCS reconstruction algorithm. This
counterintuitive effect, already observed in the case of signal
recovery from uniformly quantized CS
\cite{Jacques2010}, is due to the specific requirements that the
sensing matrix should respect to make such a consistent reconstruction method
stable.  

In contrast the Basis Pursuit DeNoise (BPDN) program \cite{Chen98atomic} enforces a
constraint on the $\ell_2$ norm of the reconstruction quantization
error, which we will call distortion consistency.
For BPDN, the estimate $\bs x^*$ is provided by
$$
\bs x^*\ \in\ \Argmin_{\bs u \in \Rbb^N}\,\|\bs u\|_1\ {\rm s.t.}\ \|\bs y -\bs
\Phi\bs u\| \leq\epsilon_{\rm DC},
$$
where the bound $\epsilon^2_{\rm DC} :=
M\,\tfrac{\sqrt{3}\,\pi}{2}\,\sigma_0^2\, 2^{-2B}$ is dictated by
 the Panter-Dite formula. According to the Strong Law of Large
Numbers (SLLN) obeyed by the HRA, and since $z_i$ are $\iid$ realizations of $Z \sim
\cl N(0,\sigma_0^2)$, the following holds almost surely
\begin{equation}
\label{eq:panter-dite}
\tinv{M} \|\bs z - \cl Q[\bs z]\|^2\ \mathop{\simeq}_{M}\ \E|\cl Z -
\cl Q[\cl Z]|^2\ \mathop{\simeq}_{B}\ \tfrac{2^{-2B}}{12} \fnorm{\pdf_0}_{1/3}
= \  \tfrac{\sqrt{3}\,\pi}{2}\,\sigma_0^2\, 2^{-2B} .   
\end{equation}
Accordingly, we say that any estimate $\bs x^*$ satisfies \emph{distortion consistency} (DC) if
\begin{equation*}
\|\bs \Phi \bs x^* - \bs y\| \leq \epsilon_{\rm DC}.
\eqno{({\bf DC})}
\end{equation*}

However, as stated for the uniform quantization case in
\cite{Jacques2010}, DC and QC do not imply each other.  In
particular, the output $\bs x^*$ of BPDN needs not satisfy
quantization consistency. A major motivation for the present work is
the desire to develop provably stable QCS recovery methods based on measures of
quantization distortion that are as close as possible to QC.

\subsection{$p$-Distortion Consistency}
\label{sec:p-optimal-levels}

This section shows that the QC and DC constraints may be seen as limit cases
of a weighted $\ell_p$-norm description of the quantization
distortion.
The expression of the appropriate weights in the weighted $\ell_p$ norm
will depend both on the $p$-optimal quantizer levels, described below, and of the quantizer
point density function $\cl G'$ introduced in Section~\ref{sec:quant-fram}.

For the Gaussian pdf $\pdf_0=\gamma_{0,\sigma_0}$, given a set of
thresholds $\{t_k: 1\leq k\leq \cl B\}$, we define the $p$-\emph{optimal quantizer
  levels} $\omega_{k,p} \in \overline\Rbb$ as
\begin{equation} \label{eq:wkp_integral}
\omega_{k,p} \eqdef \argmin_{\lambda\in\cl R_k} \int_{\cl R_k} |t - \lambda|^p\ \pdf_0(t)\,\ud t,
\end{equation}
for $2\leq p<\infty$, and $\omega_{k,\infty} \eqdef
\tinv{2}(t_k+t_{k+1})$. These generalized levels were for
  instance already defined by Max in
  his minimal distortion study \cite{Max1960}, and their definition
  \eqref{eq:wkp_integral} is also
  related to the concept of minimal $p^{\rm th}$-power distortion
  \cite{gray1998quantization}.  For $p=2$, we find the definition of the
initial quantizer levels, \ie $\omega_{k,2}=\omega_k$. In this paper,
we always assume that $p$ is a positive integer but all our analysis can be extended
to the positive real case. As proved in Appendix~\ref{sec:p-optimal-level-definiteness}, the $p$-optimal levels
are well-defined.

\begin{lemma}[\bf $p$-optimal Level Well-Definiteness]
\label{lemma:p-optimal-levels}
  The $p$-optimal levels $\omega_{k,p}$ are uniquely defined.
  Moreover, for $\sigma_0>0$, $\lim_{p\to +\infty} \omega_{k,p} = \omega_{k,\infty}$,
with $|\omega_{k,p}|=\Omega(\sqrt p)$ for $k\in\{1,\cl B\}$.
\end{lemma}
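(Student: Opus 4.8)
The plan is to handle the three assertions separately, since each hinges on a different property of the objective $F_{k,p}(\lambda) \eqdef \int_{\cl R_k} |t-\lambda|^p\,\pdf_0(t)\,\ud t$.

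\emph{Uniqueness.} For $p\geq 2$ the map $\lambda\mapsto |t-\lambda|^p$ is convex, with second derivative $p(p-1)|t-\lambda|^{p-2}$ strictly positive for $t\neq\lambda$; since $\pdf_0>0$ everywhere and all Gaussian moments are finite (so one may differentiate under the integral), $F_{k,p}$ is finite and strictly convex on $\Rbb$, hence has at most one minimizer on the interval $\cl R_k$. Existence follows from coercivity, $F_{k,p}(\lambda)\to+\infty$ as $|\lambda|\to\infty$, which also covers the two unbounded bins $k\in\{1,\cl B\}$. To see the minimizer is interior I would inspect
\begin{equation*}
F_{k,p}'(\lambda) = -p\int_{\cl R_k} |t-\lambda|^{p-1}\,\sign(t-\lambda)\,\pdf_0(t)\,\ud t
\end{equation*}
at the finite thresholds of $\cl R_k$: at the left threshold the integrand is nonnegative so $F_{k,p}'\leq 0$, and at the right threshold it is nonpositive so $F_{k,p}'\geq 0$. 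Strict monotonicity of $F_{k,p}'$ then places its unique zero, and hence $\omega_{k,p}$, strictly inside $\cl R_k$.

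\emph{Limit to the midpoint.} For a bounded bin, minimizing $F_{k,p}$ is the same as minimizing $F_{k,p}^{1/p}(\lambda)=\|t-\lambda\|_{L^p(\cl R_k,\,\pdf_0\,\ud t)}$. Because $\pdf_0$ is bounded away from $0$ and $\infty$ on the compact $\cl R_k$, the usual $L^p\to L^\infty$ passage gives locally uniform convergence of $F_{k,p}^{1/p}$ to $\max_{t\in\cl R_k}|t-\lambda|=\max(\lambda-t_k,\,t_{k+1}-\lambda)$, whose unique minimizer is the midpoint $\omega_{k,\infty}=\tinv{2}(t_k+t_{k+1})$. I would then invoke the elementary fact that minimizers of convex functions converging locally uniformly to a convex limit with a unique minimizer converge to that minimizer (the $\omega_{k,p}$ stay trapped in the bounded bin), yielding $\omega_{k,p}\to\omega_{k,\infty}$.

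\emph{Growth of the extreme levels.} By symmetry of $\pdf_0$ and of the thresholds it suffices to treat $k=\cl B$, where $\cl R_{\cl B}=[t_{\cl B},+\infty)$. Writing $F_{\cl B,p}'(\omega_{\cl B,p})=0$ as the balance
\begin{equation*}
\int_{\omega_{\cl B,p}}^{\infty}(t-\omega_{\cl B,p})^{p-1}\pdf_0(t)\,\ud t\ =\ \int_{t_{\cl B}}^{\omega_{\cl B,p}}(\omega_{\cl B,p}-t)^{p-1}\pdf_0(t)\,\ud t,
\end{equation*}
the key observation is that the tail integrand $(t-\lambda)^{p-1}\pdf_0(t)$ has a saddle point at $t\simeq\sigma_0\sqrt p$, which fixes the natural scale. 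To make this quantitative I would substitute $t=\sigma_0\sqrt p\,s$ and test $\lambda=a\,\sigma_0\sqrt p$ for a small fixed $a>0$: bounding the right-hand side from above by factoring out $(a\sigma_0\sqrt p)^{p-1}$, and the left-hand side from below on a short interval near $s=1$, one checks that for $a$ below an absolute constant the left side strictly exceeds the right for all large $p$, i.e. $F_{\cl B,p}'(a\sigma_0\sqrt p)<0$. Strict convexity then forces the zero $\omega_{\cl B,p}$ beyond $a\sigma_0\sqrt p$, giving $\omega_{\cl B,p}\geq a\sigma_0\sqrt p$ and hence $|\omega_{\cl B,p}|=\Omega(\sqrt p)$.

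I expect the growth estimate to be the main obstacle: it is the only part requiring genuine asymptotic (Laplace/saddle-point) control of Gaussian-weighted integrals, and some care is needed so that the comparison of the two sides of the balance holds \emph{uniformly} as $p\to\infty$. The uniqueness and midpoint-limit claims, by contrast, reduce to strict convexity and a routine $L^p\to L^\infty$ argument.
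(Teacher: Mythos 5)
Your proposal is correct, and the uniqueness part coincides with the paper's argument (strict convexity plus coercivity, with your derivative-sign check at the thresholds playing the role of the paper's observation that $F_{k,p}$ is monotone outside $\cl R_k$). The two asymptotic claims, however, are handled by a genuinely different route. For the midpoint limit, the paper does not pass through an $L^p\to L^\infty$ argument: it first establishes the quantitative auxiliary Lemma~\ref{lemma:optimal-p-moment-bound}, whose bound \eqref{eq:behav-lambda-finite-bin} sandwiches $\omega_{k,p}$ between $\tfrac{1}{1+\fs S^{1/p}}(\fs S^{1/p}t_k+t_{k+1})$ and $\tfrac{1}{1+\fs S^{1/p}}(t_k+\fs S^{1/p}t_{k+1})$ with $\fs S=\pdf_0(t_k)/\pdf_0(t_{k+1})$, and concludes by the squeeze theorem since $\fs S^{1/p}\to 1$; your softer argument (pointwise $L^p(\mu_k)\to L^\infty(\mu_k)$ convergence, upgraded to locally uniform convergence by convexity, then convergence of minimizers trapped in the compact closure of the bin) is valid but yields no rate. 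For the growth of the extreme levels, the paper truncates the semi-infinite bin at $L(p)=c\sqrt p$, shows by a monotonicity argument on the first-order condition that the truncated optimal level lower-bounds $\omega_{\cl B,p}$, and then applies \eqref{eq:behav-lambda-finite-bin} on $[t_{\cl B},c\sqrt p]$, where $\fs S^{1/p}=\exp(-t_{\cl B}^2/2p\sigma_0^2)\exp(c^2/2\sigma_0^2)=\Theta(1)$; your alternative of testing the sign of $F_{\cl B,p}'$ at $\lambda=a\sigma_0\sqrt p$ also works, and the uniformity you worry about is not actually an issue because $a$ is fixed before letting $p\to\infty$: bounding the inner integral by $(a\sigma_0\sqrt p)^{p-1}$ and the tail integral from below on $[\sigma_0\sqrt p,2\sigma_0\sqrt p]$ gives a ratio of order $\big(\tfrac{1-a}{a}\big)^{p}e^{-2p}$ up to polynomial factors, which diverges for any fixed $a$ with $(1-a)/a>e^2$, so any $a<(1+e^2)^{-1}$ is admissible. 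What the paper's detour through Lemma~\ref{lemma:optimal-p-moment-bound} buys is reuse: those two-sided per-bin bounds are the workhorse behind Lemma~\ref{lemma:useful-ineq} and the weighted-distortion estimate of Lemma~\ref{lemma:bound-lpw-gaussian-vector}, so proving them once makes the present lemma nearly free, whereas your proof is self-contained but single-purpose. One small point to make explicit: for $k\in\{1,\cl B\}$ the limit claim $\lim_{p\to\infty}\omega_{k,p}=\omega_{k,\infty}$ is also part of the statement, and it follows immediately from your growth bound since $\omega_{\cl B,\infty}=+\infty$ by the paper's convention; you leave this implicit.
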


Using these new levels, we define the (suboptimal) quantizers $\cl Q_p$ (with $\cl Q_2=\cl Q$)
such that
\begin{equation}
\label{eq:Qp}
\cl Q_p[t] = \omega_{k,p}\ \Leftrightarrow\ t\in\cl R_k = \cl
Q_p^{-1}[\omega_{k,p}] = \cl Q^{-1}[\omega_k].
\end{equation}

Two important points must be explained regarding the definition of
$\cl Q_p$.  First, the \emph{(re)quantization} of any source $\cl Z$ with
$\cl Q_p$ is possible from the knowledge of the quantized
value $\cl Q[\cl Z]$, as $\cl Q_p[\cl Z]=\cl Q_p[\cl Q[\cl Z]]$
since both quantizers share the same decision thresholds. Second,
despite the sub-optimality of $\cl Q_p$ relative to the untouched
thresholds $\{t_k: 1\leq k\leq \cl B\}$, 
we will see later that introducing this quantizer provides improvement
in the modeling of $\cl Q_p[\cl Z] - \cl Z$ by a Generalized Gaussian
Distribution (GGD) in each quantization bin.

\begin{remark}
Unfortunately, there is no closed form formula for computing
$\omega_{k,p}$. However, as detailed in
Appendix~\ref{sec:comp-omeg-p}, they can be computed up to numerical
precision using Newton's method combined with simple numerical
quadrature for the integral in \eqref{eq:wkp_integral}.
\end{remark}

Given $p\geq 2$ and for high $B$, the asymptotic behavior of a quantizer $\cl Q_p$ and
of its $p^{\rm th}$ \emph{power distortion} $\int_{\cl R_k} |t -
\omega_{k,p}|^p\ \pdf_0(t)\,\ud t$ in each bin $\cl R_k$ follows two
very different regimes in $\Rbb$ governed by a particular transition value $T =
\Theta(\sqrt B)$.
This is described in the following lemma (proved in Appendix~\ref{proof-toolbox-lemma}), which, to the best of
our knowledge, provides new results and may
be of independent interest for characterizing Gaussian source
quantization (even for the standard case $p=2$).

\begin{lemma}[\bf Asymptotic $p$-Quantization Characterization]
\label{lemma:useful-ineq}
Given the Gaussian pdf $\pdf_0$ and its associated compressor $\cl G$
function, choose $0<\beta<1$ and $p\in\Nbb$, and define the transition value
$$
T=T(B)=(6\,\sigma_0^2(\log 2^\beta)\,B)^{1/2}.
$$  
$T$ defines two specific asymptotic regimes for the
quantizer $\cl Q_p$:
\begin{enumerate}
\item The \emph{vanishing bin regime} $\cl T
= [-T, T]$: for all $\cl R_k \subset \cl T$ and any $c\in
\cl R_k$, the bin widths decay as $\gbinw_k =
O(2^{-(1-\beta)B})$, and the the related $p^{\rm
  th}$-power distortion and \emph{qpdf} asymptotically obey
\begin{align}
\label{lemma:useful-ineq-eq7}
\textstyle \int_{\cl R_k} |t-\omega_{k,p}|^p\ \pdf_0(t)\,\ud t\ &\simeq_B\
\tfrac{\gbinw_k^{p+1}}{(p+1)\,2^{p}}\,\pdf_0(c),\\
\label{lemma:useful-ineq-eq8}
\cl G'(c)\ &\simeq_B \tfrac{\binw}{\gbinw_k}.
\end{align}
\item The \emph{vanishing distortion regime} $\cl T^c$: we have 
$\cl G'(t) \leq \cl G'(T(B)) = \Theta(2^{-\beta B})$ for all $t \in \cl
T^c$. Moreover, the number of bins in $\cl T^c$ and their $p^{\rm
  th}$-power distortion decay, respectively, as
\begin{align}
\label{lemma:useful-ineq-eq2}
\#\{k : \cl R_k \subset \cl T^c\}&=
\Theta\big(B^{-1/2}\,2^{(1-\beta)B}\big),\\
\label{lemma:useful-ineq-eq2bis}
\int_{\cl R_k} |t-\omega_{k,p}|^p\
\pdf_0(t)\,\ud t&= O\big(B^{-(p+1)/2}\,2^{-3\beta B}\big),\quad \forall 
\cl R_k \subset \cl T^c.
\end{align}
\end{enumerate}
\end{lemma}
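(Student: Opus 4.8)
The plan is to anchor everything on two elementary evaluations at the transition point $T$ and then propagate them through the compander identities. First I would make the quantizer point density function explicit: inserting $\pdf_0=\gamma_{0,\sigma_0}$ into \eqref{eq:optimal-compressor} and computing $\int_\Rbb \pdf_0^{1/3}=(6\pi)^{1/2}\sigma_0\,(2\pi\sigma_0^2)^{-1/6}$ gives $\cl G'(\lambda)=(6\pi\sigma_0^2)^{-1/2}\exp(-\lambda^2/(6\sigma_0^2))$, i.e. $\cl G'=\gamma_{0,\sqrt 3\sigma_0}$, a centered Gaussian density strictly decreasing in $|\lambda|$. Substituting $T^2=6\sigma_0^2(\log 2^\beta)B$ then yields the two anchors $\cl G'(T)=(6\pi\sigma_0^2)^{-1/2}2^{-\beta B}=\Theta(2^{-\beta B})$ and $\pdf_0(T)=(2\pi\sigma_0^2)^{-1/2}2^{-3\beta B}=\Theta(2^{-3\beta B})$, on which every estimate below rests.

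For the qpdf identity \eqref{lemma:useful-ineq-eq8} and the bin widths I would use that, by \eqref{eq:thresh-def-with-compressor}, $\cl G(t_{k+1})-\cl G(t_k)=\binw$ for every $k$; the mean value theorem gives $\xi_k\in\cl R_k$ with $\cl G'(\xi_k)\,\gbinw_k=\binw$, and since bins in $\cl T$ are narrow the HRA lets me replace $\cl G'(\xi_k)$ by $\cl G'(c)$ for any $c\in\cl R_k$, giving $\cl G'(c)\simeq_B\binw/\gbinw_k$. Monotonicity then turns this into the width bound: for $\cl R_k\subset\cl T$, $|\xi_k|\le T$ forces $\cl G'(\xi_k)\ge\cl G'(T)$, hence $\gbinw_k=\binw/\cl G'(\xi_k)\le\binw/\cl G'(T)=O(2^{-(1-\beta)B})$. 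The distortion formula \eqref{lemma:useful-ineq-eq7} would follow by pulling the HRA-constant $\pdf_0(c)$ out of \eqref{eq:wkp_integral}, reducing the problem to $\int_{\cl R_k}|t-\omega_{k,p}|^p\,\ud t$, showing $\omega_{k,p}$ collapses onto the bin midpoint as $\gbinw_k\to0$, and evaluating the symmetric integral $\int_{\cl R_k}|t-\tfrac12(t_k+t_{k+1})|^p\,\ud t=\gbinw_k^{p+1}/((p+1)2^{p})$, which is exactly the claimed value.

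In the vanishing distortion regime I would treat the three statements separately. The qpdf bound is immediate from monotonicity, $\cl G'(t)\le\cl G'(T)=\Theta(2^{-\beta B})$ for $|t|>T$. The bin count \eqref{lemma:useful-ineq-eq2} follows by working in the compressor domain: bins tiling $[T,\infty)$ occupy $[\cl G(T),1]$, which has $\binw$-length $(1-\cl G(T))/\binw$, and since $\cl G'=\gamma_{0,\sqrt3\sigma_0}$ the Mills-type tail estimate gives $1-\cl G(T)=\int_T^\infty\cl G'(t)\,\ud t\simeq_B(3\sigma_0^2/T)\,\cl G'(T)=\Theta(B^{-1/2}2^{-\beta B})$; dividing by $\binw=2^{-B}$ and doubling for the symmetric tail produces $\Theta(B^{-1/2}2^{(1-\beta)B})$. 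The distortion bound \eqref{lemma:useful-ineq-eq2bis} I would obtain from optimality of $\omega_{k,p}$: for a positive-tail bin $\cl R_k$ with $t_k\ge T$, replacing $\omega_{k,p}$ by the admissible $\lambda=t_k$ and enlarging the domain to $[t_k,\infty)$ gives $\int_{\cl R_k}|t-\omega_{k,p}|^p\pdf_0(t)\,\ud t\le\int_{t_k}^{\infty}(t-t_k)^p\pdf_0(t)\,\ud t$; the substitution $s=t-t_k$ with $e^{-s^2/(2\sigma_0^2)}\le1$ bounds this by $(2\pi\sigma_0^2)^{-1/2}e^{-t_k^2/(2\sigma_0^2)}\,p!\,\sigma_0^{2(p+1)}\,t_k^{-(p+1)}$, and using $e^{-t_k^2/(2\sigma_0^2)}\le2^{-3\beta B}$ and $t_k^{-(p+1)}\le T^{-(p+1)}=\Theta(B^{-(p+1)/2})$ yields $O(B^{-(p+1)/2}2^{-3\beta B})$ uniformly; the negative tail is symmetric.

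The two steps I expect to be delicate are, first, the collapse of $\omega_{k,p}$ onto $\tfrac12(t_k+t_{k+1})$ needed for \eqref{lemma:useful-ineq-eq7}: since $J(\lambda):=\int_{\cl R_k}|t-\lambda|^p\pdf_0(t)\,\ud t$ is strictly convex, its stationarity condition $\int_{\cl R_k}|t-\lambda|^{p-1}\sign(t-\lambda)\pdf_0(t)\,\ud t=0$ must be perturbed around the constant-density (midpoint) solution, and I must check that the linear variation of $\pdf_0$ across $\cl R_k$, of relative size $|\pdf_0'/\pdf_0|\,\gbinw_k\le(T/\sigma_0^2)\,\gbinw_k$, shifts the minimizer only by $o(\gbinw_k)$, so the leading-order symmetric integral is unaffected. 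Second, the uniformity of \eqref{lemma:useful-ineq-eq2bis} over all of $\cl T^c$, including the two unbounded extreme bins where the midpoint and narrow-bin heuristics fail entirely; here the $\lambda=t_k$ truncation above is precisely what renders the infinite tail harmless, and the worst case is the innermost tail bin at $t_k=T$, which is why the final bound is governed by the anchors $\pdf_0(T)$ and $T^{-(p+1)}$.
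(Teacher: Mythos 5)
Your proposal is correct, and its overall skeleton matches the paper's proof (Lemma~\ref{lemma:useful-ineq-ext} in Appendix~\ref{proof-toolbox-lemma}): the same anchors $\cl G'(T)=\Theta(2^{-\beta B})$ and $\pdf_0(T)=\Theta(2^{-3\beta B})$, the same compressor-domain count $\#\{k:\cl R_k\subset\cl T^c\}=\Theta\big(\binw^{-1}(1-\cl G(T))\big)$ via a Mills-ratio estimate, and the same use of $p$-optimality (replace $\omega_{k,p}$ by $t_k$, enlarge to $[t_k,\infty)$) for the tail distortion. Where you genuinely diverge is in the technical workhorses, and your substitutes are systematically more elementary. (i) For \eqref{lemma:useful-ineq-eq8} and the width bound you use the mean value theorem where the paper uses concavity of $\cl G$ on $\Rbb_+$ (giving $\cl G'(t_{k+1})\leq\binw/\gbinw_k\leq\cl G'(t_k)$); both work, but your steps are stated in a mildly circular order --- derive $\gbinw_k=\binw/\cl G'(\xi_k)\leq\binw/\cl G'(T)$ first (it needs only $|\xi_k|\leq T$ and monotonicity), and only then use the resulting narrowness to replace $\cl G'(\xi_k)$ by $\cl G'(c)$. (ii) For \eqref{lemma:useful-ineq-eq2bis}, your substitution $s=t-t_k$ with $e^{-s^2/(2\sigma_0^2)}\leq 1$ turns the partial moment into a Gamma integral and recovers exactly the upper bound $p!\,\sigma_0^{2(p+1)}\,t_k^{-(p+1)}\pdf_0(t_k)$ that the paper obtains from its Lemma~\ref{lemma:bounding-lpm} by iterated integration by parts; the paper's lemma also supplies matching lower bounds for all orders, but this statement needs only the upper bound at order $p$ plus the classical two-sided $Q_0$ bound for the count, both of which you have. (iii) The substantive difference is \eqref{lemma:useful-ineq-eq7}: the paper's Lemma~\ref{lemma:optimal-p-moment-bound} sandwiches the \emph{minimal distortion value} between $\tfrac{\gbinw_k^{p+1}}{(p+1)2^{p+1}}\big(1+(\fs D/\fs C)^{-(p+1)/p}\big)\fs C$ and $\tfrac{\gbinw_k^{p+1}}{(p+1)2^{p+1}}\big(1+(\fs D/\fs C)^{(p+1)/p}\big)\fs D$, so it never needs to locate $\omega_{k,p}$ within the bin; your route instead passes through the collapse of $\omega_{k,p}$ onto the midpoint, which is precisely the step you flag as delicate. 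Your perturbation sketch can be completed (the stationarity functional has slope of order $\gbinw_k^{p-1}$ while the density perturbation contributes $O(\delta_k\gbinw_k^{p})$ with $\delta_k=O(T\gbinw_k/\sigma_0^2)$, so the shift is $O(\delta_k\gbinw_k)=o(\gbinw_k)$), but you can bypass it entirely: writing $\pdf_0(t)=\pdf_0(c)\big(1+O(\delta_k)\big)$ uniformly on $\cl R_k$ gives $\int_{\cl R_k}|t-\lambda|^p\pdf_0(t)\,\ud t=\pdf_0(c)\big(1+O(\delta_k)\big)\int_{\cl R_k}|t-\lambda|^p\,\ud t$ uniformly in $\lambda$, and minimizing both sides over $\lambda\in\cl R_k$ yields \eqref{lemma:useful-ineq-eq7} directly --- which is, in effect, what the paper's two-sided lemma accomplishes with explicit constants, at the price of a standalone lemma that it then reuses for the bin straddling $T$ and in Lemma~\ref{lemma:bounding-a-tail-after-T}.
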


We now state an important result, proved in Appendix
\ref{proof-lemma-bound-lpw-gaussian-vector} from the statements of
Lemma~\ref{lemma:useful-ineq}, which, together with the SLLN, estimates the
quantization distortion of $\cl Q_p$ on a random Gaussian vector. Given $p\geq 1$ and some positive weights $\bs w=(w_1,\cdots,w_M)^T\in\Rbb^M_+$, this distortion is measured by a
weighted $\ell_p$-norm defined as\footnote{A more
  standard weighted $\ell_p$-norm definition reads $(\sum_i w_i
  |v_i|^p)^{1/p}$. Our definition choice, which is strictly equivalent,
  offers useful writing simplifications, \eg when observing
  that $\|\bs\Phi \bs x\|_{p,\bs w}=\|\bs\Phi' \bs x\|_{p}$ with $\bs \Phi'=\diag(\bs w)\bs\Phi$.} $\|\bs v\|_{p,\bs w}\eqdef\|\diag(\bs w)\,\bs v\|_p$ for any $\bs
v\in\Rbb^M$.

\begin{lemma}[\bf Asymptotic Weighted $\ell_{p}$-Distortion]
  \label{lemma:bound-lpw-gaussian-vector}
  Let $\bs z\in\Rbb^M$ be a random vector where each component $z_i \sim_{\iid}
  \pdf_0$. Given the optimal compressor function $\cl G$
  associated to $\pdf_0$ and the weights $\bs w=\bs w(p)$ such that
  $w_i(p) = \cl G'\big(\cl Q_p[z_i]\big)^{(p-2)/p}$ for $p\geq 2$, the following holds almost surely
  \begin{equation}
    \label{eq:eq_lemma_Qp_dist_bound}
    \|\cl Q_p[\bs z] - \bs z\|^p_{p,\bs w}\ \mathop{\simeq}_{B,M}\ 
    M\, \tfrac{2^{-Bp}}{(p+1)\,2^{p}}\,\fnorm{\pdf_0}_{1/3}\ \defeq\ \epsilon_p^p,    
  \end{equation}
with $\fnorm{\pdf_0}_{1/3}=2\pi\,\sigma_0^2\,3^{3/2}$.
\end{lemma}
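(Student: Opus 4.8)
The plan is to exploit the i.i.d. structure of the summands together with the Strong Law of Large Numbers (SLLN), reducing the almost-sure claim to a single expectation, and then to evaluate that expectation using the two-regime decomposition of Lemma~\ref{lemma:useful-ineq}. Expanding the weighted norm and using $w_i(p)^p = \cl G'(\cl Q_p[z_i])^{p-2}$, we have
\[
\|\cl Q_p[\bs z] - \bs z\|_{p,\bs w}^p = \sum_{i=1}^M \cl G'\big(\cl Q_p[z_i]\big)^{p-2}\,\big|\cl Q_p[z_i] - z_i\big|^p,
\]
which is a sum of $M$ i.i.d. random variables, since each summand is a deterministic function of $z_i$ alone and $z_i \sim_{\iid} \pdf_0$. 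First I would check that the common expectation is finite (the Gaussian tails dominate the polynomial growth $|t-\omega_{k,p}|^p$ even on the two unbounded extreme bins), so that the SLLN applies and gives, almost surely,
\[
\tinv{M}\,\|\cl Q_p[\bs z] - \bs z\|_{p,\bs w}^p \mathop{\simeq}_{M} \E\big[\cl G'(\cl Q_p[Z])^{p-2}\,|\cl Q_p[Z] - Z|^p\big], \qquad Z \sim \pdf_0.
\]

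Next, since $\cl Q_p[Z]=\omega_{k,p}$ is constant on each bin $\cl R_k$, the expectation splits as a bin sum $\sum_{k=1}^{\cl B} \cl G'(\omega_{k,p})^{p-2}\int_{\cl R_k}|t-\omega_{k,p}|^p\,\pdf_0(t)\,\ud t$, which I would partition according to the two regimes of Lemma~\ref{lemma:useful-ineq}: the vanishing bin regime over $\cl R_k\subset\cl T$ and the vanishing distortion regime over $\cl R_k\subset\cl T^c$, with the free parameter $\beta\in(0,1)$ fixed later. For the main (vanishing bin) contribution I would insert \eqref{lemma:useful-ineq-eq7} with $c=\omega_{k,p}$ together with \eqref{lemma:useful-ineq-eq8} written as $\gbinw_k\simeq_B\binw/\cl G'(\omega_{k,p})$, so each summand becomes
\[
\cl G'(\omega_{k,p})^{p-2}\,\tfrac{\gbinw_k^{p+1}}{(p+1)2^p}\,\pdf_0(\omega_{k,p}) \mathop{\simeq}_{B} \tfrac{\binw^{p+1}}{(p+1)2^p}\,\frac{\pdf_0(\omega_{k,p})}{\cl G'(\omega_{k,p})^3}.
\]
The key observation, which is precisely what motivates the chosen weights, is that the optimal qpdf formula \eqref{eq:optimal-compressor} gives $\cl G'(\lambda)^3=\fnorm{\pdf_0}_{1/3}^{-1}\pdf_0(\lambda)$, so the ratio $\pdf_0(\omega_{k,p})/\cl G'(\omega_{k,p})^3=\fnorm{\pdf_0}_{1/3}$ is a constant independent of $k$. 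Thus every bin in $\cl T$ contributes the same amount, and since \eqref{lemma:useful-ineq-eq2} shows the number of bins in $\cl T^c$ is only $\Theta(B^{-1/2}2^{(1-\beta)B})=o(2^B)$, we get $\#\{k:\cl R_k\subset\cl T\}\simeq_B 2^B=\binw^{-1}$. Multiplying yields the main contribution $\simeq_B \binw^{p}\fnorm{\pdf_0}_{1/3}/((p+1)2^p) = 2^{-Bp}\fnorm{\pdf_0}_{1/3}/((p+1)2^p)$, which is exactly $\epsilon_p^p/M$.

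Finally I would bound the tail and fix $\beta$. Using $\cl G'(\omega_{k,p})^{p-2}\leq\cl G'(T)^{p-2}=\Theta(2^{-\beta(p-2)B})$ on $\cl T^c$ together with \eqref{lemma:useful-ineq-eq2bis} and the bin count \eqref{lemma:useful-ineq-eq2}, the total tail mass is $O\big(B^{-(p+2)/2}\,2^{B[1-\beta(p+2)]}\big)$; requiring this to be $o(2^{-Bp})$ amounts to $\beta>(p+1)/(p+2)$, which is admissible since $(p+1)/(p+2)<1$, so a valid $\beta\in(0,1)$ exists. Combining the two regimes gives the expectation $\simeq_B\epsilon_p^p/M$, and with the SLLN step this establishes $\|\cl Q_p[\bs z]-\bs z\|_{p,\bs w}^p\mathop{\simeq}_{B,M}\epsilon_p^p$; the closed form $\fnorm{\pdf_0}_{1/3}=2\pi\sigma_0^2\,3^{3/2}$ follows from $\int_\Rbb\pdf_0^{1/3}=(2\pi\sigma_0^2)^{-1/6}\sqrt{6\pi}\,\sigma_0$ and cubing. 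The main obstacle I anticipate is not any single computation but the \emph{uniformity} of the per-bin asymptotics: the relations \eqref{lemma:useful-ineq-eq7}--\eqref{lemma:useful-ineq-eq8} hold with $(1+o(1))$ factors for each fixed bin, and one must control these error terms uniformly over the exponentially many bins in $\cl T$ so that their accumulation stays $o(\epsilon_p^p/M)$. This is ensured because the bin widths on $\cl T$ are uniformly $O(2^{-(1-\beta)B})$ while $|(\log\pdf_0)'(t)|=|t|/\sigma_0^2\leq T/\sigma_0^2=\Theta(\sqrt B)$ there, making the HRA relative error $o(1)$ uniformly across $\cl T$.
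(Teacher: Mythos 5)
Your proposal follows, in substance, the same route as the paper's proof: reduce the almost-sure claim to a single expectation via the SLLN, split the resulting bin sum at the threshold $T(B)$, evaluate the inner-regime terms with \eqref{lemma:useful-ineq-eq7}--\eqref{lemma:useful-ineq-eq8}, and kill the tail under the constraint $\beta(p+2)\geq p+1$ --- the paper's residual bound $O\big(B^{-(p+2)/2}2^{-(\beta(p+2)-1)B}\big)$, obtained from its auxiliary Lemma~\ref{lemma:bounding-a-tail-after-T} with $\gamma=p-2$, is exactly your tail estimate, and its admissibility condition is your $\beta>(p+1)/(p+2)$. The one methodological difference is the main term: the paper converts the sum over bins in $\cl T$ into a Riemann sum and then into the integral $\tfrac{2^{-pB}}{(p+1)2^p}\int_\Rbb[\cl G'(t)]^{-2}\pdf_0(t)\,\ud t=\tfrac{2^{-pB}}{(p+1)2^p}\fnorm{\pdf_0}_{1/3}$, whereas you observe that each bin contributes the identical constant $\fnorm{\pdf_0}_{1/3}\,\binw^{p+1}/((p+1)2^p)$ (since $\pdf_0/(\cl G')^3\equiv\fnorm{\pdf_0}_{1/3}$ exactly) and then count the bins. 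The two computations are equivalent --- the paper itself makes your ``equal contribution per bin'' observation later, in Section~\ref{sec:dequ-with-gener} --- but the paper's detour through a general exponent $\gamma>p-3$ buys a lemma that is reused verbatim in the proof of Proposition~\ref{prop:towards-quant-cons} (with $n=0$), while your counting argument is more direct for the case at hand. Your closing concern about uniformity of the $(1+o(1))$ factors over exponentially many bins is exactly the right one, and your justification (uniform bin width $O(2^{-(1-\beta)B})$ against a log-derivative of size $\Theta(\sqrt B)$ on $\cl T$) is precisely the content of \eqref{lemma:useful-ineq-ext-eq4} in the paper's extended toolbox Lemma~\ref{lemma:useful-ineq-ext}.

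The one genuine gap is that your dichotomy ``$\cl R_k\subset\cl T$ or $\cl R_k\subset\cl T^c$'' is not a partition of all bins: the two bins straddling $\pm T(B)$ belong to neither class, and Lemma~\ref{lemma:useful-ineq} as stated gives you nothing about them (its regime-1 claims require $\cl R_k\subset\cl T$, its regime-2 claims require $\cl R_k\subset\cl T^c$). For such a bin $\cl R_{k'}$ with $t_{k'}<T(B)\leq t_{k'+1}$ you cannot invoke $\cl G'(\omega_{k',p})\leq\cl G'(T)$ when $\omega_{k',p}<T(B)$, nor \eqref{lemma:useful-ineq-eq2bis}, so your step ``combining the two regimes gives the expectation'' leaves these terms unaccounted for. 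The paper treats exactly this term inside its residual $\fs R$, using \eqref{lemma:useful-ineq-ext-eq9}--\eqref{lemma:useful-ineq-ext-eq11}: since $T(B)-t_{k'}=O(2^{-(1-\beta)B})$, one gets $\cl G'(\omega_{k',p})\leq\max(\cl G'(t_{k'}),\cl G'(t_{k'+1}))=O(2^{-\beta B})$ and a $p^{\rm th}$-power distortion $O\big(B^{-(p+1)/2}2^{-3\beta B}\big)$ on that bin, i.e., the same bound as the $\cl T^c$ bins, after which your tail estimate absorbs it. This is a short, purely technical fix, but it does require an argument beyond the statements you cite.
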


This lemma provides a tight estimation for $p=2$ and
$p\to+\infty$. Indeed, in the first case $\bs w=\bs 1$ and the bound
matches the Panter-Dite estimation \eqref{eq:panter-dite}. For
$p\to\infty$, we observe that $\epsilon_\infty = 2^{-(B+1)} = \binw/2
= \epsilon_{\rm QC}$.

\begin{figure}
  \centering
  \includegraphics[width=7.5cm]{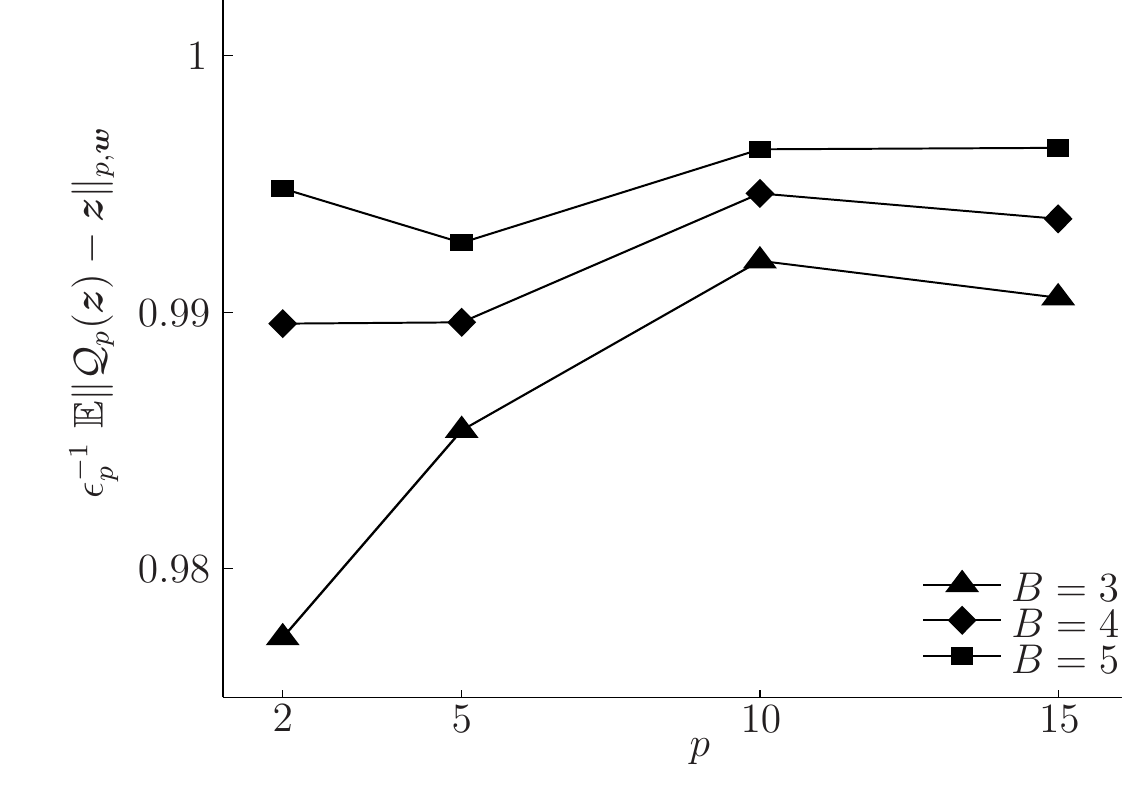}
  \caption{\label{fig:testing_eps_p_bound}
Comparing the theoretical bound $\epsilon_p$ to the empirical mean estimate of $\E \|\cl
  Q_p[\bs z] - \bs z\|_{p,\bs w}$ using 1000 trials of Monte-Carlo simulations, for each
  $B=3,4,5$).}
\end{figure}

Fig.~\ref{fig:testing_eps_p_bound} shows how well the $\epsilon_p$
estimates the distortion $\|\cl Q_p[\bs z] - \bs z\|_{p,\bs w}$ for
the weights and the $p$-optimal levels given in Lemma 2. This has been
measured by averaging this quantization distortion for 1000 realizations of a
Gaussian random vector $\sim \cl N^M(0,1)$ with $M=2^{10}$, $p \in
\{2, \,\cdots, 15\}$ and $B=3,4$ and 5.  We observe that the bias of
$\epsilon_p$, as reflected here by the ratio $\epsilon_p^{-1}\,\E\|\cl
Q_p[\bs z] - \bs z\|_{p,\bs w}$, is rather limited and decreases when
$p$ and $B$ increase with a maximum relative error of about $2.5\%$
between the true and estimated distortion at $B=3$ and $p=2$.

Inspired by relation \eqref{eq:eq_lemma_Qp_dist_bound}, we say that an
estimate $\bs x^*\in\Rbb^N$ of $\bs x$ sensed by the
model \eqref{eq:quant-sensing-model} satisfies the \emph{$p$-Distortion
Consistency} (or D$_p$C) if
\begin{align}
  \label{eq:Dp_consistency}
  \|\bs \Phi \bs x^* - \cl Q_p[\bs y]\|_{p,\bs w}\ \leq\
  \epsilon_p,\tag{\bf D$_p$C}
\end{align}
with the weights $w_i(p) = \cl G'\big(\cl Q_p[y_i])^{(p-2)/p}$.

The class of D$_p$C constraints
  has QC and DC as its limit cases.
\begin{lemma} 
\label{lemma:equiv-D2C-DinfC}
Given $\bs y = \cl Q[\bs \Phi \bs x]$, we have asymptotically in $B$
$$
{\rm D}_2{\rm C}\ \equiv\ {\rm DC}\qquad\text{and}\qquad {\rm D}_{\infty}{\rm C}\ \equiv\ {\rm QC}.
$$ 
\end{lemma}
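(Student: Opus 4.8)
The plan is to prove the two equivalences separately, each by showing that the relevant D$_p$C constraint reduces, in the appropriate asymptotic limit $p\to 2$ or $p\to\infty$, to exactly the DC or QC constraint already established in Section~\ref{sec:dist-quant-quant-cons}. The key observation driving both is that the weights $w_i(p) = \cl G'(\cl Q_p[y_i])^{(p-2)/p}$ and the radius $\epsilon_p$ both degenerate in controlled ways at the two endpoints.

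First I would handle $\mathrm{D}_2\mathrm{C}\equiv\mathrm{DC}$, which is the easier case. Setting $p=2$, the exponent $(p-2)/p$ vanishes, so $w_i(2) = \cl G'(\cl Q_2[y_i])^0 = 1$, giving $\bs w = \bs 1$ and hence $\|\cdot\|_{2,\bs w} = \|\cdot\|_2$. Since $\cl Q_2 = \cl Q$ by construction \eqref{eq:Qp}, we have $\cl Q_2[\bs y] = \cl Q[\bs y] = \bs y$ (the levels $\omega_{k,2}=\omega_k$ are precisely the quantized values), so the constraint \eqref{eq:Dp_consistency} becomes $\|\bs\Phi\bs x^* - \bs y\|_2 \leq \epsilon_2$. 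It then remains to check that the radius matches: Lemma~\ref{lemma:bound-lpw-gaussian-vector} gives $\epsilon_2^2 = M\,\tfrac{2^{-2B}}{3\cdot 4}\,\fnorm{\pdf_0}_{1/3} = M\,\tfrac{2^{-2B}}{12}\cdot 2\pi\sigma_0^2\,3^{3/2}$, which simplifies to $M\,\tfrac{\sqrt 3\,\pi}{2}\,\sigma_0^2\,2^{-2B} = \epsilon_{\rm DC}^2$, exactly the Panter--Dite bound from \eqref{eq:panter-dite}. Thus the two constraints coincide asymptotically in $B$.

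For $\mathrm{D}_\infty\mathrm{C}\equiv\mathrm{QC}$ I would pass to the limit $p\to\infty$ in the weighted norm. As $p\to\infty$, the weighted $\ell_p$-norm $\|\bs v\|_{p,\bs w} = (\sum_i (w_i|v_i|)^p)^{1/p}$ converges to the weighted $\ell_\infty$-norm $\max_i w_i|v_i|$. The crucial point is the behavior of the weights: since $w_i(p)^{1/(p-2)\cdot(p-2)/p}$\dots more carefully, $w_i(p) = \cl G'(\cl Q_p[y_i])^{(p-2)/p}$, and as $p\to\infty$ the exponent $(p-2)/p\to 1$, so $w_i(\infty) = \cl G'(\cl Q_\infty[y_i]) = \cl G'(\omega_{k,\infty})$ for the bin $\cl R_k$ containing $y_i$. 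Using the qpdf relation \eqref{lemma:useful-ineq-eq8}, namely $\cl G'(c)\simeq_B \binw/\gbinw_k$ for $c\in\cl R_k$, each weight satisfies $w_i(\infty)\simeq_B \binw/\gbinw_k$. The constraint $\max_i w_i|(\bs\Phi\bs x^*)_i - \omega_{k,\infty}| \leq \epsilon_\infty = \binw/2$ then rewrites, after dividing by $\binw/\gbinw_k$, as a per-bin condition that $(\bs\Phi\bs x^*)_i$ lies within $\gbinw_k/2$ of the bin midpoint $\omega_{k,\infty}=\tfrac12(t_k+t_{k+1})$ --- that is, $(\bs\Phi\bs x^*)_i \in \cl R_k$, which is precisely quantization consistency $\cl Q[\bs\Phi\bs x^*]=\bs y$. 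Equivalently, applying $\cl G$ and invoking \eqref{eq:compressor-consistent-rel} recovers the form $\|\cl G(\bs\Phi\bs x^*) - \cl G(\bs y)\|_\infty \leq \binw/2$, matching (\textbf{QC}).

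The main obstacle I expect is the $p\to\infty$ direction, specifically justifying the interchange of the limit with the asymptotics in $B$ so that the weighted-norm limit genuinely produces the geometric ``lies in the correct bin'' statement rather than a looser bound. One must be careful that the qpdf approximation \eqref{lemma:useful-ineq-eq8} holds uniformly enough over bins, and that the near-extreme bins --- where Lemma~\ref{lemma:p-optimal-levels} tells us $|\omega_{k,p}|=\Omega(\sqrt p)$ so the levels drift with $p$ --- do not spoil the identification of $\omega_{k,\infty}$ with the midpoint; this is controlled by the explicit limit $\lim_{p\to\infty}\omega_{k,p}=\omega_{k,\infty}$ already supplied by Lemma~\ref{lemma:p-optimal-levels}. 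The DC direction, by contrast, is essentially an algebraic verification once the weights collapse to unity.
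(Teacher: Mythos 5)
Your $p=2$ argument coincides with the paper's own: the weights collapse to $\bs 1$, $\cl Q_2[\bs y]=\bs y$, and $\epsilon_2^2=M\tfrac{2^{-2B}}{12}\fnorm{\pdf_0}_{1/3}=\epsilon_{\rm DC}^2$ matches the Panter--Dite value; that half is correct and essentially identical to the paper.

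The genuine gap is in the $p\to\infty$ direction, precisely at the point you flag and then wave away. For the two semi-infinite bins $k\in\{1,\cl B\}$, Lemma~\ref{lemma:p-optimal-levels} gives $\omega_{k,p}\to\omega_{k,\infty}=\pm\infty$ with $|\omega_{k,p}|=\Omega(\sqrt p)$; this limit does not ``control'' the problem --- it is exactly what breaks your argument. In such a bin the weight $w_i(p)=\cl G'(\cl Q_p[y_i])^{(p-2)/p}\propto\exp\big(-\tfrac{p-2}{p}\,\omega_{k,p}^2/(6\sigma_0^2)\big)$ decays exponentially in $p$, while $|(\bs\Phi\bs x^*)_i-\omega_{k,p}|$ grows only like $\sqrt p$, so the product tends to $0$ and the limiting constraint on that component is vacuous: any value of $(\bs\Phi\bs x^*)_i$ passes it, whereas QC requires $(\bs\Phi\bs x^*)_i$ to lie beyond the finite threshold of that bin. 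Your rewriting ``within $\gbinw_k/2$ of the midpoint'' is also meaningless there, since $\gbinw_k=\infty$. The paper closes this hole with a step absent from your proposal: because $\bs y=\cl Q[\bs\Phi\bs x]$ is produced by the sensing model with $\|\bs\Phi\bs x\|_\infty$ fixed, and because \eqref{lemma:useful-ineq-eq2} forces the outermost $\Theta(B^{-1/2}2^{(1-\beta)B})$ thresholds to migrate beyond $T(B)=\Theta(\sqrt B)$, there exists $B_0$ such that for all $B\geq B_0$ every \emph{observed} bin $\cl R_{k(i)}$ is contained in $[-T(B),T(B)]$, hence finite with vanishing width. Only then do the componentwise limits exist and remain finite, so that the squeeze bound $\|\cdot\|_\infty\leq\|\cdot\|_p\leq M^{1/p}\|\cdot\|_\infty$ together with continuity of the $\max$ legitimately turns the $p\to\infty$ limit of the weighted $\ell_p$ constraint into the weighted $\ell_\infty$ one, after which \eqref{lemma:useful-ineq-eq8} yields your per-bin statement. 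Note that this is exactly where the hypothesis $\bs y=\cl Q[\bs\Phi\bs x]$ and the qualifier ``asymptotically in $B$'' in the lemma do real work; your proposal never invokes them for this purpose, and without them the claimed equivalence ${\rm D}_\infty{\rm C}\equiv{\rm QC}$ is false whenever a measurement falls in an unbounded bin.
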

\begin{proof}
Let $\bs x^* \in \Rbb^N$ be a vector to be tested with the DC, QC or
D$_p$C constraints. The first equivalence for $p=2$ is straightforward since $\bs w(2)=\bs 1$,
$\|\bs \Phi \bs x^* - \cl Q_p[\bs y]\|_{p,\bs w} =\|\bs \Phi \bs x^* -
\cl Q[\bs y]\|_{2}$ and $\epsilon^2_2=\epsilon^2_{\rm DC} =
\tfrac{2^{-2B}}{12}\,\fnorm{\pdf_0}_{1/3}$ from \eqref{eq:panter-dite}.  

For the second, we use the fact that $\bs y = \cl Q[\bs \Phi \bs x]$
is fixed by the sensing model \eqref{eq:quant-sensing-model}. Let us
denote by $k(i)$ the index of the bin to which $\cl Q_p[y_i]$ belongs
for $1\leq i \leq M$. Since $\|\bs\Phi \bs x\|_\infty$ is fixed, and
because relation \eqref{lemma:useful-ineq-eq2} in
Lemma~\ref{lemma:useful-ineq} implies that the amplitude of the first
or of the last $\Theta(B^{-1/2}2^{(1-\beta)B})$ thresholds grow faster
than $T=\Theta(\sqrt{\beta B})$ for $0<\beta<1$, there exists necessarily
a $B_0\geq 0$ such that $-T(B) \leq t_{k(i)} \leq t_{k(i)+1} \leq
T(B)$ for all $B\geq B_0$ and all $1\leq i\leq M$.

Writing $\bs W_{\!\!p} = \diag(\bs w(p))$, we can use the equivalence
$\|\cdot \|_\infty \leq \|\cdot \|_p \leq M^{1/p}\,\|\cdot \|_\infty$
and the squeeze theorem on the following limit:
$$
\lim_{p\to\infty}\|\bs \Phi \bs x^* - \cl Q_p[\bs y]\|_{p,\bs w(p)}
 =\ \lim_{p\to\infty}\|\bs W_{\!\!p}\big(\Phi \bs x^* - \cl
Q_p[\bs y]\big)\|_{p}\\
 =\ \lim_{p\to\infty}\,\|\bs W_{\!\!p}\big(\Phi \bs x^* -
\cl Q_p[\bs y]\big)\|_{\infty}.
$$

Moreover, since for $B\geq B_0$ and for all $1\leq i\leq M$ the bin $\cl R_{k(i)}$ is
finite, the limit
$$
\lim_{p\to\infty}\,\cl G'(\cl Q_p[y_i])^{(p-2)/p}\,\big|(\bs \Phi \bs
x^*)_i - \cl Q_p[y_i]\big|
$$ 
exists and is finite. Therefore, from the continuity of the
  $\max$ function applied on the $M$ components of vectors in $\Rbb^M$, we find
\begin{align*}
\lim_{p\to\infty}\|\bs \Phi \bs x^* - \cl Q_p[\bs y]\|_{p,\bs w(p)}&=\ \lim_{p\to\infty}\,\max_{i}\,\cl G'(\cl Q_p[y_i])^{(p-2)/p}\,\big|(\bs \Phi \bs x^*)_i - \cl Q_p[y_i]\big|\\
&=\ \max_{i}\,\lim_{p\to\infty}\ \cl G'(\cl Q_p[y_i])^{(p-2)/p}\,\big|(\bs \Phi \bs x^*)_i - \cl Q_p[y_i]\big|\\
&=\ \max_{i}\,\cl G'(\cl Q_\infty(y_i))\,\big|(\bs \Phi \bs x^*)_i - \cl Q_\infty(y_i)\big|.
\end{align*}

For $B\geq B_0$, \eqref{lemma:useful-ineq-eq8} provides $\cl G'(\cl
Q_\infty(y_i)) \simeq_B \tfrac{\binw}{\gbinw_{k(i)}}$, so that, if we
impose $\lim_{p\to\infty}\|\bs \Phi \bs x^* - \cl Q_p[\bs y]\|_{p,\bs
  w(p)} \leq \epsilon_{\rm QC} = \binw/2$, we get asymptotically in
$B$
$$
\max_{i} \tinv{\gbinw_{k(i)}}\,\big|(\bs
\Phi \bs x^*)_i - \cl Q_\infty(y_i)\big|\ \mathop{\lesssim}_B\ \tinv{2},
$$
which is equivalent to imposing $(\bs
\Phi \bs x^*)_i \in \cl R_{k(i)}$, \ie the
Quantization Constraint.  
\end{proof}

\section{Weighted $\ell_p$ Fidelities in Compressed Sensing and General Reconstruction Guarantees}
\label{sec:weight-fidel-guar}

The last section has provided us some weighted $\ell_{p,\bs w}$
constraints, with appropriate weights $\bs w$, that can be used for
stabilizing the reconstruction of a signal observed through the
quantized sensing model \eqref{eq:quant-sensing-model}. 
We now turn to studying the stability of $\ell_1$-based decoders integrating these weighted
$\ell_{p,\bs w}$-constraints as data fidelity. 
We will highlight also the requirements that the sensing matrix must fulfill to ensure this
stability. We then then apply this general stability result to additive heteroscedastic GGD noise, where weighing can be view as a variance stabilization transform. Section~\ref{sec:dequ-with-gener} will
later instantiate the outcome of this section to the particular case of QCS.

\subsection{Generalized Basis Pursuit DeNoise }
\label{sec:dequ-with-gener}

Given some positive weights $\bs w \in \Rbb^M$ and $p\geq 2$, we
study the following general minimization program, coined General Basis
Pursuit DeNoise (GBPDN),
\begin{equation*}
  \Delta_{p, \bs w}(\bs y,\bs\Phi,\epsilon)\ = \ \Argmin_{\bs u\,\in\,\Rbb^N} \|\bs u\|_1
  \ {\rm
  s.t.}\ \|\bs y - \bs\Phi \bs u\|_{p,\bs w}\leq
  \epsilon, \eqno{(\textrm{\bf\small GBPDN$(\ell_{p,\bs w})$})}
\end{equation*}
where $\|\!\cdot\!\|_{p,\bs w}$ is the weighted $\ell_p$-norm defined in the previous section.
Note that BPDN is special case of GBPDN corresponding to $p=2$ and $\bs
w=\bs 1$. The Basis Pursuit DeQuantizers (BPDQ) introduced in
\cite{Jacques2010} are associated to $p\geq 1$ and $\bs w=\bs 1$,
while the case $p=1$ and $\bs w=\bs 1$ has also been covered in
\cite{fuchs2009fast}.

We are going to see that the stability of GBPDN$(\ell_{p,\bs w})$ is
guaranteed if $\bs \Phi$ satisfies a particular instance of the
following general isometry property.

\begin{definition}
  Given two normed spaces $\mathcal{X}=(\Rbb^M,\norm{\cdot}_{\cl X})$
  and $\mathcal{Y}=(\Rbb^N,\norm{\cdot}_{\cl Y})$ (with $M< N$), a matrix
  $\bs\Phi\in\Rbb^{M\times N}$ satisfies the Restricted Isometry
  Property from $\mathcal{X}$ to $\mathcal{Y}$ at order
  $K\in\Nbb$, radius $0\leq\delta<1$ and for a normalization
  $\mu>0$, if for all $\bs
  x\in\Sigma_K$,
\begin{equation}
  \label{eq:rip-p}
  (1-\delta)^{1/\kappa}\,\norm{\bs x}_{\mathcal{Y}} \leq \tinv{\mu} \norm{\bs \Phi \bs x}_{\mathcal{X}} \leq
  (1+\delta)^{1/\kappa}\,\norm{\bs x}_{\mathcal{Y}},
\end{equation}
$\kappa$ being an exponent function of the geometries of $\cl X,\cl Y$. To lighten notation, we will write that
$\bs \Phi$ is RIP$_{\mathcal{X},\mathcal{Y}}(K,\delta,\mu)$.
\end{definition}

We may notice that the common RIP is equivalent to\footnote{Assuming the columns of $\bs \Phi$ are normalized to unit-norm.}
RIP$_{\ell_2^M,\ell_2^N}(K,\delta,1)$ with $\kappa=1$, while the RIP$_{p,q}$ introduced
earlier in \cite{Jacques2010} is equivalent to
RIP$_{\ell_p^M,\ell_q^N}(K,\delta,\mu)$ with $\kappa=q$ and $\mu$ depending only on $M$, $p$ and
$q$. Moreover, the RIP$_{p,K,\delta'}$ defined in \cite{Berinde2008}
is equivalent to the RIP$_{\ell_p^M,\ell_p^N}(K,\delta,\mu)$ with
$\kappa=1$, $\delta'=2\delta/(1-\delta)$ and $\mu=1/(1-\delta)$. Finally, the
Restricted $p$-Isometry Property proposed in
\cite{chartrand2008restricted} is also equivalent to the
RIP$_{\ell_p^M,\ell_2^N}(K,\delta,1)$ with $\kappa=p$.

In order to study the behavior of the GBPDN program, we are interested
in the embedding induced by $\bs\Phi$ in \eqref{eq:rip-p} of $\cl Y =
\ell_2^N$ into the normed space $\cl X = \ell_{p,\bs
  w}^M=(\Rbb^M,\norm{\!\cdot\!}_{p,\bs w})$, \ie we consider the
RIP$_{\ell_{p,\bs w}^M,\,\ell_2^N}$ property that we write in the
following as RIP$_{p,\bs w}$. The following theorem
  establishes that GBPDN provides stable recovery from distorted
  measurements, if the RIP$_{p,\bs w}$ holds. 
\begin{theorem}
  \label{prop:l2-l1-instance-optimality-GBPDN}
  Let $K\geq 0$, $2\leq p <\infty$ and $\bs \Phi\in\Rbb^{M\times N}$ be a
  RIP$_{p,\bs w}(s,\delta_s,\mu)$ matrix for
  $s\in\{K,2K,3K\}$
such that 
  \begin{equation}
    \label{eq:cond-on-delta-p}
    \delta_{2K}+\sqrt{(1+\delta_K)(\delta_{2K}+\delta_{3K})(p - 1)} < 1/3 ~.   
  \end{equation}
Then, for any signal $\bs x\in\Rbb^N$ observed according to the noisy
sensing model $\bs y=\bs \Phi \bs x + \bs \varepsilon$ with $\|\bs \varepsilon\|_{p,\bs w}\leq
  \epsilon$, the unique solution $\bs x^*=\Delta_{p, \bs w}(\bs y,\bs \Phi,
  \epsilon)$  obeys
\begin{equation}
\label{eq:GBPDN-l2-l1-inst_opt}
\|\bs x^* - \bs x\| \ \leq\ 4\,e_0(K)\ +\ 8\,\epsilon/\mu,
\end{equation}
where $e_0(K) = K^{-\frac{1}{2}}\,\|\bs x -
  \bs x_K\|_1$ is the $K$-term
  $\ell_1$-approximation error.
\end{theorem}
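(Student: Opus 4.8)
The plan is to adapt the now-standard "tube/cone" argument for $\ell_1$-minimization stability (as in Cand\`es' proof of BPDN instance optimality) to the weighted $\ell_{p,\bs w}$ geometry, treating the weighting as absorbed into the matrix via $\bs\Phi' = \diag(\bs w)\bs\Phi$ as the footnote in the paper suggests. The core object is the error vector $\bs h = \bs x^* - \bs x$, and the goal is to bound $\|\bs h\|_2$. I would decompose $\bs h$ on the support $T_0$ of the $K$ largest entries of $\bs x$ and on a partition $T_1, T_2, \dots$ of the complement into blocks of size $K$ in order of decreasing magnitude, writing $T_{01} = T_0 \cup T_1$.

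\medskip

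\textbf{Step 1 (feasibility and cone condition).} Since $\bs x$ is feasible for GBPDN (because $\|\bs y - \bs\Phi\bs x\|_{p,\bs w} = \|\bs\varepsilon\|_{p,\bs w} \le \epsilon$) and $\bs x^*$ is the minimizer, we have $\|\bs x^*\|_1 \le \|\bs x\|_1$. Expanding this over $T_0$ and its complement and using the triangle inequality yields the usual cone constraint $\|\bs h_{T_0^c}\|_1 \le \|\bs h_{T_0}\|_1 + 2\|\bs x - \bs x_K\|_1$. Combined with the standard block-summing estimate $\sum_{j\ge 2}\|\bs h_{T_j}\|_2 \le K^{-1/2}\|\bs h_{T_0^c}\|_1$, this gives control of the ``tail'' energy $\sum_{j\ge 2}\|\bs h_{T_j}\|_2$ in terms of $\|\bs h_{T_{01}}\|_2$ and $e_0(K)$.

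\medskip

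\textbf{Step 2 (tube condition via the RIP).} Both $\bs x^*$ and $\bs x$ satisfy the constraint, so by the triangle inequality in $\|\cdot\|_{p,\bs w}$ we get the tube condition $\|\bs\Phi\bs h\|_{p,\bs w} \le 2\epsilon$. The RIP$_{p,\bs w}$ then lets me pass between $\|\bs\Phi(\cdot)\|_{p,\bs w}$ and $\mu\|\cdot\|_2$ on sparse vectors. The delicate point, specific to $p > 2$, is that the triangle-type inequality linking $\|\bs\Phi \bs h_{T_{01}}\|_{p,\bs w}$ to $\|\bs\Phi\bs h\|_{p,\bs w}$ and the cross terms $\langle \bs\Phi \bs h_{T_{01}}, \bs\Phi\bs h_{T_j}\rangle$ must be replaced by the appropriate $\ell_p$ analogue: the $\ell_2$ parallelogram/near-orthogonality used in the classical proof is not available, and instead one invokes a generalized near-orthogonality valid in $\ell_p$ spaces (the source of the factor $\sqrt{p-1}$, stemming from the modulus of smoothness / $p$-uniform smoothness constant of $\ell_p$). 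I expect this to be the main obstacle, and I would isolate it as a separate inequality bounding $|\langle \bs\Phi\bs u, \bs\Phi\bs v\rangle_{p,\bs w}|$-type cross terms by a product of norms times $(p-1)^{1/2}$ for disjointly supported sparse $\bs u, \bs v$, which is exactly what makes condition \eqref{eq:cond-on-delta-p} carry the $(p-1)$ factor.

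\medskip

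\textbf{Step 3 (closing the bound).} Combining the lower RIP bound $\mu(1-\delta_{2K})^{1/\kappa}\|\bs h_{T_{01}}\|_2 \lesssim \|\bs\Phi\bs h_{T_{01}}\|_{p,\bs w}$ with the tube condition and the cross-term estimates from Step 2 produces an inequality of the form $\|\bs h_{T_{01}}\|_2 \le C_1\epsilon/\mu + C_2\,e_0(K)$, provided the RIP constants satisfy \eqref{eq:cond-on-delta-p} so that the self-referential term can be absorbed to the left. Finally, $\|\bs h\|_2 \le \|\bs h_{T_{01}}\|_2 + \sum_{j\ge 2}\|\bs h_{T_j}\|_2$ together with Step 1 yields $\|\bs h\|_2 \le 4\,e_0(K) + 8\,\epsilon/\mu$ after tracking the explicit constants. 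The bookkeeping to land exactly on the constants $4$ and $8$ is routine once the near-orthogonality constant and the RIP passage are in place, so I would state the $\ell_p$ near-orthogonality lemma carefully and leave the constant-chasing to a direct computation.
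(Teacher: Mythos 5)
Your proposal is correct and follows essentially the same route as the paper: the paper's own proof likewise absorbs the weights by passing to $\diag(\bs w)\bs\Phi$, observing that $\Delta_{p,\bs w}(\bs y,\bs\Phi,\epsilon)=\Delta_{p}(\diag(\bs w)\bs y,\diag(\bs w)\bs\Phi,\epsilon)$ and that RIP$_{p,\bs w}$ of $\bs\Phi$ is RIP$_{\ell_p^M,\ell_2^N}$ of $\diag(\bs w)\bs\Phi$, and then simply cites \cite[Theorem 2]{Jacques2010} --- whose proof is precisely the tube-and-cone decomposition with the $\ell_p$-smoothness (duality-mapping) near-orthogonality lemma carrying the $\sqrt{p-1}$ factor that you outline in Steps 1--3 --- before checking that \eqref{eq:cond-on-delta-p} forces the cited constants $A_p=\tfrac{2(1+C_p-\delta_{2K})}{1-\delta_{2K}-C_p}\leq 4$ and $B_p=\tfrac{4\sqrt{1+\delta_{2K}}}{1-\delta_{2K}-C_p}\leq 8$. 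The only difference is expository: you re-derive inline the unweighted stability result that the paper outsources to its citation.
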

\begin{proof}
If $\bs \Phi$ is
RIP$_{p,\bs w}(s,\delta_s,\mu)$ for $s\in\{K,2K,3K\}$, then, by definition
of the weighted $\ell_{p,\bs w}$-norm, $\diag(\bs
w)\bs\Phi$ is RIP$_{\ell_{p}^M,\ell_2^N}(s,\delta_s,\mu)$. Since $\Delta_{p,\bs w}(\bs
y,\bs\Phi, \epsilon) \ = \Delta_{p}(\diag(\bs
w)\bs y,\diag(\bs
w)\bs\Phi, \epsilon)$, the
stability results proved in \cite[Theorem 2]{Jacques2010} for 
GBPDN$(\ell_{p})$ \footnote{Dubbed BPDQ in
  \cite{Jacques2010}.} shows that
$$
\|\bs x - \bs x^*\| \leq A_p\, e_0(K) + B_p\, \tfrac{\epsilon}{\mu},
$$
with $
A_p =
\tfrac{2(1+C_p-\delta_{2K})}{1-\delta_{2K}-C_p}$, $B_p =
\tfrac{4\sqrt{1+\delta_{2K}}}{1-\delta_{2K}-C_p}$ and $C_p \leq \sqrt{(1+\delta_K)(\delta_{2K}+\delta_{3K})(p - 1)}$ \cite{Jacques2010}.
It is easy to see that if \eqref{eq:cond-on-delta-p} holds, then $A_p
\leq 4$ and $B_p \leq 8$. 
\end{proof}

As we shall see shortly, this theorem may be used to characterize
the impact of measurement corruption due to both 
additive heteroscedastic GGD noise (Section~\ref{sec:case-heter-nois}) as well as those
induced by a non-uniform scalar quantization (Section~\ref{sec:dequ-with-gener}).
Before detailing these two sensing scenarios, we first address the
question of designing matrices satisfying the RIP$_{p,\bs w}$ for $2\leq
p < \infty$.

\subsection{Weighted Isometric Mappings}
\label{sec:weight-isom-mapp}

We will describe a random matrix construction that will satisfy the 
RIP$_{p,\bs w}$ for $1\leq p < \infty$. To quantify when this is possible,
we introduce some properties on the positive weights $\bs w$. 
\begin{definition}
  \label{def:bound-weight-moment}
  A weight generator $\cl W$ is a process (random or deterministic)
  that associates to $M\in\Nbb$ a weight vector $\bs w = \cl W(M)\in
  \Rbb^M$. This process is said to be of Converging Moments (CM) if
  for $p\geq 1$ and all $M\geq M_0$ for a certain $M_0>0$,
\begin{equation}
  \label{eq:bound-weight-moment}
  \rho_p^{\min} \ \leq\ M^{-1/p}\,\|\cl W(M)\|_p\ \leq\
  \rho_p^{\max},
\end{equation}
where $\rho_p^{\min}>0$ and $\rho_p^{\max}>0$ are, respectively, the
largest and the smallest values such that
\eqref{eq:bound-weight-moment} holds. In other words, a CM generator
$\cl W$ is such that $\|\cl W(M)\|^p_p=\Theta(M)$. By extension, we
say that the weighting vector $\bs w$ has the CM property, if it is generated by
some CM weight generator $\cl W$.
\end{definition}

The CM property can be ensured if $\lim_{M \to \infty} M^{-1/p}\,\|\bs
w\|_p$ exists, bounded and nonzero. It is also ensured if the weights
$\{w_i\}_{1\leq i\leq M}$ are taken (with repetition) from a finite
set of positive values. More generally,
if $\{w_i: 1\leq i\leq M\}$ are $\iid$ random variables, 
we have $M^{-1}\,\norm{\bs w}^p_p = \E|w_1|^p$ almost surely by the SLLN. Notice
finally that $\rho^{\max}_p\leq \|\bs
w\|_{\infty}=\rho^{\max}_{\infty}$ since $\|\bs w\|^p_p \leq M \|\bs
w\|_{\infty}^p$, and $\rho^{\min}_p\geq \min_i |w_i|$.

For a weighting vector $\bs w$ having the CM property, we define also
its \emph{weighting dynamic} at moment $p$ as the ratio
$$
\theta_p =
\big(\tfrac{\rho^{\max}_{\infty}}{\rho_p^{\min}}\big)^2.
$$
We will see later that $\theta_p$
directly influences the number of measurements required to guarantee
the existence of RIP$_{p,\bs w}$ random Gaussian matrices.

Given a weight vector $\bs w$, the following lemma (proved in
Appendix~\ref{sec:proof-lemma-strict-bound}) characterizes the
expectation of the $\ell_{p,\bs w}$-norm of a random Gaussian vector.
\begin{lemma}[\bf Gaussian $\ell_{p,w}$-Norm Expectation]
\label{lem:strict-bounds-mu_p}
If $\bs \xi\sim \cl N^M(0,1)$ and if the weights $\bs w$ have the CM property, 
then, for $1\leq p<\infty$ and $\cl Z \sim \cl N(0,1)$,
\begin{equation*}
  \big(1\ +\ 2^{p+1}\,\theta_p^p\, M^{-1}\big)^{\inv{p}-1}\,(\E\norm{\bs
  \xi}_{p,\bs w}^p)^{\inv{p}}\ 
  \leq\ \E\norm{\bs \xi}_{p,\bs w}\ \leq\ (\E\norm{\bs \xi}_{p,\bs
  w}^p)^{\inv{p}} = (\E|\cl Z|^p)^{1/p}\|\bs w\|_p.
\end{equation*}
In particular, $\E\|\bs \xi\|_{p,\bs w} \simeq_M \nu_p\,\|\bs w\|_p
\geq \nu_p\, M^{1/p}\,\rho_p^{\min}$, with $\nu_p^p \eqdef
\E|\cl Z|^p=2^{p/2}\pi^{-1/2}\Gamma(\tfrac{p+1}{2})$.
\end{lemma}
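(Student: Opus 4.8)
First, set $S \eqdef \|\bs\xi\|_{p,\bs w}^p = \sum_{i=1}^M w_i^p|\xi_i|^p$ and $m\eqdef\E S$. Since the $\xi_i$ are i.i.d.\ standard Gaussian, linearity of expectation immediately gives $\E S = \big(\sum_i w_i^p\big)\E|\cl Z|^p = \nu_p^p\|\bs w\|_p^p$, which is the rightmost identity of the statement. The upper bound $\E\|\bs\xi\|_{p,\bs w}\leq(\E\|\bs\xi\|_{p,\bs w}^p)^{1/p}$ is then nothing but Jensen's inequality applied to the convex map $t\mapsto t^p$ ($p\geq 1$) and the nonnegative variable $\|\bs\xi\|_{p,\bs w}=S^{1/p}$. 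All the work is in the lower (reverse-Jensen) bound.

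The obstacle for the lower bound is that $t\mapsto t^{1/p}$ is concave with unbounded curvature at $0$, so a tangent/Taylor control of the concavity deficit fails near $S=0$ and a crude truncation loses the constants. Instead I would use a log-convexity (Hölder) trick. Writing $\alpha=1/p\in(0,1]$ and $\lambda=\tfrac{1}{2-\alpha}\in[\tfrac12,1]$, the factorization $S = S^{\lambda\alpha}\,S^{(1-\lambda)2}$ together with Hölder's inequality at the conjugate exponents $1/\lambda,\,1/(1-\lambda)$ gives $\E S\leq(\E S^\alpha)^{\lambda}(\E S^2)^{1-\lambda}$. Raising to the power $2-\alpha$ (note $\lambda(2-\alpha)=1$) and rearranging yields
$$
\E\|\bs\xi\|_{p,\bs w} = \E S^{1/p}\ \geq\ (\E S)^{1/p}\Big(\tfrac{(\E S)^2}{\E S^2}\Big)^{1-1/p} = \big(\E\|\bs\xi\|_{p,\bs w}^p\big)^{1/p}\,\big(1 + \Var(S)/m^2\big)^{1/p-1},
$$
where the last step uses $\E S^2 = m^2+\Var(S)$. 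Because $1/p-1\leq 0$ and $(1+x)^{1/p-1}$ is decreasing, it now suffices to bound $\Var(S)/m^2\leq 2^{p+1}\theta_p^p M^{-1}$ and substitute.

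For this variance-to-mean ratio I would use independence to write $\Var(S)=\sum_i w_i^{2p}\Var(|\xi_i|^p)=(\nu_{2p}^{2p}-\nu_p^{2p})\,\|\bs w\|_{2p}^{2p}$, so that $\Var(S)/m^2=\tfrac{\nu_{2p}^{2p}-\nu_p^{2p}}{\nu_p^{2p}}\cdot\tfrac{\|\bs w\|_{2p}^{2p}}{\|\bs w\|_p^{2p}}$. The weight factor is handled by the CM property: $\|\bs w\|_{2p}^{2p}\leq M\|\bs w\|_\infty^{2p}\leq M(\rho^{\max}_\infty)^{2p}$ while $\|\bs w\|_p^{2p}\geq M^{2}(\rho_p^{\min})^{2p}$, whence $\|\bs w\|_{2p}^{2p}/\|\bs w\|_p^{2p}\leq M^{-1}(\rho^{\max}_\infty/\rho_p^{\min})^{2p}=M^{-1}\theta_p^{p}$. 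The Gaussian factor is bounded by $\tfrac{\nu_{2p}^{2p}-\nu_p^{2p}}{\nu_p^{2p}}\leq \nu_{2p}^{2p}/\nu_p^{2p}\leq 2^{p+1}$, using $\nu_q^q=2^{q/2}\pi^{-1/2}\Gamma(\tfrac{q+1}{2})$ and a Gamma-function (Stirling) estimate; this sharp absolute-moment ratio is the delicate computation and I expect it to be the main obstacle, since the exponent $p+1$ must be verified to dominate the ratio uniformly in $p$. Combining the two factors closes the lower bound.

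Finally, for the concluding assertion, I would observe that for a fixed $p$ and a fixed CM weighting $\theta_p$ is a constant, so $2^{p+1}\theta_p^p M^{-1}\to 0$ and the factor $(1+2^{p+1}\theta_p^p M^{-1})^{1/p-1}\to 1$ as $M\to\infty$. Squeezing $\E\|\bs\xi\|_{p,\bs w}$ between the two displayed bounds then gives $\E\|\bs\xi\|_{p,\bs w}\simeq_M\nu_p\|\bs w\|_p$, and the stated inequality $\nu_p\|\bs w\|_p\geq\nu_p M^{1/p}\rho_p^{\min}$ is exactly the CM lower bound $\|\bs w\|_p\geq M^{1/p}\rho_p^{\min}$.
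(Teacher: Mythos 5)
Your proposal is correct, and its skeleton coincides with the paper's: Jensen's inequality for the upper bound, a reverse-Jensen inequality with a variance correction for the lower bound, the same computations $\E\norm{\bs \xi}^p_{p,\bs w} = \nu_p^p\|\bs w\|_p^p$ and $\Var\norm{\bs \xi}^p_{p,\bs w} = \|\bs w\|_{2p}^{2p}\Var|\cl Z|^p$, and the same use of the CM property to get the factor $\theta_p^p M^{-1}$. The one genuine difference is how the key lower bound
$\E S^{1/p} \geq (\E S)^{1/p}\big(1+\Var S/(\E S)^2\big)^{\inv{p}-1}$ (with $S=\norm{\bs\xi}^p_{p,\bs w}$) is obtained: the paper simply imports it from \cite[Appendix C]{Jacques2010}, whereas you derive it self-containedly by H\"older interpolation, writing $S=S^{\lambda\alpha}S^{2(1-\lambda)}$ with $\alpha=1/p$, $\lambda=1/(2-\alpha)$, so that $\E S\leq(\E S^{\alpha})^{\lambda}(\E S^2)^{1-\lambda}$ and the claim follows upon raising to the power $2-\alpha$. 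This is a clean and valid argument (the degenerate case $p=1$, where $\lambda=1$, is trivial since then both sides equal $\E S$), and it makes the lemma independent of the earlier paper. Conversely, for the Gaussian moment ratio you only assert $\nu_{2p}^{2p}/\nu_p^{2p}\leq 2^{p+1}$ with a gesture toward Stirling-type estimates; the paper does no better, citing \cite{Jacques2010} for the same fact, so you are on equal footing there, and the bound is indeed true (via the Legendre duplication formula and Wendel's inequality one gets $\sqrt{\pi}\,\Gamma(p+\tinv{2})/\Gamma(\tfrac{p+1}{2})^2\leq 2^{p}$, which suffices). Your direct chain $\|\bs w\|_{2p}^{2p}\leq M\|\bs w\|_\infty^{2p}\leq M(\rho^{\max}_\infty)^{2p}$ also bypasses the paper's intermediate step through $\rho^{\max}_{2p}\leq\rho^{\max}_\infty$, with the same outcome.
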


With an appropriate modification of \cite[Proposition~1]{Jacques2010},
we can now prove the existence of random Gaussian RIP$_{p,\bs w}$
matrices (see Appendix \ref{sec:proof-prop-grip-gauss}).
\begin{proposition}[\bf RIP$_{p,\bs w}$ Matrix Existence]
\label{prop:grip-gauss}
Let $\bs \Phi\sim \cl N^{M\times N}(0,1)$ and some CM weights $\bs
w\in\Rbb^M$. Given $p\geq 1$ and $0\leq \eta < 1$, then there exists a
constant $c>0$ such that $\bs\Phi$ is RIP$_{p,\bs w}(K,\delta,\mu)$ with probability higher than
$1-\eta$ when we have jointly $M\geq 2\,(2\theta_p)^p$, and
\begin{eqnarray}
\label{eq:SGR-RIP-measur-bound}
M^{2/\!\max(2,\,p)}&\geq&c\,\delta^{-2}\,\theta_p\,
\big(K\log [e\tfrac{N}{K}(1+12\delta^{-1})]\ +\ \log\tfrac{2}{\eta}\big).
\end{eqnarray}
Moreover, the value
$\mu=\mu(\lpn{p}{\bs w}{M},\ell^N_2)$ in \eqref{eq:rip-p} is given by
$\mu=\E\norm{\bs \xi}_{p,\bs w}$ for a random vector $\bs\xi\sim \cl
N^M(0,1)$.
\end{proposition}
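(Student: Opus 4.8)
The plan is to adapt the standard concentration-plus-net argument for establishing restricted isometries, transporting it through the diagonal-scaling trick that already linked GBPDN to BPDQ. Since $\|\bs\Phi\bs x\|_{p,\bs w}=\|\diag(\bs w)\bs\Phi\bs x\|_p$, the matrix $\bs\Phi'=\diag(\bs w)\bs\Phi$ has rows $\bs\varphi'_i = w_i\,\bs\varphi_i$, so for a fixed unit vector $\bs x\in\Sigma_K$ the entries of $\bs\Phi'\bs x$ are independent zero-mean Gaussians with standard deviations $w_i$. Hence $\|\bs\Phi'\bs x\|_p^p = \sum_i w_i^p |\xi_i|^p$ where $\xi_i\sim_{\iid}\cl N(0,1)$, which is exactly the weighted quantity controlled by Lemma~\ref{lem:strict-bounds-mu_p}. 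The target normalization is therefore forced to be $\mu=\E\|\bs\xi\|_{p,\bs w}$, and the whole proof reduces to showing that $\|\bs\Phi'\bs x\|_p$ concentrates around its mean uniformly over the sparse sphere.

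Concretely I would proceed in three steps. \textbf{First}, \emph{pointwise concentration}: for a fixed $\bs x$ on the unit sphere of some fixed $K$-dimensional coordinate subspace, bound the upper and lower deviation of $\|\bs\Phi'\bs x\|_p$ from $\E\|\bs\Phi'\bs x\|_p$. The function $\bs\xi\mapsto\|\diag(\bs w)\bs\xi\|_p$ is Lipschitz with constant $\rho^{\max}_\infty=\|\bs w\|_\infty$ with respect to the Euclidean norm on $\Rbb^M$, so Gaussian concentration of Lipschitz functionals gives a subgaussian tail $\Prob\big(|\|\bs\Phi'\bs x\|_p - \E\|\bs\Phi'\bs x\|_p| \geq t\big)\leq 2\exp(-t^2/(2(\rho^{\max}_\infty)^2))$. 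Dividing by $\mu\gtrsim\nu_p M^{1/p}\rho_p^{\min}$ from Lemma~\ref{lem:strict-bounds-mu_p} turns this into a relative deviation bound whose effective sample size is $M^{2/p}/\theta_p$, which is precisely where the $M^{2/\max(2,p)}$ scaling and the $\theta_p$ factor in \eqref{eq:SGR-RIP-measur-bound} originate (for $p\leq 2$ the Lipschitz constant rather than the mean governs the rate, hence the $\max$). \textbf{Second}, \emph{covering}: build a $\delta$-net (in Euclidean norm) of the unit sphere of each $K$-dimensional subspace; there are $\binom{N}{K}$ such subspaces and each admits a net of cardinality $(1+12\delta^{-1})^K$ by the standard volumetric bound, and a union bound over all net points together with $\binom{N}{K}\leq(eN/K)^K$ yields the combinatorial factor $K\log[e\tfrac{N}{K}(1+12\delta^{-1})]+\log\tfrac2\eta$. \textbf{Third}, \emph{net-to-sphere transfer}: pass from control on the net to control on the whole sphere. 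Because $\ell_{p,\bs w}^M$ for $p\geq 2$ is not an inner-product space, this is done exactly as in \cite[Proposition~1]{Jacques2010} via the triangle inequality and an absorption/bootstrap argument in the $\ell_p$ norm, producing the isometry constant $\delta$ (after rescaling the net radius) and the exponent $\kappa$ attached to this geometry.

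\textbf{The main obstacle} I expect is the interplay between the Lipschitz constant and the mean that produces the $M^{2/\max(2,p)}$ exponent and the dependence on the weighting dynamic $\theta_p=(\rho^{\max}_\infty/\rho^{\min}_p)^2$. For $p>2$ the ratio of squared Lipschitz constant to squared mean scales like $(\rho^{\max}_\infty)^2/(\nu_p^2 M^{2/p}(\rho^{\min}_p)^2)=\theta_p\,\nu_p^{-2}M^{-2/p}$, so to make the relative deviation smaller than $\delta$ with the right probability one needs exactly $M^{2/p}\gtrsim\delta^{-2}\theta_p(\cdots)$; getting the constants and the $\nu_p$ factors to line up, and handling the lower-tail deviation (for which the two-sided Lipschitz bound suffices but the $(1+\delta)^{1/\kappa}$ versus $(1-\delta)^{1/\kappa}$ asymmetry of \eqref{eq:rip-p} must be reconciled), is the delicate bookkeeping. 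The side condition $M\geq 2(2\theta_p)^p$ is what guarantees the left inequality of Lemma~\ref{lem:strict-bounds-mu_p} is nonvacuous, i.e. that $\E\|\bs\xi\|_{p,\bs w}$ and $(\E\|\bs\xi\|_{p,\bs w}^p)^{1/p}$ are comparable, so that replacing one by the other in the normalization costs only a factor absorbable into $\delta$; I would invoke that lemma at exactly this juncture to close the argument.
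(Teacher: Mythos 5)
Your proposal follows essentially the same route as the paper: the paper's proof likewise consists of feeding the Lipschitz constant of $\bs u\mapsto\|\bs u\|_{p,\bs w}$ and the lower bound $\mu\geq\tfrac{1}{2}\rho_p^{\min}\nu_p\,M^{1/p}$ from Lemma~\ref{lem:strict-bounds-mu_p} (valid under $M\geq 2(2\theta_p)^p$, invoked for exactly the comparability purpose you describe) into the concentration-plus-covering-net machinery of \cite[Appendix~A]{Jacques2010}, which you merely unpack explicitly. The one imprecision is your claimed Lipschitz constant $\|\bs w\|_\infty$, which holds only for $p\geq 2$; the paper records it as $\|\bs w\|_\infty\,\max(M^{(2-p)/2p},1)$, the extra factor for $1\leq p<2$ being precisely what your parenthetical about the $\max(2,p)$ exponent implicitly requires.
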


The RIP normalizing constant $\mu$ can be bounded owing to Lemma \ref{lem:strict-bounds-mu_p}. 

\begin{remark}
  In the light of Proposition~\ref{prop:grip-gauss}, assumption
  \eqref{eq:cond-on-delta-p} becomes reasonable since following the
  simple argument presented in \cite[Appendix~B]{Jacques2010} the
  saturation of requirement \eqref{eq:SGR-RIP-measur-bound} implies that
  $\delta_{K}$ decays as $O(\sqrt{K\log M}/M^{1/p})$ for RIP$_{p,\bs
    w}$ Gaussian matrices.  Therefore, for any value $p$, it is always
  possible to find a $M$ such that \eqref{eq:cond-on-delta-p}
  holds. However, this is only possible for high oversampling
  situation, \ie for $\Omega(
  (K\log N/K)^{p/2})$ measurements.
\end{remark}

\subsection{GBPDN stabilizes Heteroscedastic GGD Noise}
\label{sec:case-heter-nois}

Consider the following general signal sensing model
\begin{equation}
\label{eq:signal-cs-sensing}
\bs y\ =\ \bs \Phi \bs x\ +\ \bs \varepsilon ~,
\end{equation}
where $\bs \varepsilon\in \Rbb^M$ is the noise vector. For
heteroscedastic GGD noise, each 
$\varepsilon_i$ follows a zero-mean ${\rm GGD}(0,\alpha_i,p)$
distribution with pdf $\propto \exp(-|t/\alpha_i|^p)$, where $p > 0$
is the shape parameter (the same for all $\varepsilon_i$'s), and
$\alpha_i > 0$ the scale parameter \cite{varanasi1989pgg}. It is
obvious that
$$
\E\bs \varepsilon=\bs 0\quad \text{and}\quad \E(\bs \varepsilon\bs \varepsilon^\Tr) = \Gamma(3/p)
(\Gamma(1/p))^{-1}\diag(\alpha_1^2,\,\cdots,\alpha_M^2).
$$

If one sets the weights to $w_i=1/\alpha_i$ in GBPDN$(\ell_{p,\bs
  w})$, it can be seen that the associated constraint corresponds
precisely to the negative log-likelihood of the joint pdf of $\bs
\varepsilon$.
As detailed below, introducing these non-uniform weights $w_i$ leads
to a reduction in the error of the reconstructed signal, relative to
using constant weights. Without loss of generality, we here restrict our analysis to strictly
$K$-sparse $\bs x \in \Sigma_K$, and assume knowledge of bounds
(estimators) for the $\ell_p$ and the $\ell_{p,\bs w}$ norms used for
characterizing $\bs \varepsilon$, \ie we know that $\|\bs
\varepsilon\|_p\simeq_M \epsilon$ and $\|\bs \varepsilon\|_{p,\bs
  w}\simeq_M \epsilon_{\rm st}$ for some $\epsilon,\epsilon_{\rm
  st}>0$ to be detailed later.

In this case, if the random matrix $\bs \Phi\sim\cl N^{M\times N}(0,1)$ is
RIP$_{p,\bs w}(K,\delta,\mu)$ for $p \geq 2$, with $\mu=\E\|\bs \xi\|_p$
for $\bs \xi\sim\cl N^{M}(0,1)$, Theorem~\ref{prop:l2-l1-instance-optimality-GBPDN} asserts that
$$
\|\bs x^* - \bs x\|\ \leq\ B_p\,\epsilon/\mu,
$$
for $\bs x^*=\Delta_{p, \bs 1}(\bs y,\bs \Phi,
  \epsilon)$ and $B_p \simeq_M 8$.
Conversely, for the weights to $w_i=1/\alpha_i$, and assuming $\bs \Phi$ being
RIP$_{p,\bs w}(K,\delta',\mu_{\rm st})$ with
$\mu_{\rm st}=\E\|\bs \xi\|_{p,\bs w}$, we get
$$
\|\bs x^*_{\rm st} - \bs x\|\ \leq\ B'_p\,\epsilon_{\rm st}/\mu_{\rm st},
$$
for $\bs x^*_{\rm st}=\Delta_{p, \bs w}(\bs y,\bs \Phi, \epsilon)$ and
$B'_p \simeq_M 8$.

When the number of measurements $M$ is large, using classical GGD absolute moments formula, the two bounds $\epsilon$
and $\epsilon_{\rm st}$ can be set close to 
$\epsilon^p \mathop{\simeq}_M \sum_i \E|\varepsilon_i|^p = \|\bs \alpha\|_p^p/p$ and $\epsilon_{\rm st}^p \mathop{\simeq}_M \sum_i w_i^p\,\E|\varepsilon_i|^p = M/p$. Moreover, using Lemma~\ref{lem:strict-bounds-mu_p},
$\mu^p \mathop{\simeq}_M \sum_i \E|\xi_i|^p = M \E|\cl Z|^p$ and $\mu_{\rm
  st}^p \mathop{\simeq}_M \E|\cl Z|^p\,\|\bs w\|^p_p$, where $\cl Z \sim \cl N(0,1)$.

\begin{proposition}
  \label{prop:whitening-heter-ggd-nois}
For an additive heteroscedastic noise $\bs \varepsilon\in\Rbb^M$ such that 
$\varepsilon_i \sim_{\rm iid} {\rm GGD}(0,\alpha_i,p)$, setting $w_i = 1/\alpha_i$ provides
$\epsilon_{\rm st}^p/\mu_{\rm st}^p\ \lesssim_M\
\epsilon^p/\mu^p$. Therefore, asymptotically in $M$, GBPDN$(\ell_{p,\bs w})$ has a smaller
reconstruction error compared to GBPDN$(\ell_{p})$ when estimating
$\bs x$ from the sensing model \eqref{eq:signal-cs-sensing}. 
\end{proposition}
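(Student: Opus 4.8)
The plan is to reduce the claimed inequality to a single scalar comparison between the two error-bound ratios and then recognize it as an instance of the Cauchy--Schwarz inequality. Since we restrict to strictly $K$-sparse signals, the term $e_0(K)$ in Theorem~\ref{prop:l2-l1-instance-optimality-GBPDN} vanishes, so the two reconstruction bounds read $\|\bs x^*-\bs x\|\leq B_p\,\epsilon/\mu$ and $\|\bs x^*_{\rm st}-\bs x\|\leq B'_p\,\epsilon_{\rm st}/\mu_{\rm st}$ with $B_p\simeq_M B'_p\simeq_M 8$. Comparing the two decoders therefore amounts to comparing $\epsilon/\mu$ and $\epsilon_{\rm st}/\mu_{\rm st}$, or equivalently their $p$-th powers $\epsilon^p/\mu^p$ and $\epsilon_{\rm st}^p/\mu_{\rm st}^p$, which is exactly the quantity in the statement.

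First I would substitute the asymptotic values established just above the proposition. Using $\epsilon_{\rm st}^p\simeq_M M/p$, $\mu_{\rm st}^p\simeq_M \E|\cl Z|^p\,\|\bs w\|_p^p$, $\epsilon^p\simeq_M\|\bs\alpha\|_p^p/p$ and $\mu^p\simeq_M M\,\E|\cl Z|^p$, the common factor $1/(p\,\E|\cl Z|^p)$ cancels, and the target $\epsilon_{\rm st}^p/\mu_{\rm st}^p\lesssim_M\epsilon^p/\mu^p$ collapses, up to the $(1+o(1))$ factors, to
\[
\frac{M}{\|\bs w\|_p^p}\ \leq\ \frac{\|\bs\alpha\|_p^p}{M},\qquad\text{i.e.}\qquad M^2\ \leq\ \|\bs w\|_p^p\,\|\bs\alpha\|_p^p .
\]
Recalling $w_i=1/\alpha_i$, so that $\|\bs w\|_p^p=\sum_i\alpha_i^{-p}$ and $\|\bs\alpha\|_p^p=\sum_i\alpha_i^{p}$, this reads
\[
M^2\ \leq\ \Big(\sum_i\alpha_i^{-p}\Big)\Big(\sum_i\alpha_i^{p}\Big).
\]

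The final step is to observe that this is precisely Cauchy--Schwarz: setting $a_i=\alpha_i^{p/2}$ and $b_i=\alpha_i^{-p/2}$ gives $a_ib_i=1$, whence $\big(\sum_i a_i^2\big)\big(\sum_i b_i^2\big)\geq\big(\sum_i a_ib_i\big)^2=M^2$, with equality if and only if all $\alpha_i$ coincide (the homoscedastic case, where the two decoders agree). Tracing back through the cancellations yields $\epsilon_{\rm st}^p/\mu_{\rm st}^p\lesssim_M\epsilon^p/\mu^p$; taking $p$-th roots and multiplying by the shared constant $8$ delivers the asserted error reduction. I do not expect a genuine obstacle here, as the entire content is carried by Cauchy--Schwarz. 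The only care required is the asymptotic bookkeeping with the $\simeq_M$ and $\lesssim_M$ symbols: one must check that chaining $\epsilon_{\rm st}^p/\mu_{\rm st}^p\simeq_M A$, $A\leq B$, and $B\simeq_M\epsilon^p/\mu^p$ indeed produces $\lesssim_M$, which follows because it forces the $\limsup$ of the ratio $(\epsilon_{\rm st}^p/\mu_{\rm st}^p)\big/(\epsilon^p/\mu^p)\simeq_M A/B$ to be at most one.
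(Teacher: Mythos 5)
Your proof is correct and follows essentially the same route as the paper: both substitute the asymptotic expressions for $\epsilon,\epsilon_{\rm st},\mu,\mu_{\rm st}$ stated just before the proposition and reduce the claim to the elementary inequality $M^2 \leq \big(\sum_i \alpha_i^{p}\big)\big(\sum_i \alpha_i^{-p}\big)$. The only cosmetic difference is that you establish this via Cauchy--Schwarz with $a_i=\alpha_i^{p/2}$, $b_i=\alpha_i^{-p/2}$, whereas the paper invokes Jensen's inequality (the arithmetic--harmonic mean comparison) --- two standard proofs of the very same fact.
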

\begin{proof}
  Let us observe that $ \epsilon_{\rm st}^p/\mu_{\rm st}^p \simeq_M
  M (p\,\E|\cl Z|^p\,\|\bs w\|^p_p)^{-1} = (p\,\E|\cl Z|^p)^{-1}
  (\tinv{M}\sum_i \tinv{\alpha_i^{p}})^{-1}$. By the Jensen inequality, 
  $(\tinv{M}\sum_i \tinv{\alpha_i^{p}})^{-1} \leq \tinv{M}\sum_i\alpha_i^p$,
  so that $ \epsilon_{\rm st}^p/\mu_{\rm st}^p
  \lesssim_M \tinv{p}\,(\E|\cl Z|^p)^{-1} \|\bs \alpha\|^{p}_p/M =
  \epsilon^p/\mu^p$.
\end{proof}

The price to pay for this stabilization is an increase of the weighting
dynamic $\theta_p=(\tfrac{\rho^{\max}_\infty}{\rho^{\min}_p})^2$
defined in Proposition~\ref{prop:grip-gauss}, which implies an
increase in the number of measurements $M$ needed to ensure that
the RIP$_{p,\bs w}(K,\delta,\mu)$ is satisfied.

\begin{example}
  Let us consider a simple situation where the $\alpha_i$'s take only
  two values, \ie $\alpha_i \in \{1, H\}$ for some $H \geq 1$. Let us
  assume also that the
  proportion of $\alpha_i$'s equal to $H$ converges to 
  $r\in [0,1]$ with $M$ as $|\inv{M}\#\{i: \alpha_i=
  H\}\ -\ r| = O(M^{-1})$. In this case, the stabilizing weights are $w_i = 1/\alpha_i \in
  \{1,1/H\}$. An easy computation provides
\begin{align*}
\fs E&\eqdef \tfrac{\epsilon^p}{\mu^p}\ \mathop \simeq_{M}\
\tinv{p}\nu_p^{-p}\, \big(\,r\,H^p
+ (1-r)\,\big),\\
\fs E_{\rm st}&\eqdef \tfrac{\epsilon_{\rm st}^p}{\mu_{\rm st}^p}\ \mathop \simeq_{M}\
\tinv{p}\nu_p^{-p}\, \big(\,r\,H^{-p}
+ (1-r)\,\big)^{-1},  
\end{align*}
so that, the ``stabilization gain'' with respect to an unstabilized setting can be
quantified by the ratio 
$$
(\tfrac{\fs E}{\fs E_{\rm st}})^{\inv{p}}\ {\mathop \simeq_{M}}\ \big(\,r\,H^{-p}
+ (1-r)\,\big)^{\inv{p}}\big(\,r\,H^{p}
+ (1-r)\,\big)^{\inv{p}}\ {\mathop \simeq_{M,H}}\ \big(r\,(1-r)\big)^{\inv{p}}\,H.
$$ 
We see that the stabilization provides a clear gain which increases as the measurements get very unevenly corrupted, \ie when $H$ is large. Interestingly, the higher $p$ is,
the less sensitive is this gain to $r$. We also observe that the overhead in the
number of measurements between the stabilized and the unstabilized
situations is related to 
$$
\theta^{p/2}_p\ = \big(\tfrac{\rho^{\max}_\infty}{\rho^{\min}_p}\big)^{p}\ {\mathop \simeq_{M}}\ \big(\,r\,H^{-p}
+ (1-r)\,\big)^{-1}\ {\mathop \simeq_{M,H}}\ (1-r)^{-1}.
$$

The limit case where $H\gg 1$ can be interpreted as ignoring $r$ percent of the
measurements in the data fidelity constraint, keeping only those for which the noise is not
dominating. In that case, the sufficient condition \eqref{eq:SGR-RIP-measur-bound} in Proposition~\ref{prop:grip-gauss} for $\bs\Phi$ to be RIP$_{p,\bs w}$ tends to $\theta_p^{-p/2}M = (1-r)M = \Omega\big((K \log N/K)^{p/2}\big)$ which
  is consistent with the fact that on average only fraction $1-r$ of the
  $M$ measurements significantly participate to the CS scheme, \ie
  $M'=(1-r)M$ must satisfy the common RIP requirement.
  For $p=2$, this is somehow related to the democratic property of RIP
  matrices \cite{LasBouDav::2009::Demcracy-in-action}, \ie the fact that a reasonable number of rows can be discarded 
  from a matrix while preserving the RIP. This property was successfully used for
  discarding saturated CS measurements in the case of a limited
  dynamic quantizer \cite{LasBouDav::2009::Demcracy-in-action}. 
\end{example}

\section{Dequantizing with Generalized Basis Pursuit DeNoise }
\label{sec:dequ-with-gener}

Let us now instantiate the use of GBPDN to the reconstruction of signals in the QCS scenario
  defined in Section\ref{sec:non-uniform-scalar}. Under the quantization formalism defined in
  Lemma~\ref{lemma:bound-lpw-gaussian-vector} and for Gaussian
  matrices $\bs \Phi$, the factor $\epsilon/\mu$ in \eqref{eq:GBPDN-l2-l1-inst_opt} can be shown to
  decrease as $1/\sqrt{p+1}$ asymptotically in $M$ and $B$.  This
  asymptotic and \emph{almost sure} result which relies on the SLLN (see Appendix \ref{proof:towards-quant-cons}) suggests
  increasing $p$ to the highest value allowed by
  \eqref{eq:cond-on-delta-p} in order to decrease the GBPDN
  reconstruction error.

\begin{proposition}[\bf Dequantizing Reconstruction Error]
  \label{prop:towards-quant-cons}
  Given $\bs x \in \Rbb^N$ and $\bs \Phi \sim \cl N^{M\times N}(0,1)$, 
  assume that the entries of $\bs z = \bs\Phi\bs x$ are $\iid$ realizations from $\cl Z \sim \cl
  N(0,\sigma_0^2)$. We take the corresponding optimal compressor
  function $\cl G$ defined in \eqref{eq:optimal-compressor} and the
  $p$-optimal $B$-bits scalar quantizer $\cl Q_p$ as defined in
  \eqref{eq:Qp}. Then, the ratio ${\epsilon}/{\mu}$ given in
  \eqref{eq:GBPDN-l2-l1-inst_opt} is asymptotically and almost surely bounded by
$$
\tfrac{\epsilon}{\mu}\ \lesssim_{B,M}\ c'\,2^{-B}\,
\,\tfrac{(p+1)^{-\inv{2p}}}{\sqrt{p+1}}\leq c'\,\frac{2^{-B}}{\sqrt{p+1}}.
$$
with $c'=(9/8)(e\pi/3)^{1/2}$.
\end{proposition}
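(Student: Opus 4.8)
The plan is to evaluate the numerator $\epsilon=\epsilon_p$ and the denominator $\mu$ of the ratio in \eqref{eq:GBPDN-l2-l1-inst_opt} in closed form, asymptotically in $B$ and $M$, and then simplify. For the numerator I read off $\epsilon_p$ from Lemma~\ref{lemma:bound-lpw-gaussian-vector}: taking the $p$-th root of $\epsilon_p^p$ gives $\epsilon_p\simeq_{B,M} M^{1/p}\,\tfrac12\,2^{-B}\,(p+1)^{-1/p}\,\fnorm{\pdf_0}_{1/3}^{1/p}$ with $\fnorm{\pdf_0}_{1/3}=2\pi\sigma_0^2\,3^{3/2}$. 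For the denominator, Proposition~\ref{prop:grip-gauss} identifies $\mu=\E\|\bs\xi\|_{p,\bs w}$ with $\bs\xi\sim\cl N^M(0,1)$, and Lemma~\ref{lem:strict-bounds-mu_p} reduces this to $\mu\simeq_M\nu_p\,\|\bs w\|_p$, where $\nu_p^p=2^{p/2}\pi^{-1/2}\Gamma(\tfrac{p+1}{2})$. Everything thus hinges on estimating $\|\bs w\|_p^p=\sum_i\cl G'(\cl Q_p[z_i])^{p-2}$.

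For that estimate I would combine the SLLN with the HRA. Since the $z_i$ are i.i.d.\ $\pdf_0$, the SLLN gives $\tfrac1M\|\bs w\|_p^p\to\E[\cl G'(\cl Q_p[\cl Z])^{p-2}]$ almost surely. To evaluate this expectation I split $\Rbb$ via Lemma~\ref{lemma:useful-ineq}: on the vanishing-bin regime $\cl T$, the level $\cl Q_p[\cl Z]$ lies in the narrow bin of $\cl Z$ and $\cl G'$ is essentially constant across that bin by \eqref{lemma:useful-ineq-eq8}, so the expectation becomes a Riemann sum converging to $\int_{\Rbb}\cl G'(t)^{p-2}\pdf_0(t)\,\ud t$; on $\cl T^c$ the factor $\cl G'$ is uniformly $O(2^{-\beta B})$ and carries vanishing Gaussian mass, so its contribution is negligible. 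Substituting the optimal qpdf $\cl G'=\fnorm{\pdf_0}_{1/3}^{-1/3}\pdf_0^{1/3}$ turns the integral into the Gaussian power integral $\int\pdf_0^{(p+1)/3}$, which is explicit, giving $\E[\cl G'(\cl Q_p[\cl Z])^{p-2}]\simeq_B(2\pi\sigma_0^2)^{-(p-2)/2}\,3^{(3-p)/2}\,(p+1)^{-1/2}$.

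Assembling $\epsilon_p$ and $\mu=\nu_p\|\bs w\|_p$, the factors $M^{1/p}$ cancel and the powers of $2$, $3$, $\pi$, $2\pi\sigma_0^2$ and $(p+1)$ collect, leaving $\tfrac{\epsilon}{\mu}\simeq_{B,M}\,\mathrm{(const)}\cdot 2^{-B}\,(p+1)^{-1/(2p)}\,\pi^{1/(2p)}\,\Gamma(\tfrac{p+1}{2})^{-1/p}$, which already exhibits the $(p+1)^{-1/(2p)}$ prefactor of the target. The main obstacle is then purely analytic: to convert the $\Gamma$-factor into the advertised $1/\sqrt{p+1}$ decay with an explicit constant $c'$, I would apply a Stirling estimate with explicit remainder for $\Gamma(\tfrac{p+1}{2})$ to bound $\pi^{1/(2p)}\Gamma(\tfrac{p+1}{2})^{-1/p}\sqrt{p+1}$ by an absolute constant uniformly in $p\geq 2$ (its maximum occurring at the endpoint $p=2$), yielding $\tfrac{\epsilon}{\mu}\lesssim_{B,M}c'\,2^{-B}\,\tfrac{(p+1)^{-1/(2p)}}{\sqrt{p+1}}$ and, since $(p+1)^{-1/(2p)}\leq 1$, the coarser $c'\,2^{-B}/\sqrt{p+1}$. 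A secondary care point is that two limits act at once — $M\to\infty$ (SLLN, and the correction $(1+2^{p+1}\theta_p^pM^{-1})^{1/p-1}\to1$ in Lemma~\ref{lem:strict-bounds-mu_p}) and $B\to\infty$ (HRA, and the $\cl T^c$ remainder $\to0$) — so I must check via Lemma~\ref{lemma:useful-ineq} that both correction terms vanish, licensing the joint $\simeq_{B,M}$ asymptotics.
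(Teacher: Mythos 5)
Your proposal is correct and is essentially the paper's own proof: you take $\epsilon_p$ from Lemma~\ref{lemma:bound-lpw-gaussian-vector}, reduce $\mu=\E\|\bs \xi\|_{p,\bs w}$ to $\nu_p\,\|\bs w\|_p$ via Lemma~\ref{lem:strict-bounds-mu_p}, evaluate $\tinv{M}\|\bs w\|_p^p$ by the SLLN combined with the $\cl T$/$\cl T^c$ splitting of Lemma~\ref{lemma:useful-ineq} (which is precisely the content of Lemma~\ref{lemma:bounding-a-tail-after-T} with moment exponent $0$ and $\gamma=p-2$, invoked rather than re-derived in the paper's proof), and compute the same explicit Gaussian power integrals, arriving at the same expression $\epsilon/\mu \simeq_{B,M} 2^{-(B+1)}(6\pi\sigma_0^2)^{1/2}(p+1)^{-1/(2p)}/\nu_p$. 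The only deviation is the final, purely analytic step: the paper quotes the moment bound $(\E|\cl Z|^p)^{1/p}\geq c\sqrt{p+1}$ with $c=8\sqrt{2}/(9\sqrt{e})$ from \cite{Jacques2010}, whereas you rederive an equivalent uniform-in-$p$ lower bound on $\nu_p$ by Stirling's formula with explicit remainder (correctly noting the worst case is $p=2$) --- same role, same conclusion, up to a possibly different numerical value of $c'$.
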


Notice that, under HRA and for large $M$, it is possible to provide a
rough estimation of the weighting dynamic $\theta_p$
when the weights are those provided by the D$_p$C constraints. Indeed,
since $w_i(p)=\cl G'(\cl Q_p[y_i])^{(p-2)/p}$ and $\cl G' =
\gamma_{0,\sqrt 3\sigma_0}$, we find 
\begin{align*}
\|\bs w\|^p_p&= \sum_i \cl G'(\cl Q_p[y_i])^{p-2}\ \simeq_M\ M\,\sum_k \cl
G'^{p-2}(\omega_{k,p})\,p_k\\
&\simeq_{B,M}\, M\, (2\pi 3\sigma_0^2)^{(2-p)/2}(2\pi \sigma_0^2)^{-1/2} \sum_k
\gbinw_k\,\exp(-\tinv{2}\omega_{k,p}^2\tfrac{p+1}{3\sigma^2_0})\\
&\simeq_{B,M}\, M\, (2\pi 3\sigma_0^2)^{(2-p)/2}(2\pi \sigma_0^2)^{-1/2}
(2\pi \tfrac{3\sigma_0^2}{p+1})^{1/2}\\
&=\ M\,(2\pi\sigma_0^2)^{(2-p)/2} 3^{(3-p)/2} (p+1)^{-1/2},
\end{align*}
where we recall that $p_k=\int_{\cl R_k}\pdf_0(t) \ud t \simeq_{B}
\pdf_0(c')\gbinw_k$, for any $c' \in \cl R_k$ (see the proof of Lemma~\ref{lemma:bounding-a-tail-after-T}).

Moreover, using \eqref{lemma:useful-ineq-eq8} and since one of the two smallest quantization bins is $\cl R_{\cl
  B/2}=[0,\tau_{\cl B/2})$,
$$
\|\bs w\|^p_\infty \simeq_B (\binw/\gbinw_{\cl B/2})^{p-2} = (\binw/\cl G^{-1}(1/2 +
\binw))^{p-2} \simeq_B (2\pi3\sigma_0^2)^{(2-p)/2}.
$$
Therefore, estimating $\theta_p^{p}$ with $M^2\|\bs w\|^{2p}_\infty/\|\bs
w\|^{2p}_p$, we find
$$
\theta_p^{p/2}\ \simeq_{B,M}\ \sqrt{(p+1)/3}.
$$

Therefore, at a given $p\geq 2$, since (\ref{eq:SGR-RIP-measur-bound})
involves that $M$ evolves like $\Omega(\theta_p^{p/2} (K \log
N/K)^{p/2})$, using the weighting induced by GBPDN($\ell_{p,\bs w}$)
requires collecting $\sqrt{(p+1)/3}$ times more measurements than
GBPDN($\ell_{p}$) in order to ensure the appropriate RIP$_{p, \bs w}$ property.
This represents part of the price to pay for guaranteeing bounded
reconstruction error by adapting to non-uniform quantization.
\medskip

\noindent \textbf{Dequantizing is Stabilizing Quantization Distortion:}

In connection with the procedure developed in
Section~\ref{sec:case-heter-nois}, the weights and the $p$-optimal levels
introduced in Lemma~\ref{lemma:bound-lpw-gaussian-vector} can be
interpreted as a ``stabilization'' of the quantization distortion seen as a heteroscedastic noise.
This means that, asymptotically in $M$, selecting these weights and levels, \emph{all quantization regions $\cl R_k$ contribute equally to the $\ell_{p,{\bs w}}$ distortion measure}.

To understand this fact, we start by studying the following relation shown in the proof of Lemma~\ref{lemma:bound-lpw-gaussian-vector}
(see Appendix~\ref{proof-lemma-bound-lpw-gaussian-vector}):
\begin{align}
\label{eq:contrib-to-lpw-distor}
\|\cl Q_p[\bs z] - \bs z\|^p_{p,\bs w}\ &\mathop{\simeq}_{M}\ M\,
\sum_{k}\ [\cl G'(\omega_{k,p})]^{p-2} \int_{\cl R_k}|t -
\omega_{k,p}|^p\,\pdf_0(t)\,\ud t.
\end{align}
Using the threshold $T(B)=\Theta(\sqrt B)$ and $\cl T=[-T(B),T(B)]$ as defined in Lemma
\ref{lemma:useful-ineq}, the proof of
Lemma~\ref{lemma:bounding-a-tail-after-T} in
Appendix~\ref{proof-lemma-bound-lpw-gaussian-vector} shows that
\begin{align}
\label{eq:contrib-to-lpw-distor}
\|\cl Q_p[\bs z] - \bs z\|^p_{p,\bs w}\ &\mathop{\simeq}_{M,B}\ M\,
\sum_{k: \cl R_k \subset \cl T}\ [\cl G'(\omega_{k,p})]^{p-2} \int_{\cl R_k}|t -
\omega_{k,p}|^p\,\pdf_0(t)\,\ud t,\\
&\mathop{\simeq}_{B}\ M\,
\sum_{k: \cl R_k \subset \cl T}\ [\cl G'(\omega_{k,p})]^{p-2} \tfrac{\gbinw_k^{p+1}}{(p+1)\,2^p} 
\,\pdf_0(\omega_{k,p}),
\end{align}
using \eqref{lemma:useful-ineq-eq7}. However, using
\eqref{lemma:useful-ineq-eq8} and the relation $\cl G' =
\pdf_0^{1/3}/\fnorm{\pdf_0}^{1/3}_{1/3}$, we find
$\gbinw^3_k\,\pdf_0(\omega_{k,p}) \simeq_{B} \binw^3\, \fnorm{\pdf_0}_{1/3}$.
Therefore, each term of the sum in~\eqref{eq:contrib-to-lpw-distor}
provides a contribution
$$
[\cl G'(\omega_{k,p})]^{p-2} \tfrac{\gbinw_k^{p+1}}{(p+1)\,2^p} 
\,\pdf_0(\omega_{k,p})\ \mathop{\simeq}_{B,M}\ \fnorm{\pdf_0}_{1/3}\,\tfrac{\binw^{p+1}}{(p+1)2^p},
$$    
which is independent of $k$!
This phenomenon is well known for $p=2$ and may actually serve for
defining $\cl G'$ itself \cite{gray1998quantization}. The fact that
this effect is preserved for $p\geq 2$ is a surprise for us.

\section{Numerical Experiments}
\label{sec:numer-exper}

We first describe how to numerically solve the GBPDN optimization
problem using a primal-dual convex optimization scheme, then
illustrate the use of GBPDN for stabilizing heteroscedastic Gaussian
noise on the CS measurements. Finally, we apply GBPDN for
reconstructing signals in the quantized CS scenario described in
Section~\ref{sec:non-uniform-scalar}.

\subsection{Solving GBPDN}
\label{sec:whitening-example}
The optimization problem GBPDN$(\ell_{p,\bs w})$ is a special instance of the general form
\begin{equation}
\label{eq:minfgL}
\min_{\bs u \in \Rbb^N} f(\bs u) + g({\bs L \bs u}) ~,
\end{equation}
where $f$ and $g$ are closed convex functions that are not infinite
everywhere (\ie proper functions), and $\bs L = \diag(\bs w) \bs \Phi$
is a bounded linear operator, with $f(\bs u)\eqdef\norm{\bs u}_1$, and
$g(\bs v) \eqdef \imath_{{\mathbb B}_p^\epsilon}(\bs v - \bs y)$ where
$\imath_{{\mathbb B}_p^\epsilon}(\bs v)$ is the indicator function of
the $\ell_p$-ball ${\mathbb B}_p^\epsilon$ centered at zero and of
radius $\epsilon$, \ie $\imath_{{\mathbb B}_p^\epsilon}(\bs v)=0$ if
$\bs v \in {\mathbb B}_p^\epsilon$ and $+ \infty$ otherwise. For the
case of GBPDN$(\ell_{p,\bs w})$, both $f$ and $g$
are non-smooth but the associated proximity operators (to be defined shortly) can be computed
easily. This will allow to minimize the GBPDN$(\ell_{p,\bs w})$
objective by calling on proximal splitting algorithms.

Before delving into the details of the minimization splitting
algorithm, we recall some results from convex analysis. The
\emph{proximity operator} \cite{moreau1962fcd} of a proper closed
convex $f$ is defined as the unique solution
\[
\prox_{f}(\bs u) = \argmin_{\bs z} \tinv{2}\norm{\bs z - \bs u}^2 + f(\bs z).
\]
If $f=\imath_C$ for some closed convex set $C$, $\prox_{f}$ is
equivalent to the orthogonal projector onto $C$, $\proj_C$. $f^*$ is
the \textit{Legendre-Fenchel conjugate} of $f$. For $\lambda > 0$, the
proximity operator of $\lambda f^*$ can be deduced from that of
$f/\lambda$ through Moreau's identity
\[
\prox_{\lambda f^*}(\bs u) = \bs u  - \lambda \prox_{\lambda^{-1} f}(\bs u/\lambda) ~.
\]

Solving \eqref{eq:minfgL} with an arbitrary bounded linear operator
$\bs L$ can be achieved using primal-dual methods motivated by the
classical Kuhn-Tucker theory. Starting from methods to solve saddle
function problems such as the Arrow-Hurwicz method
\cite{ArrowHurwicz58}, this problem has received a lot of attention recently,
\eg \cite{ChenTeboulle94,Chambolle2010,Briceno-AriasCombettes11}.
In this paper, we use the relaxed Arrow-Hurwicz algorithm as revitalized recently in
\cite{Chambolle2010}. Adapted to our problem, its steps are summarized in Algorithm~\ref{algo:gbpdn}.
\begin{algorithm}[h]
  \normalsize
  \noindent{\bf{Inputs:}} Measurements $\bs y$, sensing matrix $\bs \Phi$, weights $\bs w$.\\
  \noindent{\bf{Parameters:}} Iteration number $N_{\mathrm{iter}}$, $\theta \in [0,1]$, step-sizes $\sigma > 0$ and $\tau > 0$ with $\tau\sigma\norm{\bs w}_\infty^2\norm{\bs \Phi}^2 < 1$. \\
  \noindent{\bf{Main iteration:}} \\
  \For{$k=0$ {\bf{to}} $N_{\mathrm{iter}}-1$}{
  \begin{itemize}
  \item Update the dual variable:
  	\[
	{\bs v}_{k+1} = \prox_{\sigma g^*}({\bs v}_{k} + \sigma {\bs L} \overline{\bs u}_{k}) ~.
	\]
   \item Update the primal variable:
  	\[
	{\bs u}_{k+1} = \prox_{\tau f}({\bs u}_{k} - \tau {\bs L}^\Tr {\bs v}_{k+1}) ~.
	\]
   \item Approximate extragradient step:
    \[
	\overline{\bs u}_{k+1} =  {\bs u}_{k+1} + \theta({\bs u}_{k+1} - {\bs u}_{k})~.
	\]
  \end{itemize}
}
  \noindent{\bf{Output:}} Signal $\bs u_{N_{\mathrm{iter}}}$.
  \caption{Primal-dual scheme for solving GBPDN$(\ell_{p,\bs w})$.}
  \label{algo:gbpdn}
\end{algorithm}

A sufficient condition for the sequences of Algorithm~\ref{algo:gbpdn}
to converge is to choose $\sigma$ and $\tau$ such that
$\tau\sigma\norm{\bs w}_\infty^2\norm{\bs \Phi}^2 < 1$. It has been
shown in \cite[Theorem 1]{Chambolle2010} that under this condition and
for $\theta=1$, the primal sequence $(\bs u_k)_{k\in\mathbb{N}}$
converges to a (possibly strict) global minimizer of
GBPDN$(\ell_{p,\bs w})$, with the rate $O(1/k)$ in ergodic sense on the partial
duality gap.

\paragraph*{Proximity operator of $f$} For $f(\bs u)=\|\bs u\|_1$, $\prox_{\tau f}(\bs u)$ is the popular
component-wise soft-thresholding of $\bs u$ with threshold $\tau$.

\paragraph*{Proximity operator of $g$} Recall that $g(\bs v) = \imath_{{\mathbb B}_p^\epsilon}(\bs v - \bs y)$. Using Moreau's identity above, and proximal calculus rules for translation and scaling, we have 
\[
\prox_{\sigma g^*}(\bs v) = {\bs v} - \sigma {\bs y} - \proj_{\imath_{{\mathbb B}_p^{\sigma\epsilon}}}(\bs v - \sigma \bs y) ~.
\]
It remains to compute the orthogonal projection $\proj_{{\mathbb B}^1_p}$ to get $\proj_{{\mathbb B}^{\sigma\epsilon}_p} = \sigma\epsilon\proj_{{\mathbb B}^1_p}(\cdot/(\sigma\epsilon))$. For $p=2$ and $p=+\infty$, this projector has an easy closed form. For $2<p<+\infty$, we used the Newton method we proposed in \cite{Jacques2010} for solving the related Karush-Kuhn-Tucker system which is reminiscent of the strategy underlying sequential quadratic programming.

\subsection{Gaussian Noise Stabilization Illustration}
\label{sec:whitening-example}

We explore numerically the impact of using non-uniform weights (\eg
stabilizing the measurement noise) for signal reconstruction when the
CS measurements are corrupted by heteroscedastic Gaussian noise, as
discussed in Section~\ref{sec:case-heter-nois}. This illustrates for
$p=2$ both the gain induced by stabilizing the sensing noise and the
increase of measurements necessary for observing this gain.

In this illustration, we set the problem dimensions to $N=1024$,
$K=16$, and let the oversampling factor be in
$M/K\in\{5,10,\cdots,50\}$. The $K$-sparse unit norm signals were
generated independently according to a Bernoulli-Gaussian mixture model with
$K$-length support picked uniformly at random in $[N]$, and the
non-zero signal entries drawn from $\cl N(0,\sigma_s^2)$ with
$\sigma_s^2\simeq 1/K$.  Noisy measurements were simulated by setting
$\bs y = \bs\Phi\bs x+ \bs \varepsilon$, with
$\varepsilon_i\sim_{\iid} \cl N(0,\sigma_i^2)$ and $\bs\Phi\sim \cl
N^{M\times N}(0,1)$. The heteroscedastic behavior of $\bs \varepsilon$
has been designed so that $\sigma_i \sim_{\iid} \cl
U([\sigma_0-\delta_0,\sigma_0+\delta_0])$ with $\sigma_0=0.1$ and
$\delta_0=0.6\,\sigma_0$.

Two reconstruction methods were tested: one \emph{with} and the
other \emph{without} stabilizing the noise variance. In the first
case, the weights have been set to $w_i=1/\sigma_i$, while in the
second $\bs w=\bs 1$. Since the purpose of this analysis is not
focused on the design of efficient noise power estimators, $\epsilon$
and $\epsilon_{\rm st}$ have been simply set by an oracle to
$\epsilon_{\rm st} = \|\bs y - \bs\Phi\bs x\|_{2,\bs w}$ and
$\epsilon=\|\bs y - \bs\Phi\bs x\|_2$.

Given the parameters above, we compute the weighting dynamic $\theta_p
\simeq_M \frac{M\,\E \|\bs w\|^2_\infty}{\E \|\bs w\|^2_2} =
\frac{\sigma_0 + \delta_0}{\sigma_0 - \delta_0} = 4$, and the average
stabilization gain should be (see Proposition~\ref{prop:whitening-heter-ggd-nois})
$$
20 \log_{10} \|\bs x - \bs x^*\|/\|\bs x - \bs
x^*_{\rm st}\|\ \simeq_M\ 20 \log_{10} (\epsilon \|\bs
w\|)/(\epsilon_{\rm st}\sqrt{M})\ < 2.43\, \text{dB}.
$$

Numerically, GBPDN$(\ell_{2,\bs w})$ and GBPDN$(\ell_{2})\equiv\ $BPDN have
been solved with the method described in
Section~\ref{sec:whitening-example} until the relative $\ell_2$-change in
the iterates was smaller than $10^{-6}$ (with a maximum of 2000
iterations). Reconstruction results were averaged over 50 experiments.
In Fig.~\ref{fig:rec-GBPDN-whiten-snr}, the reconstruction signal-to-noise ratio (SNR) of the
stabilized reconstruction is clearly superior to the unstabilized
one and this gain increases with increasing oversampling ratio $M/K$. This SNR gain is displayed
in Fig.~\ref{fig:rec-GBPDN-whiten-snrgain}. The dashed horizontal line
represents the theoretical prediction of $2.43$ dB which turns to be an
upper-bound on the numerically observed gain.

\begin{figure}[!htb]
  \centering
  \subfigure[\label{fig:rec-GBPDN-whiten-snr}]
  {
    \includegraphics[height=5.5cm]{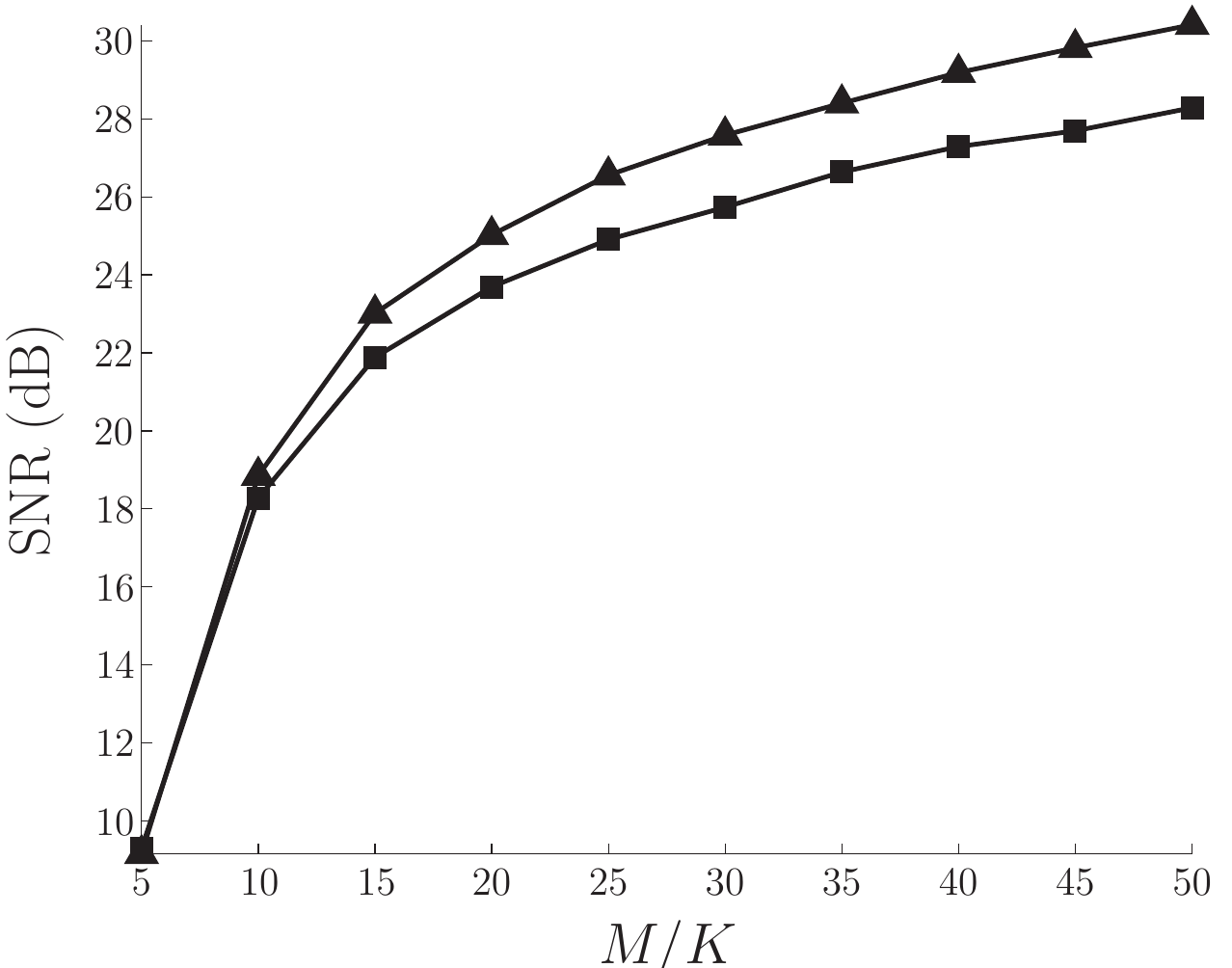}
  }
  \subfigure[\label{fig:rec-GBPDN-whiten-snrgain}]
  {
    \includegraphics[height=5.5cm]{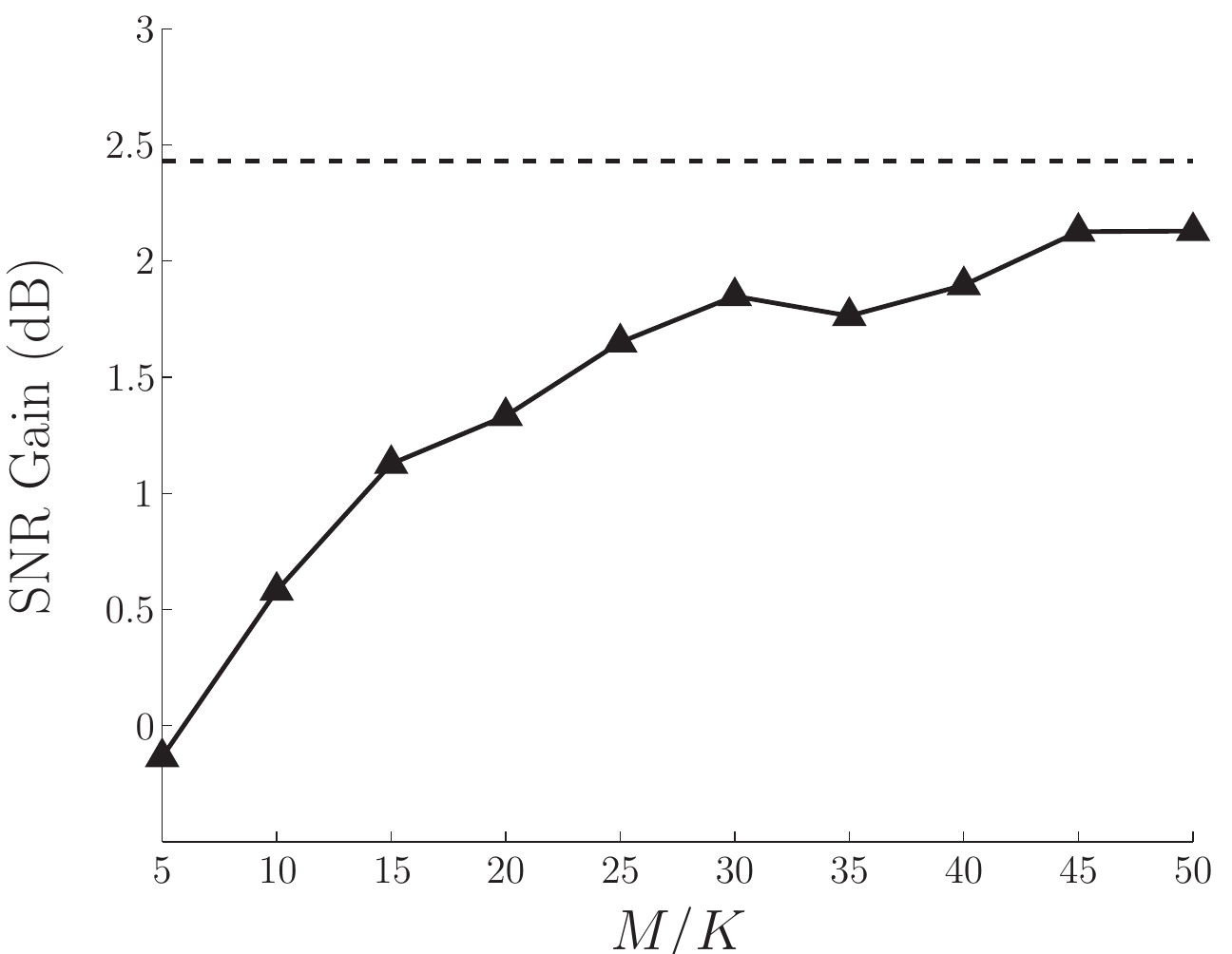}
  }
  \caption{Stabilized versus unstabilized reconstruction using
    GBPDN$(\ell_{2,\bs w})$ and BPDN respectively. (a) The reconstruction SNR using
    stabilized (triangles) and unstabilized (squares) methods. (b) Observed (triangles)
    and theoretically predicted (dashed) SNR gain at $2.43$ dB brought by 
    stabilization.}
  \label{fig:rec-GBPDN-whiten}
\end{figure}

\subsection{Non-Uniform Quantization}
\label{sec:non-unif-quant}

We describe several simulations challenging the power of GBPDN for
reconstructing sparse signals from non-uniformly quantized
measurements when the weights and the $p$-optimal levels of
Lemma~\ref{lemma:bound-lpw-gaussian-vector} are combined. Several
configurations have been tested for different $p\geq 2$, oversampling
ratio $M/K$, number of bits $B$ and for non-uniform and uniform
quantization.

For this experiment, we set the key dimensions to $N=1024, K=16, B=4$,
and the $K$-sparse unit norm signals have been generated as in the
previous section.
The oversampling ratio was taken as $M/K \in \{10,15,\,\cdots,45\}$,
$p\in \{2,4,\cdots,10\}$ and the matrix $\bs \Phi$ has
been drawn randomly as $\bs \Phi \sim \cl N^{M\times N}(0,1)$. The
non-uniform quantization of the measurements $\bs \Phi \bs x$ was
defined with a compressor $\cl G$ associated to $\gamma_{0,\sigma_0}$
according to \eqref{eq:optimal-compressor}. The weights $\bs w$ were
computed as in Lemma~\ref{lemma:bound-lpw-gaussian-vector}, and the
$p$-optimal levels using the numerical method described in
Appendix~\ref{sec:comp-omeg-p}.

For the sake of completeness, we also compared some results to those
obtained for a uniformly quantized CS scenario. In this case, the
measurements $\bs z = \bs \Phi \bs x$ are quantized as $y_i = \binw'
\lfloor z_i / \binw' \rfloor + \binw'/2$, the quantization bin width
$\binw'=\binw'(B)$ has been set by dividing regularly the interval
$[-\|\bs z\|_\infty, \|\bs z\|_\infty]$ into the same number of bins
as those used for the non-uniform quantization.

Again, GBPDN was solved with the primal-dual
scheme described in Section~\ref{sec:whitening-example} until either
the relative $\ell_2$-change in iterates was smaller than
$10^{-6}$ or a maximum number of iterations of 2000 was
reached. Finally, all the reconstruction results were averaged over 50
replications of sparse signals for each combination of parameters.

Fig.~\ref{fig:rec-GBPDN-nu-quant-snr} displays the evolution of the
signal reconstruction quality, as measured by the SNR, as a function
of the oversampling factor $M/K$. We clearly see a reconstruction
quality improvement with respect to both the uniformly quantized CS
scheme (dashed curve) and to increasing values of $p$ and $M/K$. This
last effect is better analyzed in
Fig.~\ref{fig:rec-GBPDN-nu-quant-snrgain} where the SNR gain with
respect to $p=2$ for various values of $p$ is shown. As predicted by
Proposition~\ref{prop:towards-quant-cons}, we clearly see that, as
soon as the ratio $M/K$ is large, taking higher $p$ value leads to a
higher reconstruction quality than the one obtained for $p=2$
(BPDN). Moreover, Fig.~\ref{fig:rec-GBPDN-nu-quant-snrgain} confirms
that when $p$ increases, the minimal measurement number inducing a
positive SNR gain increases. For instance, to achieve a positive gain
at $p=4$, we must have $M/K \geq 15$, while at $p=10$, $M/K$ must be
higher than 20. At $p$ fixed, the reconstruction quality increased
also monotonically with~$M/K$.

\begin{figure}[!Htb]
  \centering
  \subfigure[\label{fig:rec-GBPDN-nu-quant-snr}]
  {
    \includegraphics[height=5.5cm]{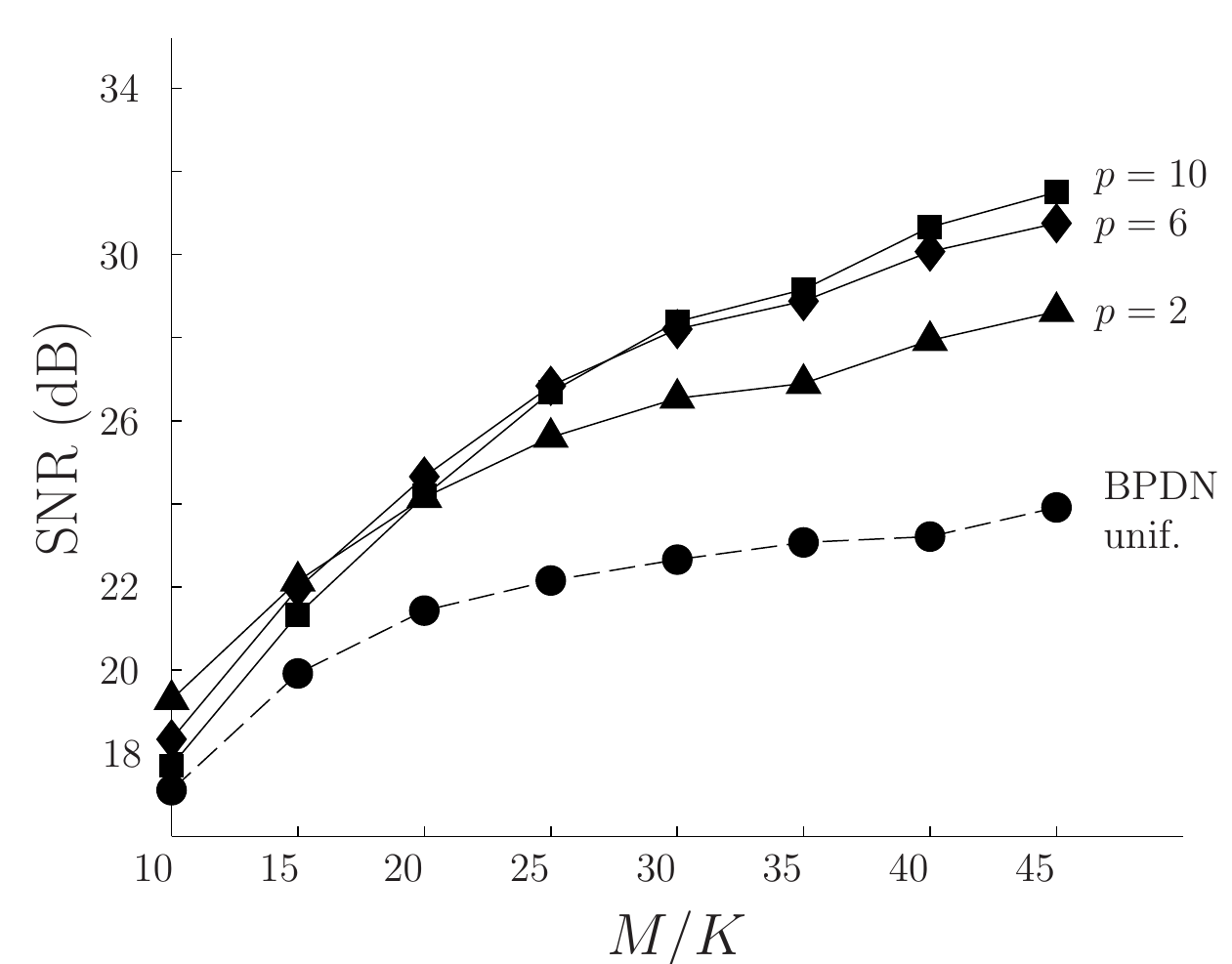}
  }
  \subfigure[\label{fig:rec-GBPDN-nu-quant-snrgain}]
  {
    \includegraphics[height=5.5cm]{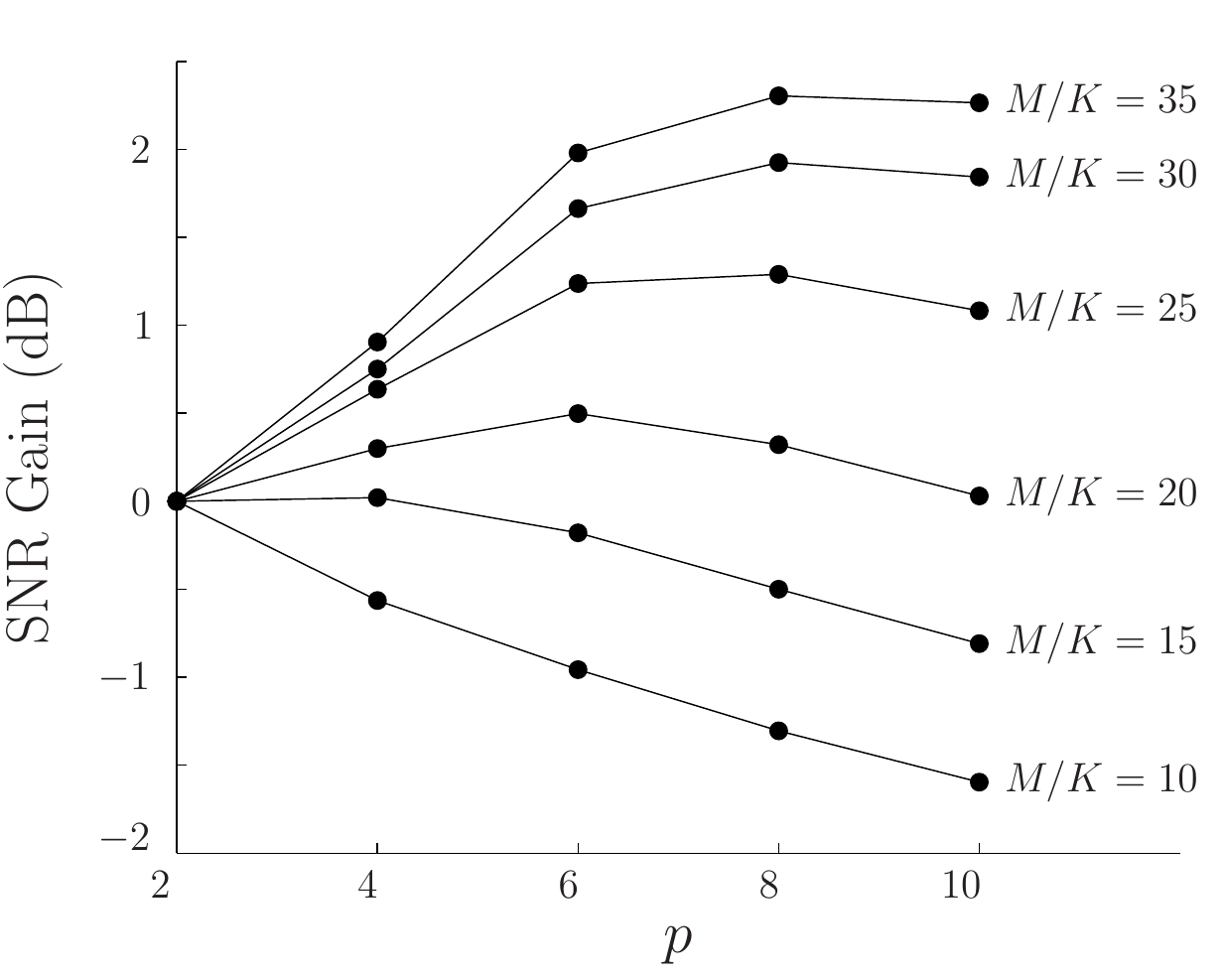}
  }
  \caption{Reconstruction SNR of GBPDN$(\ell_{p,\bs w})$. (a): the dashed line represents
    the reconstruction quality achieved from uniformly quantized CS
    and BPDN. (b) SNR gain versus $p$ for each tested oversampling ratio $M/K$.}
  \label{fig:rec-GBPDN-nu-quant}
\end{figure}

We observe that, given the oversampling ratio, these experimental results allow to increase $p$ to a
greater extent than would be allowed by our theory deployed in Section~\ref{sec:dequ-with-gener}. In particular, the sufficient condition \eqref{eq:SGR-RIP-measur-bound} dictated by Proposition~\ref{prop:grip-gauss} requires the number of measurements $M$ to scale as $K^{p/2}$ (ignoring times the usual logarithmic terms) in order to ensure
the RIP$_{p,\bs w}$. This would imply an exponential increase in the number of measurements needed as $p$ increases. However, from Fig.~\ref{fig:rec-GBPDN-nu-quant-snrgain}, one can see that for $M/K=15$,
$p=4$ was the largest value before performance
starts degrading. With $M/K=20$, $p$ could be increased to 6 before
degradation, and to 8 before degradation with $M/K=30$. At least for
this example, we do not observe such a severe exponential dependence in the
needed oversampling in order to benefit from error decrease when
increasing $p$.

In Fig.~\ref{fig:GBPDN-QC}, the quantization consistency
of the reconstructed signals is tested by looking at the histogram of
$\binw^{-1} (\cl G(\bs\Phi\bs x^*) - \cl G(\bs y))$. We do observe
that this histogram is closer to a uniform distribution for $p=10$
than for $p=2$, in good agreement with the ``companded'' quantizer
definition $\cl Q = \cl G^{-1}\circ\cl Q_\binw \circ \cl G$ showing
that in the domain compressed by $\cl G$, this quantizer is similar to
a uniform one.

\begin{figure}[!Htb]
  \centering
  \subfigure[\label{fig:GBPDN-QC-p2}]
  {
    \includegraphics[height=4cm]{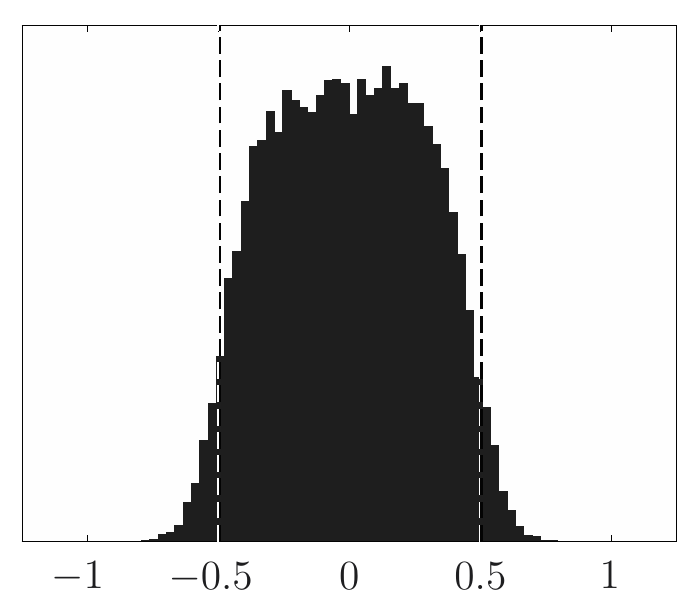}
  }
  \subfigure[\label{fig:GBPDN-QC-p10}]
  {
    \includegraphics[height=4cm]{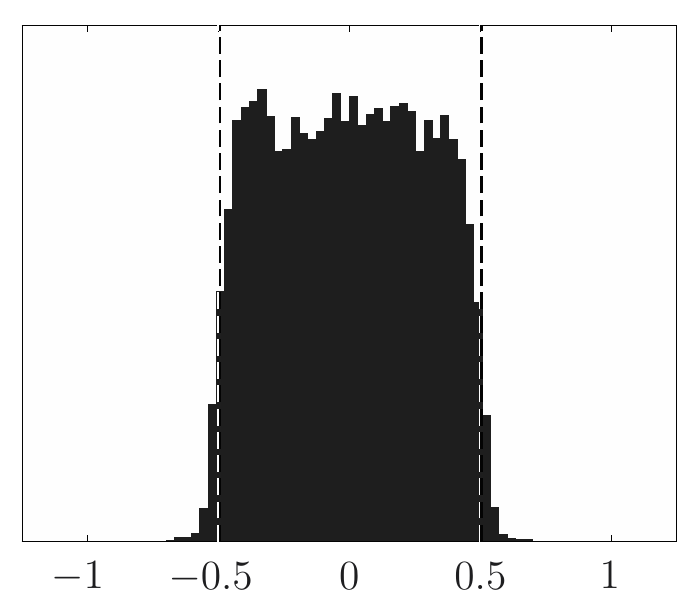}
  }
  \caption{Testing the Quantization Consistency (QC). (a) Histogram of
    the components of
    $\binw^{-1}\,\big(\cl G(\bs \Phi \bs x^*) - \cl G(\bs y)\big)$
    for $p=2$ and $M/K=40$ (averaged over 100 trials). (b) Same histogram for $p=10$. The QC is
    better respected in this case.
    \label{fig:GBPDN-QC}
  }
\end{figure}

As a last test, we have more thoroughly compared a uniform
quantization scenario described in the experimental setup above with
the BPDQ$_p$ decoder developed in~\cite{Jacques2010} to the
non-uniform case studied in this paper. More precisely,
Fig.~\ref{fig:rec-GBPDN-nu-quant-exp-vs-th} shows the reconstruction
SNR gain between non-uniform and uniform quantization at various $p$,
\ie SNR$($GBPDN$(\ell_{p,\bs w}))$ $-$ SNR$($BPDQ$_p)$. We see that, at
a given $p$, this gain improves with $M/K$, and the highest SNR improvement values
are obtained for $p=2$. This points the fact that for $p\neq 2$, the
quantization scheme is not optimized for reducing the $\ell_{p,\bs
  w}$-norm distortion. This would require us to change the
quantization scenario by not only optimizing the $p$-optimal levels
but also the thresholds. This will be be left to a future research.

\begin{figure}[!Htb]
  \centering
  {
    \includegraphics[height=5.5cm]{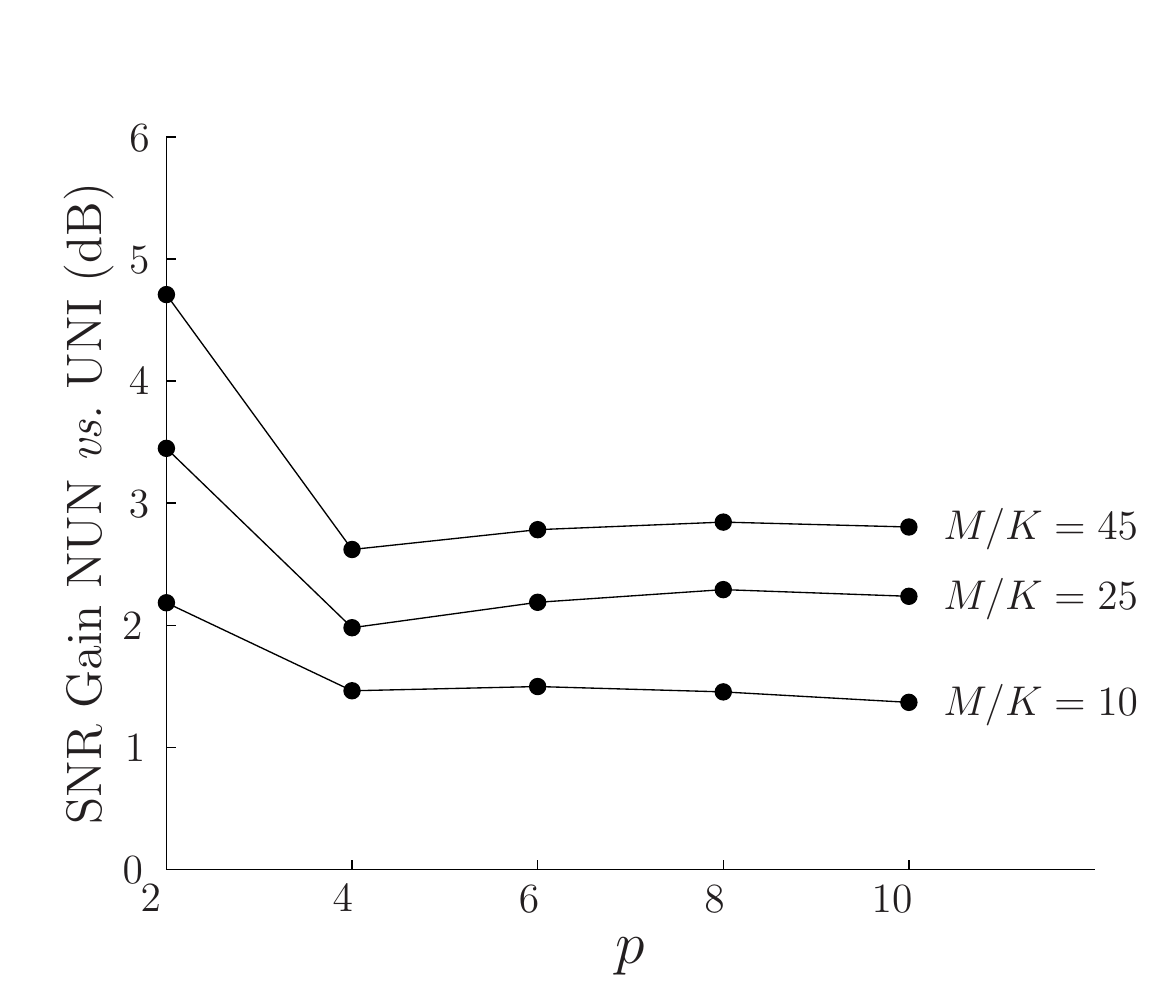}
  }
  \caption{\label{fig:rec-GBPDN-nu-quant-exp-vs-th} 
Reconstruction gain (in dB) between non-uniform or uniform
    quantization at the same $p$.}
\end{figure}

\section{Conclusion}
\label{sec:conclusion}

In this paper, we have shown that, when the compressive measurements
of a sparse or compressible signal are non-uniformly quantized, there
is a clear interest in modifying the reconstruction procedure by
adapting the way it imposes the reconstructed signal to ``match'' the
observed data. In particular, we have proved that in an oversampled
scenario, replacing the common BPDN $\ell_2$-norm constraint by a
weighted $\ell_p$-norm adjusted to the non-uniform nature of the
quantizer reduces the reconstruction error by a factor of
$\sqrt{p+1}$. Moreover, we showed that this improvement stems from a
stabilization of the quantization distortion seen as an additive
heteroscedastic GGD noise on the measurements.

In  future work, we will investigate if the quantization scheme can
also be optimized with respect to the proposed reconstruction
procedure, \ie by adjusting the thresholds for minimizing the weighted
$\ell_p$-distortion at a fixed bit budget.  

\appendices

\section{Preparatory  Lemmata}
\label{sec:aux-lemmata}

This appendix contains several key lemmata that are useful for the
subsequent proofs developed in the other appendices.

The first lemma will serve later to evaluate asymptotically the
contribution of each quantization bin to the global quantizer
distortion measured with $\ell_{p,\bs w}$-norm when a Gaussian source
(with pdf $\pdf_0$) is quantized.

\begin{lemma}
\label{lemma:optimal-p-moment-bound}
Given $a,b\in\Rbb$ with $a<b$, $n\in\Nbb\setminus\{0\}$ and a Gaussian pdf $\pdf_0=\gamma_{0,\sigma_0}$. Let
$\lambda_n$ be the (unique) minimizer of $\min_{\lambda\in[a,b]}\ \int_a^b |t-\lambda|^n\ \pdf_0(t)\,\ud t$.
Then, 
\begin{align}
\label{eq:lower-bound-intab}
\int_a^b |t-\lambda_n|^n\ \pdf_0(t)\,\ud t\ &\geq\ \tfrac{(b -
a)^{n+1}}{(n+1)\,2^{n+1}}\,(1+ (\tfrac{\fs D}{\fs C})^{-(n+1)/n})\,{\fs C},\\ 
\label{eq:upper-bound-intab}
\int_a^b |t-\lambda_n|^n\ \pdf_0(t)\,\ud t\ &\leq\ \tfrac{(b -
a)^{n+1}}{(n+1)\,2^{n+1}}\,(1+ (\tfrac{\fs D}{\fs C})^{(n+1)/n})\,{\fs
D},
\end{align}\vspace{-6mm}
\begin{align}
\label{eq:behav-lambda-finite-bin}
\tfrac{1}{1+\fs S^{1/n}}(\fs S^{1/n}a + b)&\ \leq\ 
\lambda_n\ \leq\ \tfrac{1}{1+\fs S^{1/n}}(a + \fs S^{1/n}b),
\end{align}
with $\fs C\eqdef\min_{t\in[a,b]}\pdf_0(t)$, $\fs
D\eqdef\max_{t\in[a,b]}\pdf_0(t)$ and $\fs S=\fs D/\fs C$. 
\end{lemma}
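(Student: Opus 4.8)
The plan is to treat the minimization as a smooth convex problem, extract its stationarity condition, and then replace the Gaussian weight $\pdf_0$ by its extreme values $\fs C$ and $\fs D$ on $[a,b]$, so that everything reduces to elementary integrals of $|t-\lambda|^n$ against the constant densities $\fs C$ and $\fs D$. Writing $F(\lambda) = \int_a^b |t-\lambda|^n\,\pdf_0(t)\,\ud t$, I would first note that $F$ is (strictly) convex and differentiable on $[a,b]$, with
\[
F'(\lambda) = n\int_a^\lambda (\lambda-t)^{n-1}\pdf_0(t)\,\ud t - n\int_\lambda^b(t-\lambda)^{n-1}\pdf_0(t)\,\ud t.
\]
Since $F'(a)<0$ and $F'(b)>0$, the (unique) minimizer $\lambda_n$ is interior and satisfies the balance condition $\int_a^{\lambda_n}(\lambda_n-t)^{n-1}\pdf_0 = \int_{\lambda_n}^b(t-\lambda_n)^{n-1}\pdf_0$.

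Setting $L = \lambda_n - a$ and $R = b-\lambda_n$, I would bound $\pdf_0$ from below by $\fs C$ on the left integral and from above by $\fs D$ on the right, giving $\fs C\,L^n \leq \fs D\,R^n$; the opposite pairing gives $\fs D\,L^n \geq \fs C\,R^n$. Together these read $\fs S^{-1/n}\leq L/R \leq \fs S^{1/n}$, and substituting $L=\lambda_n-a$, $R=b-\lambda_n$ with $L+R=b-a$ yields \eqref{eq:behav-lambda-finite-bin} directly.

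For the integral bounds I would sandwich $\pdf_0$ once more and reduce everything to $\int_a^b|t-\lambda_n|^n\,\ud t = (L^{n+1}+R^{n+1})/(n+1)$. Putting $\rho = L/R \in [\fs S^{-1/n},\fs S^{1/n}]$ and using $L+R=b-a$, this equals $\tfrac{(b-a)^{n+1}}{n+1}\,g(\rho)$ with $g(\rho) = (\rho^{n+1}+1)/(1+\rho)^{n+1}$. A short log-derivative computation gives $g'(\rho)$ the sign of $\rho^n-1$, so $g$ is symmetric under $\rho\mapsto 1/\rho$, minimized at $\rho=1$ with value $2^{-n}$ and maximized at the endpoints with value $(1+\fs S^{(n+1)/n})/(1+\fs S^{1/n})^{n+1}$. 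For the upper bound, $F(\lambda_n)\leq \fs D\tfrac{(b-a)^{n+1}}{n+1}g(\rho)$, and plugging in the endpoint maximum then relaxing $(1+\fs S^{1/n})^{n+1}\geq 2^{n+1}$ (valid since $\fs S\geq 1$) produces \eqref{eq:upper-bound-intab}. For the lower bound, $F(\lambda_n)\geq \fs C\tfrac{(b-a)^{n+1}}{n+1}g(\rho)\geq \fs C\tfrac{(b-a)^{n+1}}{(n+1)2^n}$, and since $\fs S\geq 1$ forces $1+\fs S^{-(n+1)/n}\leq 2$, this quantity dominates the claimed \eqref{eq:lower-bound-intab}.

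The one genuinely delicate point is the monotonicity analysis of $g$; the rest is bookkeeping around $\fs C\leq\pdf_0\leq\fs D$. I expect the care to be needed in two places: (i) confirming the stationarity identity covers the $n=1$ case, where the integrand is only piecewise differentiable and $F'=0$ becomes a weighted-median balance; and (ii) correctly identifying that $g$ is \emph{maximized} at the endpoints $\rho=\fs S^{\pm 1/n}$ and \emph{minimized} at the interior point $\rho=1$ — reversing these would swap the two bounds. Notably, both $2^{n+1}$ denominators come from the single relaxation $2\leq 1+\fs S^{\pm 1/n}$, which is exactly why the stated bounds are slightly looser than the sharp endpoint values the argument actually delivers.
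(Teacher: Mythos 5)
Your proof is correct and is essentially the paper's own argument: the same balance (stationarity) identity, the same sandwiching $\fs C\leq\pdf_0\leq\fs D$ on the two sides of that identity yielding $\fs S^{-1/n}\leq(\lambda_n-a)/(b-\lambda_n)\leq\fs S^{1/n}$ and hence \eqref{eq:behav-lambda-finite-bin}, followed by the same reduction to bounding $(\lambda_n-a)^{n+1}+(b-\lambda_n)^{n+1}$. The only difference is bookkeeping in that final step: your calculus analysis of $g(\rho)=(\rho^{n+1}+1)/(1+\rho)^{n+1}$ (endpoint maximum, interior minimum at $\rho=1$) replaces the paper's shortcut of combining the ratio constraint with $\min(\lambda_n-a,\,b-\lambda_n)\leq(b-a)/2$ for the upper bound (and the analogous $\max$ observation for the lower bound), and both routes land on exactly the stated inequalities.
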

\begin{proof}
  Let us first show the upper bound \eqref{eq:upper-bound-intab}.  In
  Lemma~\ref{lemma:p-optimal-levels} and its proof, it was show that
  $\lambda_n$ exists and is unique, \ie the minimization problem is
  well-posed. Furthermore, $\lambda_n$ satisfies
\begin{equation*}
\fs A := \int_a^{\lambda_n} (\lambda_n - t)^{n-1}\ \pdf_0(t)\,\ud t\ =\ \int_{\lambda_n}^b (t-\lambda_n)^{n-1}\ \pdf_0(t)\,\ud t.
\end{equation*}
Since $\pdf_0(t)\in [\fs C, \fs D]$ for $t\in[a,b]$, we have 
$(\lambda_n - a)^n\,\fs C \leq n\fs A \leq (\lambda_n
- a)^n\,\fs D$ and $(b - \lambda_n)^n\,\fs C \leq n\fs A \leq
\tinv{n}(b - \lambda_n)^n\,\fs D$.  
This implies $(\lambda_n - a)^n \geq
\big(\tfrac{\fs C}{\fs D}\big) (b -
\lambda_n)^n$ and $(b - \lambda_n)^n \geq
\big(\tfrac{\fs C}{\fs D}\big)
(\lambda_n - a)^n$, from which we easily
deduce~\eqref{eq:behav-lambda-finite-bin}.

Since 
$\int_a^b |t-\lambda_n|^n\ \pdf_0(t)\,\ud t = \int_a^{\lambda_n}
(\lambda_n-t)^n\ \pdf_0(t)\,\ud t\ + \int_{\lambda_n}^b
(t-\lambda_n)^n\ \pdf_0(t)\,\ud t$,
we find $\int_a^b |t-\lambda_n|^n\ \pdf_0(t)\,\ud t  \leq \tinv{n+1}\,[(\lambda_n - a)^{n+1} + (b - \lambda_n)^{n+1}]\,\fs D$.
From $((\lambda_n-a)/(b-\lambda_n))^n\in [\fs C/\fs D, \fs D/\fs C]$, we find that $\int_a^b
|t-\lambda_n|^n\ \pdf_0(t)\,\ud t$ is smaller than 
$$
\tinv{n+1}\min\big(\,(\lambda_n - a)^{n+1}, (b - \lambda_n)^{n+1}\big) \left[1 + \big(\tfrac{\fs
    D}{\fs C}\big)^{(n+1)/n}\right]\, \fs D.
$$
This provides \eqref{eq:upper-bound-intab} since $\min(\lambda_n-a,b-\lambda_n)\leq (b-a)/2$.
The bound \eqref{eq:lower-bound-intab} is obtained similarly. 
\end{proof}

The following lemma presents a generalization of ``$Q$-function like''
bounds for lower partial moments of a Gaussian pdf. 

\begin{lemma}
\label{lemma:bounding-lpm}
Let $\lambda > 0$, $n\in\Nbb$ and $\varphi = \gamma_{0,1}$. Let us
define  $Q_n(\lambda) \eqdef \int_\lambda^{+\infty}\ (t - \lambda)^n
\varphi(t)\ \ud t$. Then, $Q_n(\lambda) = \Theta(\lambda^{-(n+1)}\varphi(\lambda))$. More precisely,
$
\tfrac{n!\,\lambda^{n+1}}{\Pi_{k=1}^{n+1}(\lambda^2 +
k)}\,\varphi(\lambda)\ \leq\ Q_n(\lambda)\ \leq\ \tfrac{n!}{\lambda^{n+1}}\,\varphi(\lambda). 
$
\end{lemma}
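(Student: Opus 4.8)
The plan is to reduce everything to a single three-term recurrence in $n$, obtained by one integration by parts, and then to bootstrap from the classical two-sided estimate of the Gaussian tail $Q_0$. First I would record the recurrence. Using $t\,\varphi(t)=-\varphi'(t)$ and integrating by parts in $\int_\lambda^\infty (t-\lambda)^n\, t\,\varphi(t)\,\ud t$ — the boundary terms vanish for $n\geq 1$, since $(t-\lambda)^n$ kills the endpoint $t=\lambda$ while $\varphi$ decays at $+\infty$ — gives $n\,Q_{n-1}(\lambda)$ on one side, whereas splitting $t=(t-\lambda)+\lambda$ on the other side gives $Q_{n+1}(\lambda)+\lambda Q_n(\lambda)$. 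Hence
\[
Q_{n+1}(\lambda)+\lambda\,Q_n(\lambda)\ =\ n\,Q_{n-1}(\lambda),\qquad n\geq 1.
\]
The base case $n=0$ is the standard bound $\tfrac{\lambda}{\lambda^2+1}\,\varphi(\lambda)\leq Q_0(\lambda)\leq \tfrac1\lambda\varphi(\lambda)$: the upper half follows from $\int_\lambda^\infty \varphi\leq \int_\lambda^\infty \tfrac t\lambda\,\varphi=\tfrac1\lambda\varphi(\lambda)$ (using $t\geq\lambda$), and the lower half from integrating $\tfrac{\ud}{\ud t}\big[\varphi(t)/t\big]=-(1+t^{-2})\varphi(t)$ and bounding $1+t^{-2}\leq 1+\lambda^{-2}$ on $[\lambda,\infty)$.

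For the upper bound the quickest route is the substitution $u=t-\lambda$, which yields $Q_n(\lambda)=\varphi(\lambda)\int_0^\infty u^n e^{-\lambda u}e^{-u^2/2}\,\ud u$; dropping $e^{-u^2/2}\leq 1$ and evaluating the resulting Gamma integral gives $Q_n(\lambda)\leq \varphi(\lambda)\,n!/\lambda^{n+1}$ at once. Equivalently, since $Q_{n+1}\geq 0$ the recurrence forces $Q_n\leq \tfrac n\lambda Q_{n-1}$, which telescopes from the base case to the same estimate.

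The lower bound is the only delicate point, and it is where the main obstacle lies: the recurrence contains a \emph{subtraction} ($\lambda Q_n = nQ_{n-1}-Q_{n+1}$), so naively inserting the loose one-sided bounds does not close the induction. The trick I would use is to eliminate $Q_{n+1}$ with the \emph{same} nonnegativity-derived inequality taken one index higher, namely $Q_{n+1}\leq \tfrac{n+1}{\lambda}Q_n$. Substituting this into $\lambda Q_n = nQ_{n-1}-Q_{n+1}$ gives $\lambda Q_n \geq nQ_{n-1}-\tfrac{n+1}{\lambda}Q_n$, i.e.
\[
Q_n(\lambda)\ \geq\ \frac{n\,\lambda}{\lambda^2+n+1}\,Q_{n-1}(\lambda).
\]
Telescoping this from $Q_0\geq \tfrac{\lambda}{\lambda^2+1}\varphi(\lambda)$, the numerators $n,n-1,\dots,1$ assemble into $n!$, the factors $\lambda$ into $\lambda^{n+1}$, and the denominators $\lambda^2+j+1$ for $j=1,\dots,n$ together with the base denominator $\lambda^2+1$ build up exactly $\prod_{k=1}^{n+1}(\lambda^2+k)$, producing the claimed $Q_n(\lambda)\geq n!\,\lambda^{n+1}\varphi(\lambda)/\prod_{k=1}^{n+1}(\lambda^2+k)$. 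Finally, the $\Theta\big(\lambda^{-(n+1)}\varphi(\lambda)\big)$ statement is read off directly from the two displayed bounds, since both the upper expression and the lower expression are of order $\lambda^{-(n+1)}\varphi(\lambda)$. Once the step inequality $Q_{n+1}\leq \tfrac{n+1}\lambda Q_n$ is used to remove the subtraction, the product denominator emerges automatically from the telescoping, so no sharper intermediate estimates are needed.
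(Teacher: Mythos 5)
Your proof is correct, and its key step takes a genuinely different route from the paper's. The two upper-bound arguments essentially coincide: the paper bounds $1\leq t/\lambda$ on the integration domain and integrates by parts to get $Q_n\leq \tfrac{n}{\lambda}Q_{n-1}$, which is exactly your ``drop $Q_{n+1}\geq 0$ from the recurrence'' step, while your Gamma-integral shortcut $Q_n(\lambda)=\varphi(\lambda)\int_0^\infty u^n e^{-\lambda u}e^{-u^2/2}\,\ud u\leq \varphi(\lambda)\,n!/\lambda^{n+1}$ is a clean extra alternative. The lower bound is where you diverge. The paper introduces the two-parameter family $Q_{n,k}(\lambda)\eqdef\int_\lambda^{+\infty}(t-\lambda)^n\,t^{-k}\varphi(t)\,\ud t$ and the identity $-\big(\varphi(t)/t^{k+1}\big)'=\big(1+\tfrac{k+1}{t^2}\big)\varphi(t)/t^{k}$, which yields $\big(1+\tfrac{k+1}{\lambda^2}\big)Q_{n,k}\geq n\,Q_{n-1,k+1}$: each descent step trades one power of $(t-\lambda)$ for one inverse power of $t$, and the chain terminates with a direct bound on $Q_{0,n}$. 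You instead stay within the single-index family, derive the three-term recurrence $Q_{n+1}+\lambda Q_n=nQ_{n-1}$, and neutralize the subtraction by inserting the nonnegativity-derived bound $Q_{n+1}\leq\tfrac{n+1}{\lambda}Q_n$ one index higher, giving $Q_n\geq \tfrac{n\lambda}{\lambda^2+n+1}Q_{n-1}$. Both telescopings assemble the identical product $\prod_{k=1}^{n+1}(\lambda^2+k)$, only in different orders: in the paper the factor $\lambda^2+n+1$ comes from the terminal bound on $Q_{0,n}$ and the factors $\lambda^2+1,\dots,\lambda^2+n$ from the descent, whereas for you $\lambda^2+1$ comes from the base case $Q_0$ and $\lambda^2+2,\dots,\lambda^2+n+1$ from the descent. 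Your argument is more elementary and self-contained (one recurrence plus the classical $Q_0$ bounds, no auxiliary family), while the paper's $Q_{n,k}$ machinery is heavier but makes the monotone structure of the $t^{-k}$-weighted partial moments explicit and bounds those mixed quantities along the way.
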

This lemma generalizes the well known bound on $Q=Q_0$, namely $\tfrac{\lambda}{\lambda^2 +
1}\,\varphi(\lambda) \leq Q(\lambda) \leq \tfrac{1}{\lambda}\,\varphi(\lambda)$.
\begin{proof}
The proof involves integration by parts, the identities $-\varphi'(u) = u\varphi(u)$ and
$({\varphi(u)}/{u^n})' = (1+ \tfrac{n}{u^2})
\tfrac{\varphi(u)}{u^{n-1}}$.  
Therefore, the upper bound is a simple consequence of
$$
Q_n(\lambda) \leq \tinv{\lambda}\int_{\lambda}^{+\infty} (t-\lambda)^n
\ t\varphi(t)\ \ud t = \tfrac{n}{\lambda}\, Q_{n-1}(\lambda)\leq
\cdots \leq \tfrac{n!}{\lambda^n} Q(\lambda) \leq \tfrac{n!}{\lambda^{n+1}} \varphi(\lambda). 
$$

To get the lower bound, observe first that, defining $Q_{n,k}(\lambda) \eqdef \int_\lambda^{+\infty}\ (t - \lambda)^n
t^{-k}\varphi(t)\ \ud t$, we find
$$
(1 + \tfrac{k+1}{\lambda^2})\,Q_{n,k}(\lambda) \geq \int_\lambda^{+\infty}\ (t - \lambda)^n
(1 + \tfrac{k+1}{t^2})\,t^{-k}\varphi(t)\ \ud t = n\,Q_{n-1,k+1}(\lambda).
$$
Therefore, $Q_n(\lambda) \geq
\tfrac{n\lambda^2}{\lambda^2+1}\,Q_{n-1,1}(\lambda)\geq\cdots\geq \tfrac{n!\lambda^{2n}}{\Pi_{k=1}^n(\lambda^2+k)}\,Q_{0,n}(\lambda)$.
But $(1 + \tfrac{n+1}{\lambda^2})\,Q_{0,n}(\lambda) \geq
{\varphi(\lambda)}/{\lambda^{n+1}}$, so that
$
Q_n(\lambda) \geq \tfrac{n!\lambda^{2n+2}}{\Pi_{k=1}^{n+1}(\lambda^2+k)}\,\tfrac{\varphi(\lambda)}{\lambda^{n+1}}$, which concludes the proof.
\end{proof}

\section{Proof of Lemma~\ref{lemma:p-optimal-levels}: ``$p$-optimal Level Definiteness''}
\label{sec:p-optimal-level-definiteness}

\begin{proof} 
  For $2\leq p < \infty$, $|t - \lambda|^p$ is a continuous, coercive and strictly convex function of $\lambda$ over $\Rbb$, and therefore so is $\int_{\cl R_k} |t - \lambda|^p\ \pdf_0(t)\,\ud t$ since $\pdf_0(t)>0$.
  It follows that the function $\int_{\cl R_k} |t -
  \lambda|^p\ \pdf_0(t)\,\ud t$ has a unique minimizer on $\Rbb$. Moreover, this minimizer is necessarily located in $\cl R_k$ since $\int_{\cl R_k} |t - \lambda|^p\ \pdf_0(t)\,\ud t$ is monotonically decreasing (resp. increasing) on $(-\infty,t_k)$ (resp. $(t_{k+1},+\infty)$)\footnote{Where we used the Lebesgue dominated convergence theorem to interchange the integration and derivation signs.}. Consequently, $\omega_{k,n}$ exists and is unique.

For proving the limit case $p\to\infty$, for finite bins $\cl R_k$
($k\notin \{1,\cl B\}$) and without loss of generality for $t_k\geq 0$, relation \eqref{eq:behav-lambda-finite-bin}
in Lemma~\ref{lemma:optimal-p-moment-bound} with $a=t_k$ and
$b=t_{k+1}$, together with the squeeze theorem shows that 
$$
\lim_{p\to+\infty} \omega_{k,p} =
\lim_{p\to+\infty} \tfrac{1}{1+\fs S^{1/p}}(\fs S^{1/p}t_k + t_{k+1}) =
\lim_{p\to+\infty} \tfrac{1}{1+\fs S^{1/p}}(t_k + \fs S^{1/p} t_{k+1}) = \omega_{k,\infty} ~,
$$
where $\fs S=\pdf_0(t_k)/\pdf_0(t_{k+1})$. 

For infinite bins (\ie $k\in\{1,\cl B\}$) and assuming again $t_k\geq 0$, it follows from the beginning of the proof that $\omega_{k,p}$ is the unique root on $[t_k,+\infty)$ of $\cl E_p(\lambda) \eqdef
\int_{t_k}^\lambda (\lambda - t)^{p-1}\pdf_0(t)\,\ud t -
\int_{\lambda}^\infty (t-\lambda)^{p-1}\pdf_0(t)\,\ud t$.  Let
$\tilde\omega_{k,p} \in [t_k,L]$ be the root of $\tilde{\cl E}_p(\lambda,L) \eqdef
\int_{t_k}^\lambda (\lambda - t)^{p-1}\pdf_0(t)\,\ud t -
\int_{\lambda}^L (t-\lambda)^{p-1}\pdf_0(t)\,\ud t$ for some $L\geq
t_k$.  We then have $\cl E_p(\tilde\omega_{k,p}) =
\int_{t_k}^{\tilde\omega_{k,p}} ({\tilde\omega_{k,p}} -
t)^{p-1}\pdf_0(t)\,\ud t - \int_{{\tilde\omega_{k,p}}}^\infty
(t-{\tilde\omega_{k,p}})^{p-1}\pdf_0(t)\,\ud t = -\int_{L}^\infty
(t-{\tilde\omega_{k,p}})^{p-1}\pdf_0(t) \leq 0 = \cl
E_p(\omega_{k,p})$, which implies $\tilde\omega_{k,p} \leq
\omega_{k,p}$ since $\cl E_p$ is non-decreasing for $p\geq 1$. However,
since $\tilde\omega_{k,p}$ is optimal on $[t_k,L]$, taking
$L=L(p)=c\sqrt p$, for $c>0$, we have by Lemma~\ref{lemma:optimal-p-moment-bound} with $a=t_k$ and
$b=L(p)$, $\lim_{p\to +\infty} \tilde\omega_{k,p} \geq
\lim_{p\to +\infty}\tfrac{1}{1+\fs S^{1/p}}(\fs S^{1/p}t_k + c\sqrt p)
= +\infty$
since $\fs S^{1/p} =
\exp(-t^2_k/2p\sigma_0^2)\exp(c^2/2\sigma_0^2)=\Theta(1)$. This proves
$\lim_{p\to +\infty} |\omega_{k,p}| = +\infty = \omega_{k,\infty}$ and
$|\omega_{k,p}|=\Omega(\sqrt p)$ for $k\in\{1,\cl B\}$.
\end{proof}

\section{Proof of Lemma~\ref{lemma:useful-ineq}: ``Asymptotic $p$-Quantization Characterization''}
\label{proof-toolbox-lemma}

The content of Lemma \ref{lemma:useful-ineq} is derived from this
larger set of results which constitutes a \emph{toolbox} lemma for
other developments given in these appendices.  
\begin{lemma}[\bf Extended Asymptotic $p$-Quantization Characterization]
\label{lemma:useful-ineq-ext}
Given the Gaussian pdf $\pdf_0$ and its associated compressor $\cl G$
function, choose $0<\beta<1$ and $p\in\Nbb$, and define
$T=T(B)=\sqrt{6\,\sigma_0^2(\log 2^\beta)\,B}$, $\cl T=[-T,T]$ and $\cl
T^c = \Rbb \setminus \cl T$. We have the following asymptotic
properties (relative to $B$):
\begin{align}
\label{lemma:useful-ineq-ext-eq1}
\cl G'(T(B))&= \Theta(2^{-\beta B}),\\
\label{lemma:useful-ineq-ext-eq2}
\#\{k : \cl R_k \subset \cl T^c\}&= \Theta\big(B^{-1/2}\,2^{(1-\beta)B}\big),\\
\label{lemma:useful-ineq-ext-eq2bis}
\int_{\cl R_k} |t-\omega_{k,p}|^p\
\pdf_0(t)\,\ud t&= O\big(B^{-(p+1)/2}\,2^{-3\beta B}\big),\quad \forall 
\cl R_k \subset \cl T^c.
\end{align}
Moreover, for all $k$ such that $\cl R_k \subset \cl T$ and any $c\in
\cl R_k$
\begin{eqnarray}
\label{lemma:useful-ineq-ext-eq3}
&\gbinw_k\eqdef t_{k+1}-t_k =\ O(2^{-(1-\beta)B}),\\
\label{lemma:useful-ineq-ext-eq4}
&1\ \leq\
\tfrac{\max(\pdf_0(t_k),\,\pdf_0(t_{k+1}))}{\min(\pdf_0(t_k),\,\pdf_0(t_{k+1}))}
=\ \exp\big(O(B^{1/2}\,
2^{-(1-\beta)B})\big)= 1 + O(B^{1/2}\,
2^{-(1-\beta)B}),\\
\label{lemma:useful-ineq-ext-eq7}
&\textstyle \int_{\cl R_k} |t-\omega_{k,p}|^p\ \pdf_0(t)\,\ud t\ \simeq_B\
\tfrac{\gbinw_k^{p+1}}{(p+1)\,2^{p}}\,\pdf_0(c),\\
\label{lemma:useful-ineq-ext-eq8}
&\cl G'(c) \simeq_B \tfrac{\binw}{\gbinw_k}.
\end{eqnarray}
Finally, if $k$ is such that $T(B) \in \cl R_k$, then, writing the
interval length/measure $\cl L(\cl A)=\int_{\cl A} \ud t$ for $\cl A\subset \Rbb$,
\begin{align}
\label{lemma:useful-ineq-ext-eq9}
\cl L(\cl R_k \cap \cl T)&=\ O(2^{-(1-\beta)B}),\\
\label{lemma:useful-ineq-ext-eq10}
\cl G'(\omega_{k,p})&\leq \max(\cl G'(t_k),\cl G'(t_{k+1})) =\ O(2^{-\beta B}),\\
\label{lemma:useful-ineq-ext-eq11}
\int_{\cl R_k} |t-\omega_{k,p}|^p\ \pdf_0(t)\,\ud t\ &= O\big(B^{-(p+1)/2}\,2^{-3\beta B}\big).
\end{align}
\end{lemma}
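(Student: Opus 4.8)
The plan is to reduce every estimate to the one explicit identity $\cl G' = \pdf_0^{1/3}/\fnorm{\pdf_0}^{1/3}_{1/3} = \gamma_{0,\sqrt3\sigma_0}$, obtained from \eqref{eq:optimal-compressor} by normalizing the cube root of the Gaussian; this single formula is the workhorse for all the bounds. I would open with \eqref{lemma:useful-ineq-ext-eq1}: substituting $T^2 = 6\sigma_0^2(\log 2^\beta)B$ into $\cl G'(T) = (6\pi\sigma_0^2)^{-1/2}\exp(-T^2/(6\sigma_0^2))$ makes the exponent collapse to $\exp(-T^2/(6\sigma_0^2)) = 2^{-\beta B}$, so $\cl G'(T) = \Theta(2^{-\beta B})$ with a $B$-independent constant.

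Next, the bulk bins $\cl R_k \subset \cl T$. The central device is the mean-value identity $\binw = \cl G(t_{k+1}) - \cl G(t_k) = \cl G'(\xi_k)\gbinw_k$ for some $\xi_k \in \cl R_k$, coming from \eqref{eq:thresh-def-with-compressor}. Since $\cl G'$ is even and decreasing in $|\lambda|$, it is bounded below on $\cl T$ by $\cl G'(T)$, so $\gbinw_k = \binw/\cl G'(\xi_k) \leq \binw/\cl G'(T) = O(2^{-(1-\beta)B})$, which is \eqref{lemma:useful-ineq-ext-eq3}. Then \eqref{lemma:useful-ineq-ext-eq4} follows by writing $\pdf_0(t_k)/\pdf_0(t_{k+1}) = \exp(\gbinw_k(t_k+t_{k+1})/(2\sigma_0^2))$ and bounding $|t_k+t_{k+1}| \leq 2T = O(\sqrt B)$, so the exponent is $O(B^{1/2}2^{-(1-\beta)B}) \to 0$ and $e^x = 1+O(x)$. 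With the density thereby nearly constant across the bin ($\fs D/\fs C \to 1$), I would get \eqref{lemma:useful-ineq-ext-eq7} by squeezing the upper and lower bounds \eqref{eq:upper-bound-intab}--\eqref{eq:lower-bound-intab} of Lemma~\ref{lemma:optimal-p-moment-bound} (with $a=t_k$, $b=t_{k+1}$, $n=p$), both of which collapse to $\frac{\gbinw_k^{p+1}}{(p+1)2^p}\pdf_0(c)$ as $\fs D/\fs C\to1$; likewise \eqref{lemma:useful-ineq-ext-eq8} follows since $\cl G'(c)/\cl G'(\xi_k) = (\pdf_0(c)/\pdf_0(\xi_k))^{1/3} \to 1$. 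For the count \eqref{lemma:useful-ineq-ext-eq2} I would work in the compressed domain, where thresholds are $\binw$-equispaced: the number of positive-tail bins is $\simeq_B 2^B(1-\cl G(T)) = 2^B\int_T^\infty\cl G'$, and $\int_T^\infty\gamma_{0,\sqrt3\sigma_0} = Q_0(T/(\sqrt3\sigma_0)) = \Theta(B^{-1/2}2^{-\beta B})$ by the $n=0$ case of Lemma~\ref{lemma:bounding-lpm}; doubling both tails gives $\Theta(B^{-1/2}2^{(1-\beta)B})$.

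The tail estimates \eqref{lemma:useful-ineq-ext-eq2bis} and \eqref{lemma:useful-ineq-ext-eq11} are where the minimality of $\omega_{k,p}$ is exploited. For a positive-tail bin ($t_k \geq T > 0$), since $\omega_{k,p}$ minimizes $\int_{\cl R_k}|t-\lambda|^p\pdf_0$ over $\lambda\in\cl R_k$ and $t_k\in\cl R_k$, I would bound $\int_{\cl R_k}|t-\omega_{k,p}|^p\pdf_0 \leq \int_{t_k}^\infty(t-t_k)^p\pdf_0 = \sigma_0^p\,Q_p(t_k/\sigma_0)$ and then invoke $Q_p(\lambda)\leq p!\,\lambda^{-(p+1)}\varphi(\lambda)$ from Lemma~\ref{lemma:bounding-lpm}. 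This yields $\frac{p!\,\sigma_0^{2p+2}}{t_k^{p+1}}\pdf_0(t_k)$, which is decreasing in $t_k$, hence maximal at $t_k=T$, where $T^{-(p+1)} = \Theta(B^{-(p+1)/2})$ and $\pdf_0(T) = \Theta(2^{-3\beta B})$, giving $O(B^{-(p+1)/2}2^{-3\beta B})$; the negative tail and the semi-infinite end-bin are identical (there $\omega_{k,p}$ is finite by Lemma~\ref{lemma:p-optimal-levels}). For the transition bin ($T\in\cl R_k$) I would first prove \eqref{lemma:useful-ineq-ext-eq9}: $\cl L(\cl R_k\cap\cl T) = T-t_k = (\cl G(T)-\cl G(t_k))/\cl G'(\eta) \leq \binw/\cl G'(T) = O(2^{-(1-\beta)B})$. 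This length control forces $T^2-t_k^2 = (T-t_k)(T+t_k) = O(B^{1/2}2^{-(1-\beta)B})\to0$, whence $\cl G'(t_k)/\cl G'(T) = \exp(\tfrac{T^2-t_k^2}{6\sigma_0^2})\to1$ and $\pdf_0(t_k)/\pdf_0(T)\to1$. With $\cl G'$ decreasing on this positive bin, $\cl G'(\omega_{k,p})\leq\cl G'(t_k) = \Theta(\cl G'(T)) = O(2^{-\beta B})$, giving \eqref{lemma:useful-ineq-ext-eq10}; and repeating the $Q_p$ bound with $t_k = \Theta(T)$ and $\pdf_0(t_k) = \Theta(2^{-3\beta B})$ gives \eqref{lemma:useful-ineq-ext-eq11}.

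I expect the main obstacle to be the tail and transition bins rather than the bulk. In the bulk the bins are narrow and the density is essentially constant, so Lemma~\ref{lemma:optimal-p-moment-bound} applies directly; but in the tail a bin can be wide (even semi-infinite) and $\omega_{k,p}$ can sit far from $t_k$, so neither a ``width $\times$ density'' estimate nor a local expansion is available. The clean resolution is to discard the exact location of $\omega_{k,p}$, use its minimality to compare against the left endpoint $t_k$, and recast the resulting one-sided integral as the lower-partial-moment $Q_p$, which Lemma~\ref{lemma:bounding-lpm} controls sharply by $\Theta(\lambda^{-(p+1)}\varphi(\lambda))$; monotonicity in $t_k$ then transfers the worst case to the innermost threshold $t_k=T$. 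The transition bin is the single most delicate point, since it straddles $T$ and one must establish the length bound \eqref{lemma:useful-ineq-ext-eq9} before the near-constancy of $\cl G'$ and $\pdf_0$ across it can be used.
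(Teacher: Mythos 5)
Your proposal is correct and follows essentially the same route as the paper's own proof: direct evaluation of $\cl G'(T)$, the compressed-domain count with $Q$-function bounds for \eqref{lemma:useful-ineq-ext-eq2}, Lemma~\ref{lemma:optimal-p-moment-bound} squeezed via the near-constancy of $\pdf_0$ for the bulk bins, the minimality-of-$\omega_{k,p}$ comparison with the left endpoint combined with the $Q_p$ bound of Lemma~\ref{lemma:bounding-lpm} for the tail bins, and the length control \eqref{lemma:useful-ineq-ext-eq9} before treating the transition bin. The only differences are cosmetic — you invoke the mean-value theorem where the paper uses concavity of $\cl G$ on $\Rbb_+$, and you omit the trivial $p=0$ case of \eqref{lemma:useful-ineq-ext-eq7} (needed downstream, dispatched by the paper in one line via $\pdf_0(t_{k+1})\gbinw_k \leq p_k \leq \pdf_0(t_k)\gbinw_k$).
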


\begin{proof}
In this proof we use the quantizer symmetry to restrict the analysis
to the half (positive) real line $\Rbb_+$, on which $\pdf_0$ is decreasing.

Relation \eqref{lemma:useful-ineq-ext-eq1} comes from the definition
of $T(B)$ and that of $\cl G'=\gamma_{0,\sqrt 3\,\sigma_0}$. For proving
\eqref{lemma:useful-ineq-ext-eq2}, we can observe that $\cl G(\lambda) =
\fnorm{\pdf_0}^{-1/3}_{1/3} \int_{-\infty}^\lambda
\pdf_0^{1/3}(t)\,\ud t = 1-Q(\lambda/\sqrt 3 \sigma_0)$ where $Q(t) =
\tinv{\sqrt{2\pi}}\int_t^{+\infty} \gamma_{0,1}(u)\,\ud u$.  Since
$\frac{\lambda}{1+\lambda^2}\gamma_{0,1}(\lambda)
\leq Q(\lambda) \leq \tfrac{1}{\lambda}
\gamma_{0,1}(\lambda)$, we obtain
$$
\tfrac{3 \sigma^2_0 \lambda}{3\sigma_0^2+\lambda^2}\,\cl G'(\lambda)\ \leq\ 1 - \cl G(\lambda)\ \leq\ \tfrac{3\sigma_0^2}{\lambda}
\,\cl G'(\lambda).  
$$
Taking $\lambda=T(B)$ in the last inequalities and using \eqref{lemma:useful-ineq-ext-eq1}, we deduce from the quantizer definition
$$
\#\{k : \cl R_k \subset \cl T^c\}\ =\ 2\,\#\{k : t_k \geq T(B)\}\ =\ 2\,\binw^{-1}\,(1-\cl G(T))\ = \Theta\big(B^{-1/2}\,2^{(1-\beta)B}\big).
$$
Relation \eqref{lemma:useful-ineq-ext-eq2bis} is proved by noting that,
if $t_k \geq T(B)$, 
$$
\int_{\cl R_k} |t-\omega_{k,p}|^p\
\pdf_0(t)\,\ud t \leq \int_{\cl R_k} (t-t_k)^p\
\pdf_0(t)\,\ud t \leq \int_{t_k}^{\infty} (t-t_k)^p\
\pdf_0(t)\,\ud t,
$$
where the first inequality follows from the $p$-optimality of $\omega_{k,p}\in\cl R_k$.  However, from
Lemma~\ref{lemma:bounding-lpm}, we know that, for $\lambda\in\Rbb_+$
\begin{equation*}
\tfrac{p!\,\lambda^{p+1}\sigma_0^{2p+2}}{\Pi_{k=1}^{p+1}(\lambda^2 +
k\sigma_0^2)}\,\pdf_0(\lambda)\ \leq\ \sigma_0^{p}\,Q_{p}(\tfrac{\lambda}{\sigma_0})\ \leq\ \tfrac{p!\,\sigma_0^{2p+2}}{\lambda^{p+1}}\,\pdf_0(\lambda),  
\end{equation*}
with $Q_{p}(\lambda) \eqdef \int_{\lambda}^{\infty}\,
(t-\lambda)^{p}\,\gamma_{0,1}(t)\ \ud t$ and $\sigma_0^{p}\,Q_{p}(\tfrac{\lambda}{\sigma_0}) = \int_{\lambda}^{\infty}\,
(t-\lambda)^{p}\,\varphi_0(t)\ \ud t$.

Therefore, since $\pdf_0 \propto (\cl G')^3$,
$$
\int_{t_k}^{\infty} (t-t_k)^p\
\pdf_0(t)\,\ud t \leq\
\tfrac{p!\,\sigma_0^{2(p+1)}}{t_k^{p+1}}\,\pdf_0(t_k) \leq
\tfrac{p!\,\sigma_0^{2(p+1)}}{T^{p+1}}\,\pdf_0(T) = O\big(B^{-(p+1)/2}\,2^{-3\beta B}\big).
$$

Relation \eqref{lemma:useful-ineq-ext-eq3} is obtained by observing
that $\cl G$ is concave on $\Rbb_+$. This implies $\gbinw_k \leq
\binw/\cl G'(t_{k+1})$ and if $k$ is such that $0 \leq t_{k+1} \leq
T(B)$, $\gbinw_k = O(2^{-(1-\beta)B})$. For
\eqref{lemma:useful-ineq-ext-eq4}, keeping the same $k$, we note that $1
\leq \frac{\pdf_0(t_k)}{\pdf_0(t_{k+1})} =
\exp(\inv{6\sigma_0^2}\gbinw_k (t_k+t_{k+1})) \leq
\exp(\inv{3\sigma_0^2}\gbinw_k t_{k+1}) = \exp\big
(O(B^{1/2}\,2^{-(1-\beta)B})\big)$ which is then arbitrarily close
to~1.

For proving \eqref{lemma:useful-ineq-ext-eq7}, we assume first $p\geq 1$.
Let us consider \eqref{eq:lower-bound-intab} and
\eqref{eq:upper-bound-intab} with $a=t_k$, $b=t_{k+1}$, $\fs
C=\pdf_0(t_{k+1})$ and $\fs D=\pdf_0(t_{k})$ with $0\leq t_{k+1} \leq
T(B)$.  From \eqref{lemma:useful-ineq-ext-eq4} we see that $1 \leq
\tfrac{\fs D}{\fs C} = 1 + o(1)$. We show easily that this involves
the equivalent relations $\fs C \simeq_B \fs D$, $\fs C/\fs D \simeq_B
1$ and $\fs D/\fs C \simeq_B 1$. Therefore, $(1+(\fs D/\fs
C)^{(p+1)/p}) \simeq_B 2$ and $(1+(\fs C/\fs D)^{(p+1)/p}) \simeq_B
2$. Moreover, $\fs C \simeq_B \pdf_0(c)$ and $\fs D \simeq_B
\pdf_0(c)$ for any $c\in\cl R_k$, so that \eqref{eq:lower-bound-intab}
and \eqref{eq:upper-bound-intab}) show finally $\int_{\cl R_k}
|t-\omega_{k,p}|^p\ \pdf_0(t)\,\ud t \ \mathop{\lesssim}_B\
\tfrac{\gbinw_k^{p+1}}{(p+1)\,2^{p}}\,\pdf_0(c)$ and $\int_{\cl R_k}
|t-\omega_{k,p}|^p\ \pdf_0(t)\,\ud t \ \mathop{\gtrsim}_B\
\tfrac{\gbinw_k^{p+1}}{(p+1)\,2^{p}}\,\pdf_0(c)$, which proves the
relation. The case $p=0$ is demonstrated similarly by observing that
$\pdf_0(t_{k+1})\gbinw_k \leq p_k\eqdef \int_{\cl R_k}\pdf_0(t)\,\ud t
\leq \pdf_0(t_k)\gbinw_k$.

Let's now turn to showing \eqref{lemma:useful-ineq-ext-eq8}. 
From \eqref{lemma:useful-ineq-ext-eq4} and since $\cl G' \propto
\pdf_0^{1/3}$, $1 \leq \cl G'(t_{k})/\cl G'(t_{k+1}) = 1 + o(1)$ so
that $\cl G'(t_{k})/\cl G'(t_{k+1})\simeq_B 1$.  By concavity of
$\cl G$ on $\Rbb_+$, we know that $\cl G'(t_{k+1}) \leq \binw/\gbinw_k
\leq \cl G'(t_{k})$. Therefore, $1 \leq (\cl
G'(t_{k+1}))^{-1}\binw/\gbinw_k = 1 + o(1)$ which yields $\cl G'(t_{k+1})
\simeq_B \binw/\gbinw_k$.  By the concavity argument again, we have $\cl G'(t_{k})\geq \cl
G'(c) \geq \cl G'(t_{k+1})$ for any $c \in \cl R_k$, and thus $1+o(1)=\cl G'(t_{k})/\cl G'(t_{k+1})\geq \cl G'(c)/\cl G'(t_{k+1}) \geq 1$. This implies $\cl G'(c) \simeq_B \cl G'(t_{k+1}) \simeq_B \binw/\gbinw_k$.

If $k$ is such that $0 \leq t_{k} \leq T(B) \leq t_{k+1}$,
using again the concavity of $\cl G$ on $\Rbb_+$, we find $\cl L(\cl
R_k \cap \cl T) = T(B) - t_k
\leq \cl (\cl G(T(B)) - k\binw)/\cl G'(T(B)) \leq \binw/\cl G'(T(B)) =
O(2^{-(1-\beta)B})$, which proves \eqref{lemma:useful-ineq-ext-eq9}.

For showing \eqref{lemma:useful-ineq-ext-eq10}, we note that $\cl G'(t_k) =
\cl G'(T) (\cl G'(t_k)/\cl G'(T))$. Since $\cl G'(t_k)/\cl
G'(T)=\exp(\inv{6\sigma_0^2}(T-t_k)(T+t_k))\leq
\exp(\inv{3\sigma_0^2}(T-t_k)T) =
\exp(O(B^{1/2}\,2^{-(1-\beta)B}))$ which is arbitrarily close to 1
(\ie it is $e^{o(1)}$), we find $\cl G'(t_k) = O(2^{-\beta B})$, \ie
it inherits the behavior of $\cl G'(T)$. 

The last relation \eqref{lemma:useful-ineq-ext-eq11} is proved similarly to \eqref{lemma:useful-ineq-ext-eq2bis} by
appealing again to Lemma~\ref{lemma:bounding-lpm},
$$
\int_{\cl R_k} (t-t_k)^p\ \pdf_0(t)\,\ud t\ \leq\ \int_{t_k}^{\infty}
(t-t_k)^p\ \pdf_0(t)\,\ud t \leq\ 
\tfrac{p!\,\sigma_0^{2p+2}}{t_k^{p+1}}\,\pdf_0(t_k) = O\big(B^{-(p+1)/2}\,2^{-3\beta B}\big),
$$
where the asymptotic relation is obtained by seeing that, as soon as
$T-t_k \leq 1/2$ (which is always possible to meet thanks to \eqref{lemma:useful-ineq-ext-eq9}), 
$$
\tfrac{1}{t_k} = \tfrac{1}{T}(1 - \tfrac{T-t_k}{T})^{-1} \leq  \tfrac{1}{T}\,(1 + 2\tfrac{T-t_k}{T}),
$$
and $\pdf_0(t_k) = O(2^{-3\beta B})$ since $\pdf_0 \propto (\cl G')^3$.
\end{proof}

\section{Proof of Lemma \ref{lemma:bound-lpw-gaussian-vector}: ``Asymptotic Weighted $\ell_{p}$-Distortion''}
\label{proof-lemma-bound-lpw-gaussian-vector}

Before proving Lemma \ref{lemma:bound-lpw-gaussian-vector}, let us
show the following asymptotic equivalence.
\begin{lemma}
\label{lemma:bounding-a-tail-after-T}
Let $p\in \Nbb\setminus\{0\}$ and $\gamma>p-3$. 
\begin{equation}
  \label{eq:sum-pmom-assympt}
  \sum_{k=1}^{\cl B}\ [\cl G'(\omega_{k,p})]^{\gamma} \int_{\cl R_k}|t -
  \omega_{k,p}|^p\,\pdf_0(t)\,\ud t\ \simeq_B\  
\tfrac{2^{-pB}}{(p+1)\,2^p}\,\int_{\Rbb} [\cl G'(t)]^{\gamma-p}
\pdf_0(t)\,\ud t,
\end{equation}
\end{lemma}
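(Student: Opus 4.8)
The plan is to split the index set $\{1,\dots,\cl B\}$ according to the two asymptotic regimes isolated in Lemma~\ref{lemma:useful-ineq} (in its extended form Lemma~\ref{lemma:useful-ineq-ext}): the \emph{vanishing bin} regime, consisting of the bins with $\cl R_k\subset\cl T=[-T,T]$ where $T=T(B)=\Theta(\sqrt B)$, and the \emph{vanishing distortion} regime $\cl T^c$, together with the two transition bins straddling $\pm T$. Fixing a parameter $\beta\in(0,1)$ close to $1$ (its admissible range will be dictated by the tail estimate below), I would show that the bins inside $\cl T$ reproduce the announced main term as a Riemann sum, while the remaining bins contribute only $o(2^{-pB})$, so that the two pieces combine into the stated equivalence $\simeq_B$.

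For the main term, restrict the sum to $\{k:\cl R_k\subset\cl T\}$. On these bins the narrow-bin asymptotics apply: by \eqref{lemma:useful-ineq-ext-eq7} with $c=\omega_{k,p}$ one has $\int_{\cl R_k}|t-\omega_{k,p}|^p\pdf_0(t)\,\ud t\simeq_B\tfrac{\gbinw_k^{p+1}}{(p+1)2^p}\pdf_0(\omega_{k,p})$, and by \eqref{lemma:useful-ineq-ext-eq8}, $\gbinw_k\simeq_B\binw/\cl G'(\omega_{k,p})$, both uniformly over $\cl T$ (uniformity follows from \eqref{lemma:useful-ineq-ext-eq4}). Substituting, and using $\binw\simeq_B\cl G'(\omega_{k,p})\,\gbinw_k$ once more, collapses the generic summand to $\tfrac{\binw^{p}}{(p+1)2^p}\,\gbinw_k\,[\cl G'(\omega_{k,p})]^{\gamma-p}\,\pdf_0(\omega_{k,p})$. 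With $\binw^p=2^{-pB}$, the partial sum therefore equals $\tfrac{2^{-pB}}{(p+1)2^p}$ times a Riemann sum $\sum_{\cl R_k\subset\cl T}\gbinw_k\,f(\omega_{k,p})$ for $f(t)=[\cl G'(t)]^{\gamma-p}\pdf_0(t)$, with sample points $\omega_{k,p}\in\cl R_k$. Since the mesh $\max_k\gbinw_k=O(2^{-(1-\beta)B})\to0$ by \eqref{lemma:useful-ineq-ext-eq3}, and $f$ is $C^1$ with integrable derivative (because $\cl G'\propto\pdf_0^{1/3}$ forces $f\propto\pdf_0^{(\gamma-p+3)/3}$, a positive power of a Gaussian precisely when $\gamma>p-3$), this Riemann sum converges to $\int_{\cl T}f$, which in turn tends to $\int_{\Rbb}f=\int_{\Rbb}[\cl G']^{\gamma-p}\pdf_0$ as $T\to\infty$. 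This produces the right-hand side.

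The main obstacle is the tail, i.e.\ the bins in $\cl T^c$ plus the two transition bins, where the narrow-bin asymptotics fail because bins are wide and one bin is unbounded. When $\gamma\ge0$ this is routine: $\cl G'$ is decreasing, so $[\cl G'(\omega_{k,p})]^\gamma\le[\cl G'(T)]^\gamma=\Theta(2^{-\beta\gamma B})$, and combining the bin count \eqref{lemma:useful-ineq-ext-eq2} with the per-bin distortion bound \eqref{lemma:useful-ineq-ext-eq2bis} (and \eqref{lemma:useful-ineq-ext-eq10}--\eqref{lemma:useful-ineq-ext-eq11} for the transition bins) yields a tail of order $O\big(B^{-(p+2)/2}2^{(1-\beta(4+\gamma))B}\big)$, which is $o(2^{-pB})$ as soon as $\beta>(1+p)/(4+\gamma)$; this threshold lies strictly below $1$ exactly because $\gamma>p-3$, so an admissible $\beta$ exists. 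The genuinely delicate case is $\gamma<0$ (still with $\gamma>p-3$), where $[\cl G'(\omega_{k,p})]^\gamma$ now grows in the tail and the wide/infinite bins obstruct the crude bound. Here I would estimate each tail summand directly through the $p$-optimality of $\omega_{k,p}$ and the $Q$-function bounds of Lemma~\ref{lemma:bounding-lpm}, which require no narrowness, treating the unbounded end bin separately via $|\omega_{\cl B,p}|=\Omega(\sqrt p)$ from Lemma~\ref{lemma:p-optimal-levels} together with the Gaussian tail decay; the same near-unit choice of $\beta$ (available precisely because $\gamma>p-3$) then forces this contribution to be $o(2^{-pB})$, completing the proof.
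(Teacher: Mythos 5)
Your proposal follows essentially the same route as the paper's proof: split the sum at $T(B)$, turn the bins inside $\cl T$ into a Riemann sum for $\int_\Rbb[\cl G'(t)]^{\gamma-p}\pdf_0(t)\,\ud t$ via \eqref{lemma:useful-ineq-ext-eq7}--\eqref{lemma:useful-ineq-ext-eq8}, and kill the residual by combining the bin count \eqref{lemma:useful-ineq-ext-eq2} with the per-bin bounds \eqref{lemma:useful-ineq-ext-eq2bis} and \eqref{lemma:useful-ineq-ext-eq10}--\eqref{lemma:useful-ineq-ext-eq11}, under exactly the paper's condition $\beta(\gamma+4)\geq p+1$, which an admissible $\beta<1$ satisfies precisely because $\gamma>p-3$. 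The only point where you go beyond the paper is in explicitly isolating and sketching the case $\gamma<0$ (where $[\cl G'(\omega_{k,p})]^{\gamma}$ must be controlled by a \emph{lower} bound on $\cl G'(\omega_{k,p})$, and the unbounded end bin needs separate treatment); the paper's own proof applies the same tail estimate uniformly without comment, which is harmless for its applications since they only invoke the lemma with $\gamma=p-2\geq 0$.
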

\begin{proof} Let us use the threshold $T(B)$ defined in Lemma
  \ref{lemma:useful-ineq-ext} for splitting the sum
  \eqref{eq:sum-pmom-assympt} in two parts, \ie using the quantizer
  symmetry,
$$
\sum_{k=1}^{\cl B}\ [\cl G'(\omega_{k,p})]^{\gamma} \int_{\cl R_k}|t -
  \omega_{k,p}|^p\,\pdf_0(t)\,\ud t =\ 2\!\!\!\!\!\!\sum_{k:\ 0 \leq t_{k+1} <
T(B)} \ [\cl G'(\omega_{k,p})]^{\gamma} \int_{\cl R_k}|t -
\omega_{k,p}|^p\,\pdf_0(t)\,\ud t\quad +\quad \fs R,
$$
where the residual $\fs R$ reads 
\begin{align*}
\fs R&\eqdef\ 2\!\!\!\!\!\!\sum_{k:\ t_{k+1} \geq\, T(B)}\ [\cl G'(\omega_{k,p})]^{\gamma} \int_{\cl R_k}|t -
\omega_{k,p}|^p\,\pdf_0(t)\,\ud t,\\
&\,=\ 2\,[\cl G'(\omega_{k',p})]^{\gamma} \int_{\cl R_{k'}}|t -
\omega_{k',p}|^p\,\pdf_0(t)\,\ud t\ +\ 2\!\!\!\!\sum_{k:\ t_{k} \geq\, T(B)}\ [\cl G'(\omega_{k,p})]^{\gamma} \int_{\cl R_k}|t -
\omega_{k,p}|^p\,\pdf_0(t)\,\ud t,
\end{align*}
where $k'$ is such that $t_{k'} < T(B) \leq t_{k'+1}$.

From Lemma \ref{lemma:useful-ineq-ext}, we can easily bound this residual. We know from \eqref{lemma:useful-ineq-ext-eq1},
\eqref{lemma:useful-ineq-ext-eq2bis}, \eqref{lemma:useful-ineq-ext-eq10} and \eqref{lemma:useful-ineq-ext-eq11}
that, for all $k\in\ \{j: \omega_{j,p} \geq t_{j}\geq T(B)\}\cup\{k'\}$, 
$$
\ [\cl G'(\omega_{k,p})]^{\gamma} \int_{\cl R_k}|t -
\omega_{k,p}|^p\,\pdf_0(t)\,\ud t = O(2^{-\beta(\gamma+3)B} B^{-(p+1)/2}).
$$ 
However, \eqref{lemma:useful-ineq-ext-eq2} tells us that the sum in $\fs
R$ is made of no more than $1 + O\big(B^{-1/2}\,2^{(1-\beta)B}\big) = O\big(B^{-1/2}\,2^{(1-\beta)B}\big)$
terms, so that
$$
\fs R = O\big(B^{-(p+2)/2}\,2^{-(\beta (\gamma+4) - 1)B}\big).
$$ 

Let us now study the terms for which $0 \leq t_{k+1} \leq T(B)$. Using
\eqref{lemma:useful-ineq-ext-eq7} and \eqref{lemma:useful-ineq-ext-eq8} provides
\begin{align*}
&\sum_{k=1}^{\cl B}\ [\cl G'(\omega_{k,p})]^{\gamma} \int_{\cl R_k}|t -
  \omega_{k,p}|^p\,\pdf_0(t)\,\ud t\\
&\mathop{\simeq}_{B}\ 2\,
\sum_{k:\,0\leq t_{k+1}\leq T(B)}\ [\cl G'(\omega_{k,p})]^{\gamma} \tfrac{\gbinw_k^{p+1}}{(p+1)\,2^p} 
\,\pdf_0(\omega_{k,p})\ +\ \fs R\\
&\mathop{\simeq}_{B}\ 2 \tfrac{\binw^{p}}{(p+1)\,2^p}\,\sum_{k:\,0\leq t_{k+1}\leq T(B)}\ [\cl G'(\omega_{k,p})]^{\gamma-p}  
\,\pdf_0(\omega_{k,p})\,\gbinw_k\ +\ \fs R\\
&\mathop{\simeq}_{B}\ 
2\tfrac{2^{-pB}}{(p+1)\,2^p}\,\int_{0}^{T(B)} [\cl G'(t)]^{\gamma-p}
\pdf_0(t)\,\ud t\ +\ \fs R,
\end{align*}
where, knowing that $0 \leq t_{k+1}\leq T(B)$, we have also used \eqref{lemma:useful-ineq-ext-eq7} with
$p=0$ to see that $p_k=\int_{\cl R_k} \pdf_0(t)\,\ud t\simeq_B
\pdf_0(c')\gbinw_k$ for any $c'\in\cl R_k$.

Therefore, provided that $\beta(\gamma+4)\geq p+1$, which means that
$\gamma>p-3$ since $\beta<1$, the residual $\fs R$ decreases
faster than the first term in the right-hand side of last of the last equivalence relation, so that 
$$
\sum_{k=1}^{\cl B}\ [\cl G'(\omega_{k,p})]^{\gamma} \int_{\cl R_k}|t -
  \omega_{k,p}|^p\,\pdf_0(t)\,\ud t\ \mathop{\simeq}_{B}\ 
\tfrac{2^{-pB}}{(p+1)\,2^p}\,\int_{\Rbb} [\cl G'(t)]^{\gamma-p}
\pdf_0(t)\,\ud t,
$$
since $T(B)=\Theta(B^{1/2})$ by definition.
\end{proof}

With the three previous lemmata under our belts, we are now ready to
prove Lemma~\ref{lemma:bound-lpw-gaussian-vector}.

\begin{proof}[Proof of Lemma~\ref{lemma:bound-lpw-gaussian-vector}]
For $z_i\sim_{\iid} \cl N(0,\sigma_0^2)$ with pdf $\pdf_0$, using the SLLN
applied to $z_i$ conditionally on each quantization bin, we have
\begin{align*}
\|\cl Q_p[\bs z] - \bs z\|^p_{p,\bs w}&\eqdef\ \sum_{i=1}^M\ [\cl G'(\cl Q_p[z_i])]^{p-2}\,|z_i -
\cl Q_p[z_i]|^p,\\
&\mathop{\simeq}_{M}\ M\, \sum_{k=1}^{\cl B}\ [\cl G'(\omega_{k,p})]^{p-2} \int_{\cl R_k}|t -
\omega_{k,p}|^p\,\pdf_0(t)\,\ud t,
\end{align*}
where we used implicitly the quantizer symmetry in the last relation. 
This last relation is characterized by
Lemma~\ref{lemma:bounding-a-tail-after-T} by taking $n=p$ and $\gamma = p-2 >
p-3$, so that
\begin{align*}
\|\cl Q_p[\bs z] - \bs z\|^p_{p,\bs w}
&\mathop{\simeq}_{M,B}\ M\,\tfrac{2^{-pB}}{(p+1)\,2^p}\,\int_{\Rbb} [\cl G'(t)]^{-2}
\pdf_0(t)\,\ud t,\\
&\mathop{\simeq}_{M,B}\ M\, \tfrac{2^{-pB}}{(p+1)2^p}\,\fnorm{\pdf_0}_{1/3}.
\end{align*}
\end{proof}

\section{Proof of Lemma \ref{lem:strict-bounds-mu_p}: ``Gaussian $\ell_{p,w}$-Norm Expectation''}
\label{sec:proof-lemma-strict-bound}

First, the inequality $\E\norm{\bs \xi}_{p,\bs w} \leq (\E\norm{\bs
  \xi}^p_{p,\bs w})^{1/p}$ follows from the Jensen inequality applied on
the convex function $(\cdot)^p$ on $\Rbb_+$. Second, from our result in
\cite[Appendix C]{Jacques2010} it is easy to show that
$$
\E\norm{\bs \xi}_{p,\bs w} \geq (\E\norm{\bs
\xi}^p_{p,\bs w})^{1/p}\,\big(1\ +\ (\E\norm{\bs
\xi}^p_{p,\bs w})^{-2}\Var\norm{\bs
\xi}^p_{p,\bs w}\big)^{\inv{p}-1}.
$$
Moreover, $\E\norm{\bs \xi}^p_{p,\bs w} = \norm{\bs w}_p^p\,\E|\cl Z|^p$, while 
$$
\Var\norm{\bs
  \xi}^p_{p,\bs w}\ =\ \sum_i \Var |w_i \cl Z|^p = \|\bs w\|_{2p}^{2p}
  \Var|\cl Z|^p.
$$
Therefore, assuming CM weights,
\begin{align*}
\E\norm{\bs \xi}_{p,\bs w}/(\E\norm{\bs
\xi}^p_{p,\bs w})^{1/p}&\geq\ \big(1\ +\ (\rho^{\max}_{2p}/\rho^{\min}_p)^{2p}
M^{-1}(\E|\cl Z|^{p})^{-2}\Var|\cl Z|^{p}\big)^{\inv{p}-1}\\
&\geq\ \big(1\ +\ 2^{p+1}\,\theta_p^{p}\,M^{-1}\big)^{\inv{p}-1},
\end{align*}
since $\rho_{2p}^{\max}\leq \rho_{\infty}^{\max}$, and $(\E|\cl Z|^{p})^{-2}\Var|\cl Z|^{2p}< 2^{p+1}$ \cite{Jacques2010}.

\section{Proof of Proposition~\ref{prop:grip-gauss}: ``RIP$_{p,w}$ Matrix Existence''}
\label{sec:proof-prop-grip-gauss}

The proof proceeds simply by considering the Lipschitz function $F(\bs
u) = \|\bs u\|_{p,\bs w}$ and the expected value
$\mu=F(\bs\xi)$ for a random vector $\bs\xi\sim \cl N^M(0,1)$ in 
\cite[Appendix A]{Jacques2010}. The Lipschitz constant of $F$ is
$$
\lim_{\bs u{\mathop{\to}_{\neq}}\bs v}\big|F(\bs u) - F(\bs v)\big|\,/\,\|\bs u -\bs v\|\ =\ \|\bs w\|_{\infty}
\,\lambda_p,
$$ 
with $\lambda_p=\max(M^{(2-p)/2p},1)$ for $p\geq 1$. The value
$\mu=\E\norm{\bs \xi}_{p,\bs w}$ can be estimated thanks to Lemma
\ref{lem:strict-bounds-mu_p}. Indeed, it tells us that if $M\geq
2(2\theta_p)^p$, 
$$
\mu\ \geq\ \tinv{2}(\E\norm{\bs \xi}^p_{p,\bs
  w})^{1/p} \geq\ \tinv{2}\,\rho^{\min}_p\nu_p\,M^{1/p},
$$
with $\nu_p^p=\E|\cl Z|^p=2^{p/2}\pi^{-1/2}\Gamma(\tfrac{p+1}{2})$.

Inserting these results in \cite[Appendix A]{Jacques2010}, it is
easy to show that a matrix $\bs \Phi\sim \cl N^{M\times N}(0,1)$ is RIP$_{p,\bs w}(K,\delta,\mu)$ with a probability higher than $1-\eta$
if
$$
M^{2/\!\max(2,p)}\ \geq\ c\,\big(\tfrac{\rho^{\max}_\infty}{\delta\,\rho_p^{\min}}\big)^2
\big(K\log [e\tfrac{N}{K}(1+12\delta^{-1})]\ +\ \log\tfrac{2}{\eta}\big), 
$$
for some constant $c>0$.

\section{Proof of Proposition~\ref{prop:towards-quant-cons}: Dequantizing Reconstruction Error}
\label{proof:towards-quant-cons}

\begin{proof}
We have to bound $\epsilon_p/\E\|\bs \xi\|_{p,\bs w}$, with $\bs \xi\sim\cl
  N^{M}(0,1)$, when $M$ is large and under the HRA.  First, according
  to Lemma~\ref{lem:strict-bounds-mu_p}, using the SLLN and using the same decomposition than in the proof of
  Lemma~\ref{lemma:bound-lpw-gaussian-vector} with the threshold
  $T(B)$ (with $\beta = (p+1)/(p+2)$) and the bounds provided by
  Lemma~\ref{lemma:useful-ineq-ext}, we find almost surely
\begin{align*}
  \mu^p\ \eqdef\ (\E\|\bs \xi\|_{p,\bs w})^p\ &\mathop{\simeq}_{M}\
  \sum_{i=1}^M\,[\cl G'(\cl Q_p[z_i])]^{p-2} \E|\cl Z|^p\\
&\mathop{\simeq}_{M}\ M\,\E|\cl Z|^p\,\sum_{k:\,t_k\geq 0}
  p_k\,[\cl G'(\omega_{k,p})]^{p-2}.
\end{align*}
The sum in the last expression is characterized by
Lemma~\ref{lemma:bounding-a-tail-after-T} by setting inside
\eqref{eq:sum-pmom-assympt} $n=0$ and $\gamma = p-2$. This provides
\begin{align*}
\mu^p\ &\mathop{\simeq}_{M,B}\ M\, \E|\cl Z|^p\,\int_{\Rbb}
  [\cl G'(t)]^{p-2}\pdf_0(t)\,\ud t\\
&\mathop{\simeq}_{M,B}\ M\, \E|\cl Z|^p\,\big[\int_{\Rbb} \pdf_0^{1/3}(t)\big]^{2-p}\,\big[\int_{\Rbb}
\pdf_0^{(p+1)/3}(t)\,\ud t\big]. 
\end{align*}

Therefore, using the value $\epsilon_p$ defined in Lemma~\ref{lemma:bound-lpw-gaussian-vector},
$$
\tfrac{\epsilon^p}{\mu^p} \mathop{\simeq}_{B,M}
\tfrac{2^{-p(B+1)}}{(p+1)\,\E|\cl Z|^p}\,\fnorm{\pdf_0}^{(p+1)/3}_{1/3}\,\fnorm{\pdf_0}^{-(p+1)/3}_{(p+1)/3}
$$
However, for $\alpha > 0$,
$$
\fnorm{\pdf_0}^{\alpha}_{\alpha} \eqdef \int_{\Rbb} \pdf^\alpha_0(t)\,\ud t =
(2\pi\sigma_0^2)^{-\alpha/2}\,(2\pi\sigma_0^2/\alpha)^{1/2}\ \int_{\Rbb}
\gamma_{0,\sigma_0/\sqrt{\alpha}}(t)\,\ud t\ =\ (2\pi\sigma_0^2)^{(1-\alpha)/2}/\sqrt{\alpha}.
$$
Consequently, $\fnorm{\pdf_0}^{(p+1)/3}_{1/3} =
3^{(p+1)/2}\,(2\pi\sigma_0^2)^{(p+1)/3}$ and
$\fnorm{\pdf_0}^{(p+1)/3}_{(p+1)/3}=
(2\pi\sigma_0^2)^{(2-p)/6}/\sqrt{(p+1)/3}$, so that
$$
\tfrac{\epsilon^p}{\mu^p} \mathop{\simeq}_{B,M}
\tfrac{2^{-p(B+1)}}{\sqrt{p+1}\,\E|\cl Z|^p}\,(6\pi\sigma_0^2)^{p/2}
$$
Knowing that $(\E|\cl Z|^p)^{1/p}\geq c\,\sqrt{p+1}$ with
$c=8\sqrt{2}/(9\sqrt{e})$ \cite{Jacques2010}, we get
$$
\tfrac{\epsilon}{\mu}\ \lesssim_{B,M}\ c'\,2^{-B}\,
\,\tfrac{(p+1)^{-\inv{2p}}}{\sqrt{p+1}}\leq c'\,\frac{2^{-B}}{\sqrt{p+1}}.
$$
with $c'=(9/8)(e\pi/3)^{1/2}$.
\end{proof}

\section{Computation of the $\omega_{k,p}$}
\label{sec:comp-omeg-p}

This section describes a numerical procedure for efficiently computing
the $p$-optimal levels $\omega_{k,p}$ of a Gaussian source $\cl
N(0,1)$ for integer $p\geq 2$, defined by $\omega_{k,p} \eqdef
\argmin_{\lambda\in\cl R_k} \cl E_{k,p}(\lambda)$, where $ \cl
E_{k,p}(\lambda)= \int_{t_k}^{t_{k+1}} |t -\lambda|^p\
\gamma_{0,1}(t)\,\ud t. $ As $\cl E_{k,p}(\lambda)$ is strictly convex
and differentiable, the desired $\omega_{k,p}$ are the unique
stationary points satisfying $\cl E'_{k,p}(\omega_{k,p})=0$.

We compute the $\omega_{k,p}$ by Newton method, using
standard numerical quadrature for $\cl E'_{k,p}$ and $\cl
E''_{k,p}$. We handle the semi-infinite bins by replacing
$t_1=-\infty$ and $t_{\cl B}=\infty$ by -39 and +39, respectively
(chosen as the smallest integer $x$ so that $\gamma_{0,1}(x)=0$ when
evaluated in double precision floating point arithmetic). Given
quadrature weights $c_i$, we approximate $\cl E_{k,p}$ by $\tilde{\cl
  E}_{k,p}(\lambda) = \sum_{i=1}^N c_i \gamma_{0,1}(x_i)
|x_i-\lambda|^p$ with $x_i=t_k+(i-1)\Delta x$, where $\Delta x =
(t_{k+1}-t_k)/(N-1)$. We then have 
$\tilde{\cl E'}_{k,p}(\lambda) = \sum_{i=1}^N c_i \gamma_{0,1}(x_i)
p|x_i-\lambda|^{p-1}\sign(x_i-\lambda)$
and
$\tilde{\cl E''}_{k,p}(\lambda) = \sum_{i=1}^N c_i \gamma_{0,1}(x_i)
p(p-1)|x_i-\lambda|^{p-2}$.
We initialize with the midpoint for each of the finite bins, \ie set
$\lambda_k^{(0)}=(t_k+t_{k+1})/2$ for $2\leq k \leq \cl B-1$, and
$\lambda_1^{(0)}=t_2$, $\lambda_{\cl B}^{(0)}=t_{\cl B-1}$ for the
semi-infinite bins. For each $k$ we then iterate the Newton step $
\lambda_k^{(n)}=\lambda_k^{(n-1)} - \tilde{\cl
  E'}_{k,p}(\lambda_k^{(n-1)})/\tilde{\cl
  E}''_{k,p}(\lambda_k^{(n-1)}) $ until the convergence criterion
$|(\lambda_k^{n}-\lambda_k^{n-1})/\lambda_k^{n} |<10^{-15}$
is met. We used $c_i$ given by the fourth-order accurate Simpson's
rule, \eg $\bs c=(1,4,2,4 \hdots 2,4,1) \Delta x /3$, which yielded
empirically observed $O(N^{-4})$ convergence of the calculated
$w_{k,p}$. Results in this paper employed $N=10^4+1$ quadrature
points, sufficient to yield $w_{k,p}$ accurate to machine precision.

\footnotesize

\end{document}